\def\Xset{\mathsf{X}}
\def\Zset{\mathsf{Z}}
\def\Xalg{\mathcal{X}}
\def\Yalg{\mathcal{Y}}
\def\proba{\mathbb{P}}
\def\Yset{\mathsf{Y}}
\def\tX{\tilde{X}}
\def\tx{\tilde{x}}
\def\rset{\mathbb{R}}
\def\norm{\mathcal{N}}
\def\T{^{T}}
\def\rmd{\mathrm{d}}
\def\ie{\textit{i.e.\;}}
\def\det{\mathrm{det}}
\def\Id{\mathrm{Id}}
\def\Ltwo{\mathcal{L}^2}
\newcommand{\pscal} [2]{\left\langle #1 , #2\right\rangle}
\def\eps{\epsilon}
\def\Past{P_\omega^{\ast}}
\def\esp{\mathbb{E}}
\def\Sp{\sigma}
\def\1{\mathds{1}}
\def\nset{\mathbb{N}}
\def\inf{\text{inf}}
\def\det{\text{det}}
\def\tr{\text{tr}}
\def\gap{\mathrm{Gap}}
\def\var{\mathrm{ var}}
\def\bP{\bar{P}}
\def\Mast{M^\ast}
\def\Past{P^\ast}
\def\iid{\mathrm{iid}}
\def\unif{\mathrm{unif}}
\def\dete{\mathrm{det}}
\def\Vset{\mathcal{V}}
\def\Eset{\mathcal{E}}
\def\bPast{{\bar{P}^\ast}}
\def\Qast{{Q^\ast}}
\def\Pfrak{\mathcal{P}}
\def\eg{\textit{e.g.\;}}
\def\Sp{\text{Sp}}
\def\Lb{L_b^{(0,1)}(\Zset)}
\def\0{\mathbf{0}}
\def\1{\mathbf{1}} 
\DeclareMathAlphabet{\mathpzc}{OT1}{pzc}{m}{it}
\theoremstyle{plain}
\newtheorem{lemma}{Lemma}
\newtheorem{prop}{Proposition}
\theoremstyle{remark}
\newtheorem{exa}{Example}
\newtheorem{rem}{Remark}
\begin{document}

\begin{frontmatter}
\title{Markov Kernels Local Aggregation for Noise Vanishing Distribution Sampling}
\runtitle{Locally-weighted MCMC}

\begin{aug}
\author[A]{\fnms{Florian} \snm{Maire}\ead[label=e1]{florian.maire@umontreal.ca}},
\author[B]{\fnms{Pierre} \snm{Vandekerkhove}\ead[label=e2]{pierre.vandekerkhove@univ-eiffel.fr}}
\address[A]{D\'epartement de math\'ematiques et de statistique, Universit\'e de Montr\'eal, Montr\'eal, Canada.
\printead{e1}}

\address[B]{Laboratoire d'analyse et de math\'{e}matiques appliqu\'{e}es,  Universit\'{e} Gustave Eiffel, Champs-sur-Marne, France.
\printead{e2}}
\end{aug}

\begin{abstract}\; A novel strategy that combines a given collection of $\pi$-reversible Markov kernels is proposed. At each Markov transition, one of the available kernels is selected via a state-dependent probability distribution. In contrast to random-scan type approaches that assume a constant (i.e. state-independent) selection probability distribution, the state-dependent distribution is  specified so as to privilege moving according to a kernel which is relevant for the local topology of the target distribution. This approach leverages \textit{paths} or other low dimensional manifolds that are typically present in noise vanishing distributions. Some examples for which we show (theoretically or empirically) that a locally-weighted aggregation converges substantially faster and yields smaller asymptotic variances than an equivalent random-scan algorithm are provided.
\end{abstract}

\begin{keyword}
\kwd{Sparse Bayesian learning}
\kwd{Filamentary distribution}
\kwd{Non-asymptotic convergence improvement}
\kwd{Local exploration}
\end{keyword}

\end{frontmatter}

\section{Introduction}

Even though Markov chain Monte Carlo (MCMC) methods \citep{brooks2011handbook} are among the most widely used numerical integration algorithms, the long standing practical concern regarding their running time remains fairly open. From a statistical viewpoint, the question is to assess how long should a given MCMC sampler run so that a related Monte Carlo estimator (of an underlying quantity of interest) achieves a prescribed precision level. Early on in the MCMC literature, empirical convergence diagnostics have been proposed (see \cite{cowles1996markov}, \cite{geyer1992practical} and \cite{gilks1996introducing}) and has greatly helped to popularize the use of MCMC in various areas of Science. Addressing such a question theoretically is however highly complex, if possible. Explicit bounds on related asymptotic quantities such as the rate of convergence of the Markov chain (for example in total variation) or its asymptotic variance (for squared integrable functions) may be obtained and, when available, may be used to derive bounds on the mean square error which is closer to the statistical goal, see \cite{latuszynski2013nonasymptotic}. Partially explicit bounds may still be useful to assess scalability or complexity of an MCMC method and can  possibly be used to compare different samplers on that ground. However, those bounds are not always reliable: they can be very conservative especially when the state-space dimension becomes high or in the presence of small isolated local modes. Two of the main methods to obtain total variation bounds, namely the drift and minorization technique  (see \cite{meyn1994computable,rosenthal1995minorization}) and  the approach of directly lower-bounding the Markov chain  $L^2$-spectral gap (see \cite{lawler1988bounds,madras2002markov,jarner2004conductance}) are no exception, see \cite{qin2020limitations}. As a result, they offer little information regarding the estimator itself.

A recent stream of literature has focused on developing new tools for obtaining more precise bounds of convergence, see \cite{qin2019convergence}, \cite{atchade2019approximate}, \cite{yang2017complexity}. A common theme in the works of \cite{yang2017complexity} (on the drift and minorization side) and \cite{atchade2019approximate} (for the spectral gap approach), is to verify the traditional conditions but only on a restricted portion of the state-space called \textit{a large set} in \cite{yang2017complexity}, where, intuitively, the Markov chain mixes well or at least where sharper bounds are easier to derive. In \cite{atchade2019approximate}, it is also desirable that the target distribution concentrates on such a large set. The bounds obtained in \cite{atchade2019approximate} and \cite{yang2017complexity} also have in common to feature an exponentially decreasing term, resembling that of the usual bounds but with different and, typically, much better constants, and an offset term, hopefully small and controlled, resulting from the chain behaviour beyond the large set.

A specific area of applications where this new array of results may be useful is that of sampling from noise vanishing distributions. In this context, the distribution of interest $\pi$ is defined on a high-dimensional set $\Xset$ and its mass concentrates onto some low-dimensional subset $\Zset$. An archetypal example of $\pi$ is given by the law of a random variable $X_\sigma= Z+\sigma\zeta$, $Z$ being a random variable that takes its values on a subset $\Zset\subset\Xset$, $\zeta$ an additive random perturbation and $\sigma>0$ the noise scaling parameter. The subset $\Zset$ is connected and of lower dimension (at least locally) compared to the ambient space $\Xset$ comprising for instance hyperplanes,  manifolds, etc.  This feature characterizes the sparse structure of $\pi$, since the sampling problem is defined on a $d$-dimensional space while, locally and in the limit $\sigma\downarrow 0$, the effective dimension of the sampling space is $d'$, with $d'<d$ (potentially $d'\ll d$). Such a scenario arises in Bayesian statistics, whenever the underlying likelihood model is non-identifiable, or more genuinely in a number of statistical applications including Bayesian inverse problems \citep{knapik2011bayesian}, models involving variables with strong nonlinear relationships \citep{givens1996local} and deterministic simulation models used in Ecology \citep{duan1992effective,bates2003bayesian}, Demography \citep{raftery2010estimating,raftery1995inference}, see also \cite{poole2000inference} and Cosmology \citep{van2009geometry,tempel2014filamentary}.

In this paper, we consider the situation where a family of Markov transition kernels $P_1,P_2,\ldots,P_n$ is given and wonder how to aggregate them in an efficient manner, having in mind (i) the aforementioned recent techniques for deriving quantitative bounds of convergence and (ii) applications to noise vanishing distribution sampling. An efficient aggregation should here be understood as a mixture kernel ${P}_\omega=\sum_{k=1}^n\omega_kP_k$ ($\omega$ is a probability distribution on $\{1,\ldots,n\}$) which leads to a particularly efficient sampling on $\Zset$, perhaps at the expense of poor performance on the complement $\Xset\backslash\Zset$. Recall that a transition from ${P}_\omega$ can be obtained as follows: while at state $X_t$, draw a r.v. $I\sim\omega$ and, conditioned on $I$, draw $X_{t+1}\sim P_I(X_t,\cdot)$.  Central to our approach is the use of state-dependent weights for selecting one of the $n$ available kernels at random, i.e. given $X_t$, $\omega\equiv\omega(X_t)$. This contrasts with the {random-scan} type strategies in which each kernel is assigned a selection probability that does \textit{not} depend on the Markov chain state. In the context of noise vanishing distributions, one can see why state-dependent weights are appealing with the following example.

\begin{exa}\label{ex0}
Let $\Xset=[-2,2]^2$, $\Zset=\{(x_1,x_2)\in[-1,1]^2\,:\,x_1=0\;\mathrm{or}\;x_2=0\}$ and consider sampling from $\pi_\sigma$, the law of the r.v. $X_\sigma=Z+\sigma\zeta$ where $\sigma\in(0,1)$, $\zeta$ is a uniform r.v. on $[-1,1]$ and $Z$ is a r.v. taking values on $\Zset$ and whose distribution, known up to a normalizing constant, cannot be sampled from. To sample from $\pi_\sigma$, assume that $n=2$ random walk Metropolis-Hastings Markov kernels are available: $P_1$ allows ``large'' jumps in the $x_1$ direction and ``small'' jumps in the $x_2$ direction and conversely for $P_2$. By symmetry, the optimal random-scan strategy is to select one of the two kernels uniformly at random. But when $\sigma$ vanishes, this strategy is clearly not efficient: for any state $x\in\Xset\backslash[-\sigma,\sigma]^2$, only one of the two kernels ($P_1$ or $P_2$) is likely to be adapted to the local geometry of $\pi_\sigma$. Hence, in the noise vanishing regime the random-scan Markov chain $\{X_t\}$ satisfies $\proba(X_t=X_{t+1}\,|\,X_t\in\Xset\backslash[-\sigma,\sigma]^2)\geq 1/2$, which is clearly not desirable. This fact is likely to not hold for a locally-weighted weighting mechanism that would give a (much) larger selection probability to $P_1$ whenever the Markov chain is at a state $x$ such that $|x_2|>\sigma$ and to a (much) larger one to $P_2$ whenever $|x_1|>\sigma$. This example easily generalizes to $d$ dimensions with $\Xset=[-2,2]^d$ and a collection of $n=d$ Markov kernels constructed similarly as for $d=2$. In the noise vanishing regime, the random-scan satisfies $X_t=X_{t+1}$ with probability greater than $1-1/d$. In such a problem and when $\sigma\to 0$, one can hope that the convergence of a Markov chain featuring an adequate locally-weighted strategy to select from $P_1,\ldots,P_d$ would scale better with $d$ than a random-scan strategy, possibly offering dimension free convergence rate.
\end{exa}

We first study the feasibility of such a locally-weighted strategy in a general context. As we shall see, several ways to aggregate Markov kernels with state-dependant weights are possible and several weight functions can be considered. Since weight functions are designed so that a locally-weighted strategy mixes fast on $\Zset$, the potential slow exploration of $\Xset\backslash\Zset$ slows down the resulting Markov chain convergence, all the more when $\sigma$ is not ``sufficiently'' small. As a consequence, establishing theoretical results showing the superiority of a locally-weighted strategy over a random-scan strategy is challenging, if true, because of that asymptotic artifact. However, the aforementioned recent techniques which break down the convergence bounds into an exponentially decaying term and an additive offset seem particularly well suited for showing the non-asymptotic relevance of such locally-weighted aggregation strategies. In fact, we believe that those bounds could even be optimised by mean of locally informed weights, but we leave these aspects of the question for future research. This is worth mentioning since, as MCMC algorithms are only used for a finite runtime, non-asymptotic bounds (e.g. bounds on the MCMC estimator mean square error) are probably the most useful criterions for choosing a sampler, hence motivating algorithms such as the ones presented in this paper. We indeed show through a series of examples (empirical or theoretical), the benefit of locally-weighted strategies over random-scan equivalent and present how they can be implemented in practice.

\paragraph*{Notation}
 For any $c\in\rset$ and $q\in\nset$, $\mathbf{c}_q$ is the constant vector $(c,c,\ldots,c)\in\rset^q$. Consider a measurable space $(\Xset,\Xalg)$. For all $x\in\Xset$, $\delta_x$ is the dirac distribution on the singleton $\{x\}$ and for all $A\subset \Xset$, $\mathds{1}_A$ is the indicator function of set $A$. As is standard in the Markov chain Monte Carlo literature,  a Markov chain defined on $(\Xset,\Xalg)$ is generically denoted $\{X_t\}:=\{X_t\,:\,t\in\nset\}$. For any Markov kernel $P$ and any measurable function $f:\Xset\to\rset$, $Pf:=\esp\{ f(X_1)|X_0=\cdot\}$ and for any probability measure $\mu$ on $(\Xset,\Xalg)$, $\mu P^t$ is the law of $X_t$, where $X_0\sim \mu$. Moreover, $\Pr_\mu$, $\esp_\mu$, $\var_\mu$ refer to the corresponding operator related to an unambiguous Markov chain with initial distribution $\mu$. With some abuse of notation, replacing $\mu$ by $x$ for any $x\in\Xset$ in this notation refers to the case where $\mu=\delta_x$. For a signed measure $\nu$ on $(\Xset,\Xalg)$, $\|\nu\|$ denotes the total variation of $\nu$ and the space of square $\nu$-integrable real-valued functions is denoted by $\mathcal{L}^2(\nu):=\{f:\Xset\to\rset\,:\,\int f^2\rmd \nu<\infty\}$.

\section{Related work}
\label{sec:2}

Our work revolves around three topics: sampling from noise vanishing distributions using a local target dependent information to aggregate Markov kernels. Here is a very brief review of the existing literature on those subjects which motivates our work.

Research on efficient MCMC sampling for locally low-dimensional distributions can be traced back at least to the seminal work of \cite{girolami2011riemann}. First and second order Monte Carlo methods based on Langevin of Hamiltonian dynamics have proven very efficient to address those sampling problems when they incorporate geometric information on the model at hand such as the first and second derivatives of the target, the Fisher information, a Riemannian metric or specific knowledge on the manifold (its tangent space, a forward operator it stems from, etc.), see \cite{byrne2013geodesic,livingstone2014information,zappa2018monte,au2020manifold} and the references therein. These methods are undoubtedly state of the art whenever one has access to the specifics they require. We make the assumption that not much is known on the target distribution beyond its unnormalized density and that a collection of $n$ Markov kernels, which taken individually might lead to a poor algorithm, are available.

While not being restricted to them, those kernels can be thought of as Gaussian random walk Metropolis-Hastings kernels (MH kernels in the sequel) with different covariance matrices. Our work is thus perhaps more closely related to the  idea of a MH algorithm with position-dependent proposal covariance matrix studied in \cite{livingstone2015geometric} and to the locally-balanced proposal construction of \cite{zanella2020informed}. The connection between the aim of the latter, while being practically restricted to countable state space applications and our work is particularly obvious and will be further discussed.   Even though the locally-weighted strategy presented in this paper specifies a \textit{proper} Markov chain, readers familiar with the literature on adaptive MCMC will also see a resemblance with the Regional Adaptive MH of \cite{craiu2009learn} and to a certain degree to the adaptive Gibbs sampler of \cite{latuszynski2013adaptive}, see also \cite{levine2006optimizing}.

Regarding the aggregation of Markov kernels, it was shown in  \cite{roberts1997geometric} and \cite{roberts1998two} that a mixture of kernels (with state-independent weights)  inherits under mild assumptions the properties of the individual kernels. As we shall see, things are quite different when using state-dependent mixture weights since, if done inadequately, the aggregated kernel may be transient. The question of \textit{optimal} selection probability is known to be a challenging one  and \cite{andrieu2016random} is probably the only literature entry on this topic. The author carries out a thorough exploration of the hybrid Gibbs case, with $n=2$ kernels, and compares the random-scan and the deterministic-update Gibbs sampler, according to their asymptotic variance. While the question, which our work does not pretend addressing,  is even more involved for state-dependent weights, this paper suggests that, in the noise vanishing regime, locally-weighted strategies with reasonable weight function scale better than state-independent ones. Finally, we take note of the very recent work of \cite{herschlag2020non} in which a non-reversible MH algorithm based on a locally-weighted mixture of proposal distributions is derived. This is a particular construction that we also consider here but the analyses and applications carried out in the two papers are pushed in different directions.

\section{Locally-weighted aggregation}
\label{sec:3}
Let $\pi$ be a probability distribution on some measurable space $(\Xset,\Xalg)$ that we are interested in sampling from.
We consider a collection of $n$ $\pi$-reversible Markov kernels defined on $(\Xset,\Xalg)$
$$
\Pfrak:=\{P_1,P_2,\ldots,P_n\}
$$
and call any Markov chain that moves at each transition according to one of these kernels selected at random an aggregation of $\Pfrak$. More precisely, let $\omega$ be a probability on $\{1,2,\ldots,n\}$, that is $\omega=(\omega_1,\ldots,\omega_n)\in\Delta_{n-1}$ where $\Delta_{n-1}$ is the $(n-1)$-simplex and denote by $P_\omega:=\sum_{i=1}^n\omega_iP_i$ the hybrid kernel aggregating $\Pfrak$.  To sample from $X_{t+1}|X_t\sim P_\omega(X_t\,,\,\cdot\,)$, one may simply draw an index $I\sim \omega$ independently of $X_t$ and then, conditionally on $(I,X_t)$, draw $X_{t+1}\sim P_I(X_t,\,\cdot\,)$. The random-scan Gibbs sampler follows this type of transition mechanism with $n=d$, $P_i(x,\cdot)=\pi(\cdot\,|\,x_{-i})\delta_{x_{-i}}$ and $\omega$ typically set as the uniform distribution over $\{1,\ldots,n\}$. As is well known, $P_\omega$ is also $\pi$-reversible. Indeed, for any $(A,B)\in\Xalg\otimes\Xalg$, we have
\begin{multline}
\label{eq:omega_const}
\int_A\pi(\rmd x) P_\omega(x,B)=\int_{A} \pi(\rmd x)\sum_{i=1}^n\omega_i P_i(x,B)
=\sum_{i=1}^n\omega_i\int_{A} \pi(\rmd x) P_i(x,B)\\
=\sum_{i=1}^n\omega_i\int_{B} \pi(\rmd x) P_i(x,A)=\int_B\pi(\rmd x) P_\omega(x,A)\,.
\end{multline}
We refer to \cite{roberts1997geometric} for more details on conditions under which $P_\omega$ inherits other theoretical properties shared by the Markov kernels in $\Pfrak$.

In this section, we generalize the construction of hybrid kernels to allow dependance of the selection probability $\omega$ on the current state of the Markov chain. To avoid confusions, we will denote generically a state-dependent probability by $\varpi:\Xset\to\Delta_{n-1}$, that is $\varpi\in\Delta_{n-1}^{\Xset}$ and, by contrast, a state-independent probability by $\omega\in\Delta_{n-1}$.

It can be checked that $P_\varpi$ (whose transition writes $X_{t+1}|X_t\sim P_{\varpi(X_t)}(X_t\,,\,\cdot)$) is not necessarily $\pi$-reversible as the second equality in \eqref{eq:omega_const} does not hold, in general. To highlight this aspect, let us illustrate with the following example in which  switching from $P_\omega$ to $P_\varpi$ changes the stationary distribution.
\begin{exa}
Let $\Xset=\{1,2\}$, $a\in(0,1)$ and consider $n=2$ kernels defined as
$$
P_1=\begin{pmatrix}
      a & 1-a\\
      1-a & a
    \end{pmatrix}\qquad
P_2=\begin{pmatrix}
      1-a & a\\
      a & 1-a
    \end{pmatrix}    \,.
$$
It can be readily checked that $P_1$ and $P_2$ are reversible with respect to the uniform distribution on $\Xset$, denoted $\mathcal{U}(\Xset)$. As a consequence, setting $\omega=\{b,1-b\}$ for any $b\in(0,1)$, we have that  $P_{\omega}$ is also reversible with respect to $\mathcal{U}(\Xset)$. For any $\eps\in[0,b]$, define the function ${\varpi}_\eps:\Xset\to\Delta_{n-1}$ for any $x\in\{1,2\}$ by
$$
{\varpi}_\eps(x):=\{b-\eps+2\eps\mathds{1}_{\{x=2\}},1-b+\eps-2\eps\mathds{1}_{\{x=2\}}\}\,.
$$
It can be checked that $P_{\varpi_\eps}$ is reversible with respect to the distribution $\pi_\eps$ defined such that
$$
\pi_{\eps}(X=1)=\frac{1}{1+\frac{a+(b-\eps)(1-2a)}{a+(b+\eps)(1-2a)}}\,,
$$
which coincides with the $\mathcal{U}(\Xset)$ if and only if $\eps=0$, i.e. in case of state-independent selection probability ($\varpi_0=\omega$). Depending on $(a,b)$, $P_{\omega}$ and $P_{\varpi_\eps}$ may have other further differences: taking $a=1/4$ and $b=0$, one can check that $P_{\varpi_\eps}$ produces i.i.d. samples from $\pi_\eps$ while $P_{\omega}$ produces dependent samples from $[1/2\,,1/2]$ (if started from that distribution).
\end{exa}

It is easy to find examples where $P_\varpi$ does not even admit a stationary distribution despite $P_1,\ldots,P_n$ all being $\pi$-reversible. There is in fact a simple way to ``correct'' the algorithm so as to retrieve the $\pi$-invariance from $P_1, P_2,\ldots,P_n$ by introducing an accept-reject step. We refer to this type of algorithm as a \textit{locally-weighted} aggregation of kernels in $\Pfrak$. To further emphasize this correct construction, we denote by $\Past_\varpi$ the resulting locally-weighted Markov kernel, by contrast with the uncorrected kernel $P_\varpi$.
\begin{algorithm}
\caption{Locally-weighted MCMC, transition $X_{t}\to X_{t+1}$}
\label{alg1}
\begin{algorithmic}[1]
\Require $X_t=x\in\Xset$
\State draw $I\sim\varpi(x) \rightsquigarrow i$
\State propose $\tX\sim P_i(x,\cdot)\rightsquigarrow \tx$ and set $X_{t+1}=\tx$ with probability
\begin{equation}
\label{eq1}
\alpha_i(x,\tx)=1\wedge\frac{\varpi_i(\tx)}{\varpi_i(x)}
\end{equation}
and set $X_{t+1}=x$ otherwise.
\end{algorithmic}
\end{algorithm}
The locally-weighted Markov chain is described at Algorithm \ref{alg1} and its transition kernel writes:
\begin{multline}
\label{eq2}
\Past_\varpi(x,A)=\sum_{i=1}^n\varpi_i(x)\left\{\int_A P_i(x,\rmd y)\alpha_i(x,y)+\delta_x(A)\left(1-r_i(x)\right)\right\}\,,\\
r_i(x):=\int_{\Xset} P_i(x,\rmd y)\alpha_i(x,y)\,.
\end{multline}

\begin{prop}
\label{prop2}
Assume that for all $i\in\{1,\ldots,n\}$, $P_i$ is $\pi$-reversible, then for any choice of function $\varpi:\Xset\to\Delta_{n-1}$, $\Past_\varpi$ is $\pi$-reversible.
\end{prop}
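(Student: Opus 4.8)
The plan is to verify detailed balance, which for a $\pi$-reversible kernel amounts to showing that the bivariate measure $\pi(\rmd x)\Past_\omega(x,\rmd y)$ on $(\Xset\times\Xset,\Xalg\otimes\Xalg)$ is invariant under the coordinate swap $(x,y)\mapsto(y,x)$. First I would split $\Past_\omega$, as written in \eqref{eq2}, into its off-diagonal part $Q(x,\rmd y):=\sum_{i=1}^n\omega_i(x)\alpha_i(x,y)P_i(x,\rmd y)$ and its diagonal part $\rho(x)\delta_x(\rmd y)$, with $\rho(x):=1-\sum_{i=1}^n\omega_i(x)r_i(x)$. The diagonal contribution $\pi(\rmd x)\rho(x)\delta_x(\rmd y)$ is supported on $\{x=y\}$ and is therefore automatically symmetric under the swap, so the whole problem reduces to proving the symmetry of $\pi(\rmd x)Q(x,\rmd y)$.

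For the off-diagonal part I would treat each summand separately. Since $P_i$ is assumed $\pi$-reversible, we have the identity of measures $\pi(\rmd x)P_i(x,\rmd y)=\pi(\rmd y)P_i(y,\rmd x)$. Hence the symmetry of $\pi(\rmd x)\omega_i(x)\alpha_i(x,y)P_i(x,\rmd y)$ follows as soon as the scalar factor $\omega_i(x)\alpha_i(x,y)$ is itself symmetric in $(x,y)$ for $\pi(\rmd x)P_i(x,\rmd y)$-almost every pair.

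The crucial algebraic step is the elementary Metropolis--Hastings identity: for nonnegative reals $a,b$ one has $a\,(1\wedge b/a)=a\wedge b$, with the convention that the left-hand side is $0$ whenever $a=0$. Applying it with $a=\omega_i(x)$ and $b=\omega_i(y)$ gives
$$
\omega_i(x)\alpha_i(x,y)=\omega_i(x)\wedge\omega_i(y)=\omega_i(y)\alpha_i(y,x)\,,
$$
which is manifestly symmetric. I would stress that rewriting $\omega_i(x)\alpha_i(x,y)$ as the minimum $\omega_i(x)\wedge\omega_i(y)$ is precisely what sidesteps the only genuine delicacy of the argument, namely the possible vanishing of $\omega_i(x)$ in the denominator of \eqref{eq1}: on the set $\{\omega_i(x)=0\}$ the $i$-th term of $Q(x,\rmd y)$ vanishes identically, consistently with $0\wedge\omega_i(y)=0$.

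Combining these observations, $\pi(\rmd x)\omega_i(x)\alpha_i(x,y)P_i(x,\rmd y)=\pi(\rmd y)\omega_i(y)\alpha_i(y,x)P_i(y,\rmd x)$ for every $i$; summing over $i$ and adding back the symmetric diagonal part yields the invariance of $\pi(\rmd x)\Past_\omega(x,\rmd y)$ under the swap, that is, detailed balance, and hence $\pi$-reversibility. I expect the only real obstacle to be the careful measure-theoretic bookkeeping --- phrasing reversibility as an equality of bivariate measures rather than testing against functions, and handling the null-set convention for $\alpha_i$ --- since the core computation is short once the $\min$ representation is in place.
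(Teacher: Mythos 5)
Your proposal is correct and follows essentially the same route as the paper's proof: both verify detailed balance for the bivariate measure by splitting $\Past_\omega$ into its accepted-move part and its diagonal part, rewrite $\omega_i(x)\alpha_i(x,y)$ as the symmetric quantity $\omega_i(x)\wedge\omega_i(y)$, invoke the $\pi$-reversibility of each $P_i$, and use the swap-symmetry of the measure $\pi(\rmd x)\delta_x(\rmd y)$. Your explicit handling of the $\omega_i(x)=0$ case is a minor refinement of bookkeeping, not a different argument.
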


\begin{proof}
Let $\rho$ be the measure on $\Xalg\otimes\Xalg$ defined as $\rho(A,B):=\int_A\pi(\rmd x)\Past_\varpi(x,B)$ and $H:\Xset^2\to\rset$ a measurable test function. $\Past_\varpi$ is $\pi$-reversible if and only if for any $(X,Y)\sim \rho$, $\esp H(X,Y)=\esp H(Y,X)$. We thus have
\begin{multline*}
\esp H(X,Y)
=\sum_{i=1}^n\iint_{\Xset}H(x,y)\pi(\rmd x)[P_i(x,\rmd y)\left\{\varpi_i(x)\wedge \varpi_i(y)\right\}\\
+\delta_x(\rmd y)\varpi_i(x)\left(1-r_i(x)\right)]\\
=\iint_{\Xset}H(x,y)\pi(\rmd y)\bigg\{\sum_{i=1}^n\varpi_i(y)P_i(y,\rmd x)\left\{1\wedge\frac{\varpi_i(x)}{\varpi_i(y)}\right\}\\
+\delta_y(\rmd x)\sum_{i=1}^n\varpi_i(y)\left(1-r_i(y)\right)\bigg\}\\
=\iint_{\Xset}H(x,y)\pi(\rmd y)\Past_\varpi(y,\rmd x)=\iint_{\Xset}H(y,x)\pi(\rmd x)\Past_\varpi(x,\rmd y)
=\esp H(Y,X)\,,
\end{multline*}
where the second equality follows from the $\pi$-reversibility of $P_i$ and the symmetry of the measure $\pi(\rmd x)\delta_x(\rmd y)$ on $\Xalg\otimes\Xalg$.
\end{proof}

\noindent Since $\pi$-reversible Markov kernels are necessarily $\pi$-invariant, $\Past_\varpi$ is also $\pi$-invariant.

\begin{rem}
For state-independent selection probability $\omega$, $P_\omega=\Past_\omega$ since $\alpha_i=1$ for all $i\in\{1,\ldots,n\}$. We thus see $\Past_\varpi$ as a generalization of the usual hybrid kernel, beyond the case of state-independent selection probability.
\end{rem}

The usefulness of a state-dependent selection probability to construct hybrid kernels is easy to grasp in certain situations (see Example \ref{ex0}), but  one may wonder what can be formally said on the efficiency (asymptotic variance, convergence time, etc.) of $\Past_\varpi$  relatively to $P_\omega$. A more precise question would be which is more efficient between $\Past_\varpi$ for the best $\varpi\in\Delta_{n-1}^\Xset$ (in a certain context) and $P_\omega$ for the best $\omega\in\Delta_{n-1}$ (in the same context): indeed the accept-reject step of $\Past_\varpi$ introduces a compromise between the two strategies and makes the question seriously challenging. In particular, it can be noted that $P_\omega$ and $\Past_\varpi$ do not admit a Peskun ordering \citep{peskun1973optimum}, in all generality.

The following section introduces a noise vanishing distribution (on a discrete state space) which concentrates on certain edges of a $d$-dimensional hypercube. In this context, we are able to answer to the previous question when efficiency is measured with respect to both mixing time and spectral quantities, in the somewhat degenerate case of zero-noise scenario ($\sigma=0$). More precisely, we show that $\Past_\varpi$ with the best state-dependent probability is $d/2$ better than $P_\omega$ with the best state-independent probability, for both criteria of performance. Beyond that degenerate case, we were not able to carry out calculations explicitly and resort to computer-based numerical results to compare the two kernels convergence when $\sigma>0$.

\section{Analysis of a canonical example}
\label{sec:4}

\begin{exa}
\label{ex1}
Let $\Xset=\{1,\ldots,m\}^d$ where $d\geq 2$ and $m\geq 3$ be the $d$-dimensional discrete hypercube with edge length $m$.  Define the subset $\Zset\subset \Xset$ that comprises the connected edges $\Eset_1,\Eset_2,\ldots,\Eset_d$ defined as follows:
\begin{multline}
\Zset:={\bigcup}_{i=1}^d\Eset_i\,,\\
\Eset_1:=\left\{x\in\Xset\;\big|\;x_{2}=\cdots=x_d=1\right\}\,,\Eset_d:=\left\{x\in\Xset\;\big|\;x_{1}=x_2=\cdots=x_{d-1}=m\right\}\,,\\
\Eset_i:=\left\{x\in\Xset\;\big|\;x_{1}=\cdots=x_{i-1}=m\quad\text{and}\quad x_{i+1}=\cdots=x_d=1\right\}\,,\quad 1<i<d\,.
\end{multline}
Finally, define for any $\sigma\in[0,1]$, the distribution $\pi_\sigma$ on $\Xset$ as the mixture
$$
\pi_\sigma=(1-\sigma)\mathcal{U}(\Zset)+\sigma\mathcal{U}(\Xset\backslash \Zset)\,,
$$
with $|\Xset|=m^d$, $|\Zset|=(d-1)(m-1)+m$. The distribution $\pi_0$ is illustrated graphically at Figure \ref{fig:hypercube:1}, with $d=3$ and $m=10$. Even though sampling i.i.d. draws from $\pi_\sigma$ is straightforward,  we consider for illustrative purpose aggregating  the collection of Gibbs kernels $P_1,\ldots,P_n$ with $n=d$ and where $P_i(x,\rmd y)=\pi_\sigma(\rmd y_i\,|\,x_{-i})\delta_{x_{-i}}(\rmd y_{-i})$, $i\in\{1,\ldots,d\}$.\footnote{Similar examples for which sampling i.i.d. draws from $\pi_\sigma$ is not straightforward exist, for instance whenever the path $\Zset$ is unknown.} By symmetry, the best state-independent selection probability $\omega$ is given by $\omega=(1/d,1/d,\ldots,1/d)$.  By contrast, there are several relevant state-dependent selection probability distributions  $\varpi(x)=(\varpi_1(x),\ldots,\varpi_d(x))$ given for example by
\begin{align}
\label{eq:omegas}
&\varpi_i(x)=\left\{
\begin{array}{cc}
(1-\sigma)\frac{\mathds{1}_{\{x\in\Eset_i\}}}{\sum_{\ell=1}^d\mathds{1}_{\{x\in\Eset_\ell\}}}+\sigma/{d}\,,&x\in\Zset\\
1/d\,,& x\not\in\Zset
\end{array}
\right.\\
&\quad\mathrm{or}\quad\nonumber\\
&\varpi_i(x)\propto\sum_{\ell=1}^m\pi_\sigma(x_1,\ldots,x_{i-1},\ell,x_{i+1},\ldots,x_d)\nonumber\,,
\end{align}
for $i\in\{1,\ldots,d\}$. Intuitively, those choices of $\varpi$ are designed so as to pick the ``right'' full conditional distribution(s), i.e. the one(s) that move(s) the chain on the same edge but, contrarily to  $P_\omega$, at a different state (with high probability).
\end{exa}

\begin{figure}
\centering
 \includegraphics[scale=0.7]{./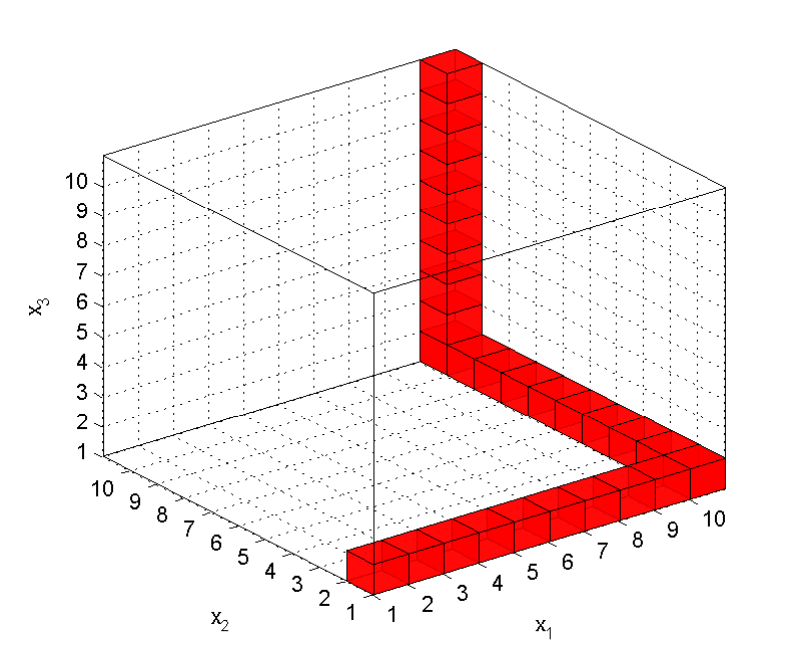}
\caption{Example \ref{ex1} with $m=10$, $d=3$ and $\sigma=0$. The subset $\Zset$ are the states in red and all the mass of $\pi$ is concentrated on $\Zset$.
\label{fig:hypercube:1}}
\end{figure}

\paragraph*{The noise-free case: $\sigma=0$.}
Let $\sigma=0$ in Example \ref{ex1}. First, with probability of at least $1-2/d$, $P_\omega$  selects a kernel $P_i$ that prevents the chain to move as at least $1-2/d$ full posterior distributions $\pi_0(\,\cdot\,|\, X_{-i})$ have their probability mass concentrated exclusively on $X_i$. Hence, when $d$ is large, the Markov chain hardly moves. Second, taking any of the choice for $\varpi$ in Eq. \eqref{eq:omegas} (in fact the two options coincide in the large $d$ limit and $\sigma=0$), we expect $\Past_\omega$ to be more robust or even to benefit from the absence of noise as it generates a chain roaming swiftly on $\Zset$. It can also be checked that the first choice of $\varpi$ in Eq. \eqref{eq:omegas} dominates the second one in the Peskun ordering sense and, in the noise-free regime, is in fact the best state-dependent selection probability. In the sequel, we  consider $\Past_\varpi$ with this choice of selection probability. The intuition that $\Past_\varpi$ is more efficient than $P_\omega$ is made rigourous in the following propositions which suggest that $\Past_\varpi$ is typically $d/2$ more efficient  than $P_\omega$,  both in a non-asymptotical and asymptotical way.

Before stating those results, recall the definition of a coupling time associated with a  Markov kernel $P$. Let $\{X_t,X'_t\}$ be a discrete time process defined on $(\Xset\times\Xset,\Xalg\otimes\Xalg)$ such that marginally $\{X_t\}$ and $\{X'_t\}$ are both a Markov chain with transition kernel $P$ with initial distribution $\mu$ and $\pi$, respectively. The coupling time of the joint process $\{X_t,X_t'\}$ is the random variable $\tau$ defined as $\tau:=\inf_{t\in\nset}\{X_t=X'_t\}$. The coupling time $\tau$ is indeed a time characteristic to the Markov chain speed of convergence since the coupling inequality (see \eg \cite{rosenthal1997faithful}) states that for all $t\in\nset$,
$$
\|\Pr\{X_t\in\,\cdot\,\}-\pi\|\leq \Pr\{\tau>t\}\,.
$$
As a consequence, results on the mixing time (such as Proposition \ref{prop:hypercube:1}) are related to the non-asymptotic convergence performance of the Markov chain. By contrast, the two other results (such as Propositions \ref{prop:hypercube:2} and \ref{prop:hypercube:3})  are related to the asymptotic performance of the Markov chain, and in particular  to its asymptotic variance and its asymptotic exponential rate of convergence. Considering a $\pi$-reversible kernel $K$, we recall that  the absolute spectral gap of $K$, denoted $\gap(K)\in[0,1]$, defined formally in the appendix, can be used to characterize the asymptotic efficiency of $K$. Indeed, it is well known, see for instance \cite{rosenthal2003asymptotic}, that for any square integrable function $f$ and any initial distribution $\mu_0$
$$
\var(K,f)\leq \frac{2}{\gap(K)}\var_\pi f(X)\,,\qquad \lim_{t\to\infty}\frac{1}{t}\log\|\mu_0K^t-\pi\|=\log(1-\gap(K))\,,
$$
where $\var(K,f):=\lim_{t\to\infty}t\var\left[(1/t)\sum_{i=1}^t f(X_i)\right]$ is the asymptotic variance of MCMC estimator of $\esp_\pi f(X)$ and the variance is here taken with respect to a Markov chain $\{X_t\}$ started at stationarity and with transition kernel $K$. In other words, the larger the absolute spectral gap, the more efficient the Markov chain.
\begin{prop}
\label{prop:hypercube:1}
In the context of Example \ref{ex1} with $\sigma=0$, the expected coupling time of the $P_\omega$ (denoted $\tau$) is $d/2$ times larger than $\Past_\varpi$ (denoted $\tau^\ast$), when both Markov chains are initialized at state $\mathbf{1}_d\in\Xset$, \ie
\begin{equation}
\label{eq:prop1}
\esp_{\mathbf{1}_d}(\tau)= \frac{d}{2}\esp_{\mathbf{1}_d}(\tau^\ast)\,.
\end{equation}
\end{prop}

\begin{prop}
\label{prop:hypercube:2}
In the context of Example \ref{ex1} with $\sigma=0$ and $d$ even, the absolute spectral gap of $P_\varpi^\ast$ and $P_\omega$ satisfy
$$
\mathrm{Gap}(P_\varpi^\ast)= \frac{d}{2} \mathrm{Gap}(P_\omega)\,.
$$
\end{prop}
\begin{prop}
\label{prop:hypercube:3}
In the context of Example \ref{ex1} with $\sigma=0$, for any function $f:\Xset\to\rset$
$$
\var(P_\varpi^\ast,f)\leq \frac{2}{d}\var(P_\omega,f)+\left(\frac{2}{d}-1\right)\var_{\pi_0} f(X)\,.
$$
\end{prop}
The proof of those propositions can be found in Supplementary Material (Sections \ref{proof1}, \ref{proof2} and \ref{proof3}, respectively). The first two proofs are based on an equivalent representation of the Markov chains which are ``folded'' onto a simpler state space. This representation makes easier the construction of couplings and the analysis of their spectral properties.
\begin{rem}
\label{rem0}
The factor $d/2$ in the Propositions \ref{prop:hypercube:1} and \ref{prop:hypercube:2} can be interpreted as follows: since $\pi$ is uniform on $\Zset$, the convergence of both Markov chains (starting from one extremity of the filament) is characterized by the speed at which they traverse the hypercube vertices that belong to $\Zset$, \eg $(10,1,1)$ and $(10,10,1)$ for the case illustrated in Figure \ref{fig:hypercube:1}. While at one of those vertices, the relative speed at which $P_\omega$ moves to one of the two adjacent edges compared to $\Past_{\varpi}$ is $2/d$ since ``only'' two choices of direction may lead to such a transition. Those vertices can be seen as bottleneck for $\Past_\varpi$ when compared to $P_\omega$. Consider the alternate definition of $\Zset$ in the case $d=3$ with $\Zset:=\Eset_1\cup\Eset_2\cup \Eset_3'$ where $\Eset_3':=\{x\in\Xset\,|\,x_1=m,x_2=1,x_3\in(1,m)\}$. Thus, the state $(m,1,1)$ connects the three subspaces of dimension one. Hence, $\Past_\varpi$ and $P_\omega$ are equally efficient to jump to any edge while at this state. This reflects in the quantitative factor of  Eq.\eqref{eq:prop1} which drops to $d/3=1$.
\end{rem}

These first results confirm the intuition that, in a situation such as Example \ref{ex1}, a state-dependent distribution $\varpi$ that incorporates geometric information of $\pi$ to draw the updating direction of a Gibbs sampler can speed up the Markov chain convergence. Again, we stress that obtaining those theoretical results is eased by the fact that the mass of $\pi$ is here concentrated on the filamentary path $\Zset$, \ie $\sigma=0$.

\paragraph*{The noise-vanishing case $\sigma>0$ and discontinuity at $\sigma=0$.}
Considering the situation where $\sigma>0$ is certainly interesting but makes the whole analysis more involved. In particular, the useful folded representation of the Markov chains in the noise-free case is no longer available. However, whenever $m^d$ is not too large for modern days computers, one can obtain numerical convergence results by calculating the transition matrices $P_\omega$ and $\Past_\varpi$ and using that the total variation of a signed measure satisfies $\|\delta_{\mathbf{1}_d}P^t-\pi_\sigma\|=(1/2)\sum_{x\in\Xset}\left|\delta_{\mathbf{1}_d}P^t(x)-\pi_\sigma(x)\right|$. Figure \ref{fig:hypercube:2} shows the convergence of the Markov chains in the case $d=5$, $m=4$ and noise levels $\sigma\in\{0,10^{-5},10^{-2},10^{-1}\}$. Both columns show the same plots, in different scales. While the linear Y-axis scale allows to assess the speed of convergence of the Markov chains in the transient phase, the logarithmic scale outlines the latter in asymptotic regime. As a check, the case $\sigma=0$ at the first row of Figure \ref{fig:hypercube:2} illustrates Propositions \ref{prop:hypercube:1} and \ref{prop:hypercube:2}: $\Past_\varpi$ dominates ${P}_\omega$ both asymptotically and non-asymptotically and is in fact $d/2=2.5$ times more efficient. Indeed, the $d/2$-slowed version of $\Past_\varpi$ (that is, the kernel $(2/d)\Past_\varpi+(1-2/d)\mathrm{I}$ which only moves according to $\Past_\varpi$ w.p. $2/d=0.4$ and stays put otherwise) converges at the same rate as $P_\omega$.  Moreover, looking at the next rows, the fact that $P_\varpi^\ast$ is $2.5$ more efficient than $P_\omega$ remains true in the transient phase but less so when $\sigma$ increases. Indeed, as $\sigma$ increases, the initial convergence phase is hardly affected as long as $\sigma$ is not too large, i.e. $\sigma\leq 0.01$. It is however never true in the asymptotic regime, which leads to the main point of the analysis. Looking at the asymptotic rate (right column), a shortcoming of $\Past_\varpi$ is exposed: after exploring swiftly $\Zset$, it converges far more slowly on $\Xset\backslash\Zset$. The transition to this slow convergence regime occurs when $\delta_{\mathbf{1}_d}{{P_\varpi^{\ast\,t}}}$ approximates the conditional distribution $\pi_\sigma(\,\cdot\,|\Zset)$. The black dashed lines on the RHS plots of Figure \ref{fig:hypercube:2} indicate $\sigma=\pi_\sigma(\Xset\backslash \Zset)$. Indeed, note that
\begin{multline}
\label{eq:tv_cond}
\|\pi_\sigma-\pi_\sigma(\cdot\,|\,\Zset)\|
=(1/2)\sum_{x\in\Xset}\left|\pi_\sigma(x)-\pi_\sigma(x)\mathds{1}_{\{x\in\Zset\}}/\pi_\sigma(\Zset)\right|
\\
=(1/2\pi_\sigma(\Zset))\sum_{x\in\Zset}\left|\pi_\sigma(\Zset)\pi_\sigma(x)-\pi_\sigma(x)\right|
+(1/2)\sum_{x\in\Xset\backslash\Zset}\left|\pi_\sigma(x)\right|\\
=(1/2)(1-\pi_\sigma(\Zset))+(1/2)\pi_\sigma(\Xset\backslash\Zset)=\sigma\,.
\end{multline}
Hence, in terms of convergence, after transforming rather quickly the initial measure $\delta_{\mathbf{1}_d}$ to the conditional measure $\pi_\sigma(\,\cdot\,|\,\Zset)$, the locally-weighted chain hit a bottleneck when pushing $\pi_\sigma(\,\cdot\,|\,\Zset)$ to $\pi_\sigma$ and will eventually (how fast depends on the noise level $\sigma$) be outpaced by $P_\omega$. This fact may seem surprising since, by definition of $\varpi$, the kernel $P_\varpi^\ast(x,y)=P_\omega(x,y)$ for each $(x,y)\in\Xset\backslash\Zset$. In other words, when considering the restriction of $\pi_\sigma$ to $\Zset$, $P_\varpi^\ast$ is $d/2$ times faster than $P_\omega$ and when considering its restriction to $\Xset\backslash\Zset$, $P_\varpi^\ast=P_\omega$. Yet, $P_\omega$ is asymptotically better than $P_\varpi^\ast$ when $\sigma>0$.

The reason is that the weight function $\varpi$ Eq. \eqref{eq:omegas} coerces ``too strongly'' the chain on $\Zset$, yet without preventing convergence of  $\delta_x P_\varpi^{\ast\,t}$ to $\pi$, for each $x\in\Xset$. Indeed, in the small noise regime $\sigma>0$, selecting a kernel taking the chain out of $\Zset$ while in $\Zset$ occurs w.p. $\sigma/d $ which vanishes as $\sigma\downarrow 0$. The same probability also holds when the chain visits $\Xset\backslash\Zset$ and try to reach $\Zset$. Such a behaviour is a characteristic of meta-stable Markov processes which are well studied in molecular dynamics and known for their poor convergence properties. The fact that the convergence on $\Xset\backslash \Zset$ slows down with $\sigma$ can be seen in the right column of Figure \ref{fig:hypercube:2} (the slope of plateaus decreases with $\sigma$, $\sigma>0$). Letting $\rho_\varpi$ be the exponential convergence rate of $P_\varpi^\ast$, one may surmise based on those plots that there exists $\text{c}_{m,d}<0$ such that $\log \rho_\varpi\sim \text{c}_{m,d}\sigma$ as $\sigma\downarrow 0$, for fixed $d$ and $m$. This would suggest that $\sigma\mapsto \text{Gap}(P_\varpi^\ast)$ is not continuous at $\sigma=0$ since $\lim_{\sigma\downarrow 0}\text{Gap}(P_\varpi^\ast)=0$ while, from Proposition \ref{prop:hypercube:2}, $\text{Gap}(P_\varpi^\ast)>0$ at $\sigma=0$.

Motivated by the fact that, since for any $(i,x)\in\{1,\ldots,d\}\times \Xset$, $\omega_i(x)=1/d$, a slight variation of $\varpi$ defined by
\begin{equation}\label{eq:varpi_und}
\underline{\varpi}_i(x):\propto\varpi_i(x)\vee 1/d^2\,,\qquad (i,x)\in\{1,\ldots,d\}\times \Xset
\end{equation}
may allow to reach a tradeoff between the ``griddy'' locally-weighted kernel $P_\varpi^\ast$ which learns efficiently $\pi_\sigma(\,\cdot\,|\,\Zset)$ and the ``na\"ive'' random-scan kernel $P_\omega$ which, by contrast with $P_\varpi^\ast$, scales with $\sigma$ to learn $\pi_\sigma(\,\cdot\,|\,\Xset\backslash\Zset)$. Compared to $P_\varpi^\ast$, Figure \ref{fig:hypercube:2} shows that $P_{\underline{\varpi}}^\ast$ is, as expected, mildly slower to transform $\delta_{\mathbf{1}_d}$ to $\pi_\sigma(\,\cdot\,|\,\Zset)$ but deals much better with the asymptotic regime (as shown at the right column).

In line with the results of \cite{roberts1997geometric}, this example shows that the  design of $\varpi\in\Delta_{n-1}^\Xset$ should guarantee that each weight $\varpi_i(x)$ does not vanish to zero with a parameter of the model (\eg $\sigma$). However, as we shall see at Section \ref{sec:6} with the applications, the fact that $P_{\underline{\varpi}}^\ast$ is possibly less efficient asymptotically than $P_\omega$ is typically irrelevant since for noise vanishing distributions this regime is hardly reach for a reasonable runtime. These observations call for further non-asymptotical analysis which could be eased by modern tools such as the pseudo-spectral gap \citep{atchade2019approximate} or large sets \citep{yang2017complexity}, discussed in the introduction section. We leave these considerations for future work. In a slightly different, yet related, stream of ideas, we conclude this section by mentioning that a recent result in \cite[Theorem 1]{gagnon2021nonreve} can be readily applied to the context of Example \ref{ex1} with $\sigma>0$.  This states that for any $\epsilon>0$ and any bounded function $f:\Xset\to\rset$, there exists a \textit{tolerable} noise level $\sigma_0$, such that for any $\sigma<\sigma_0$,
$$
\var(f,P_{\underline{\varpi}}^\ast)\leq \frac{2}{d}\var(P_\omega,f)+\left(\frac{2}{d}-1\right)\var_{\pi_\sigma} f(X)+\frac{4}{d}\epsilon\,.
$$

\begin{figure}
\centering
\includegraphics[scale=0.68]{./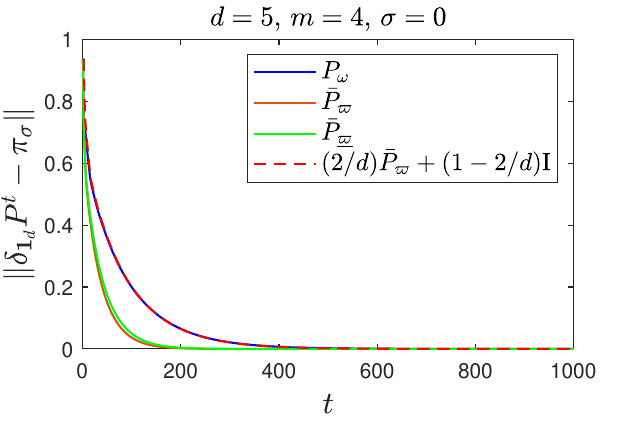}\includegraphics[scale=0.68]{./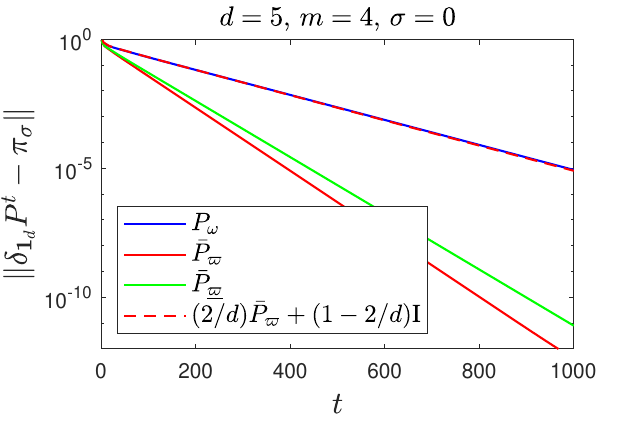}

\includegraphics[scale=0.68]{./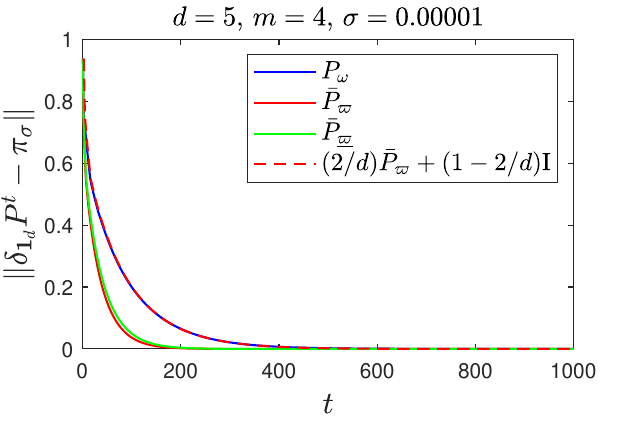}\includegraphics[scale=0.68]{./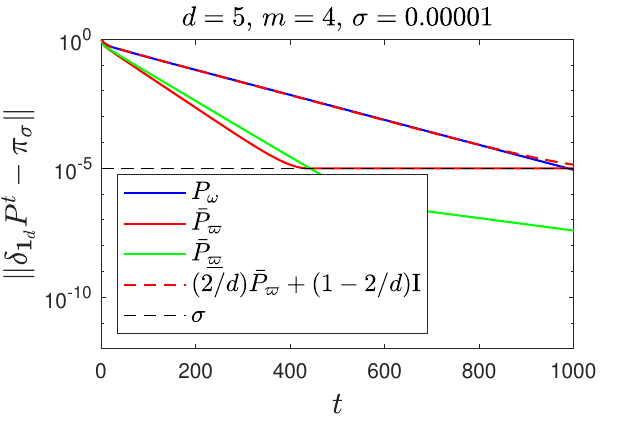}

\includegraphics[scale=0.68]{./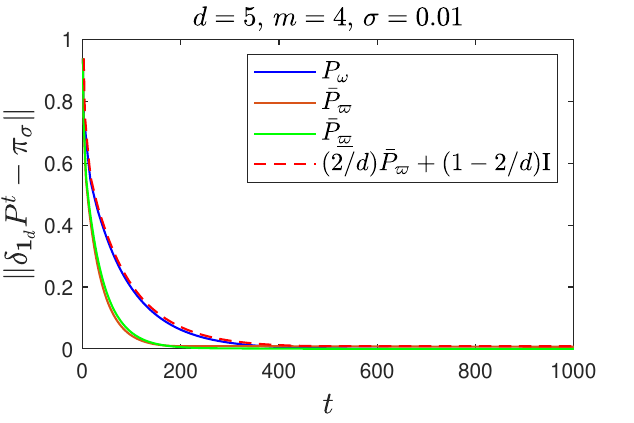}\includegraphics[scale=0.68]{./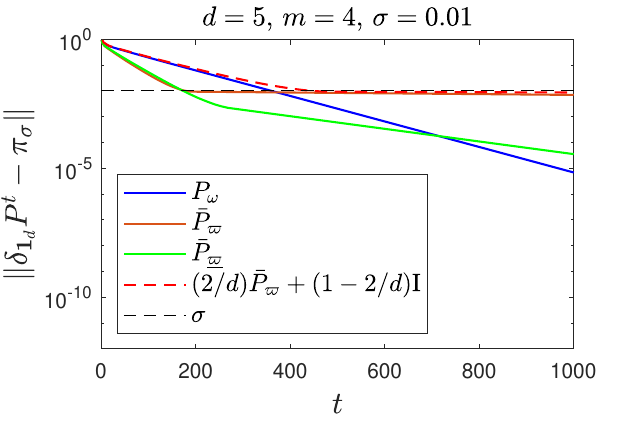}

\includegraphics[scale=0.68]{./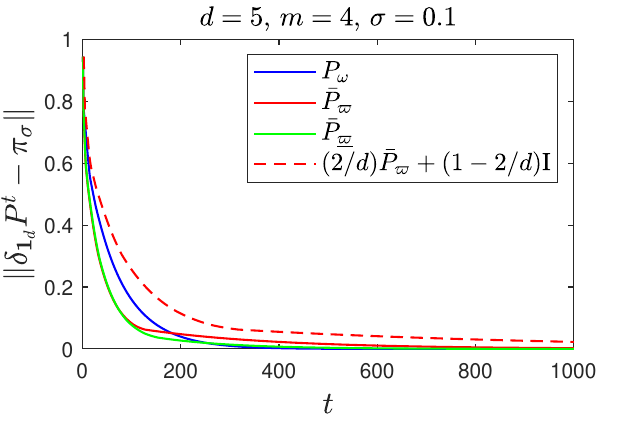}\includegraphics[scale=0.68]{./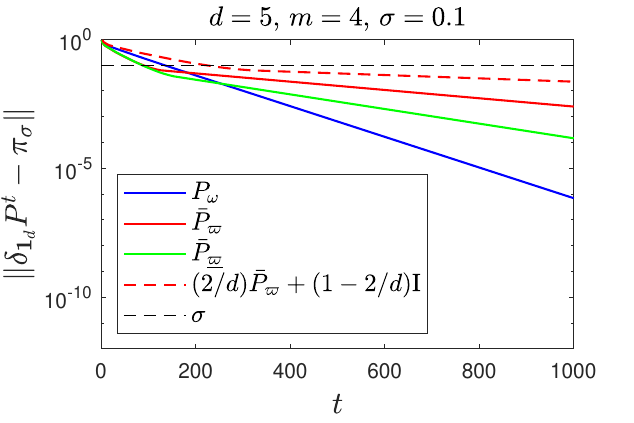}

\caption{Examples \ref{ex1}. Convergence from $\delta_{\mathbf{1}_d}$ to $\pi_\sigma$ of random-scan ($P_\omega$) and locally-weighted ($P_\varpi^\ast$) Markov chains. The two columns show the same plots but with different Y-axis scales. The LHS shows that the transient phase is much faster for $P_\varpi^\ast$ than $P_\omega$ while the RHS shows that $P_\varpi^\ast$ is not efficient in asymptotic regime. It is however very efficient to converge from $\delta_{\mathbf{1}_d}$ to $\pi_\sigma(\cdot\,|\,\Zset)$ (indeed, from Eq. \eqref{eq:tv_cond} $\|\pi_\sigma-\pi_\sigma(\cdot\,|\,\Zset)\|=\sigma$ which is shown in black dotted line on the RHS). This motivates constructions of weights (such as $\underline{\varpi}$, Eq.\eqref{eq:varpi_und}) which do not coerce as much the locally-weighted chain to $\Zset$.\label{fig:hypercube:2}}
\end{figure}

Before moving on to applications, we introduce a second version  of the locally-weighted MCMC algorithm (Algorithm \ref{alg2}), more efficient than Algorithm \ref{alg1} and which can be defined when $\Pfrak$ is a collection of Metropolis-Hastings kernels. Such a situation is often encountered in Bayesian statistics and motivates Algorithm \ref{alg2}.

\section{A locally-weighted MCMC algorithm for \allowbreak Metropolis-Hastings kernels}
\label{sec:5}
In this section, the collection of kernels $\Pfrak$ comprises exclusively Metropolis-Hastings kernels, \ie for all $i\in\{1,\ldots,n\}$, there exist an absolutely continuous Markov kernel $Q_i$, functions $\beta_i:\Xset^2\to(0,1]$ and $\varrho_i:\Xset\to[0,1)$, such that for all $A\in\Xalg$,
\begin{equation}
\label{eq4}
P_i(x,A)=\int_A Q_i(x,\rmd y)\beta_i(x,y)+\delta_x(A)(1-\varrho_i(x))\,,
\end{equation}
where $\beta_i:\Xset\times\Xset\to (0,1)$ is the acceptance probability defined as
$$
\beta_i(x,y)=1\wedge \frac{\pi(y)Q_i(y,x)}{\pi(x)Q_i(x,y)}\qquad \text{and}\qquad \varrho_i(x)=\int_\Xset Q_i(x,\rmd y)\beta_i(x,y)\,.
$$
By construction $P_i$ is $\pi$-reversible, see for example \cite{tierney1998note}. In this particular case, the locally-weighted algorithm (Alg. \ref{alg1}) can equivalently be rewritten as follows: draw $I\sim \varpi(X_t)$ and $X\,|\,I\sim Q_I$ and set $X_{t+1}=X$ with probability
\begin{equation}\label{Eq:new_acc_rate}
{\beta}_i^\ast(X_t,X)=\beta_i(X_t,X)\left(1\wedge \frac{\varpi_i(X)}{\varpi_i(X_t)}\right)
\end{equation}
and set $X_{t+1}=X_t$ otherwise, that is with probability $1-\rho_i(X_t)$. Experienced readers will recognize that the acceptance probability defined at Eq. \eqref{Eq:new_acc_rate} is suboptimal since it is the product of the MH acceptance probability with that of the locally-weighted correction factor. This motivates the introduction of a second locally-weighted Markov chain $\{X_t\}$, only relevant when all the kernels $P_1,\ldots,P_n$ fall into the framework of Eq. \eqref{eq4}.  Denoting by $\bP_\varpi$ the transition kernel of this Markov chain, we define $\bP_\varpi$ as
\begin{multline*}
\bP_\varpi(x,\rmd y)=\sum_{i=1}^n\varpi_i(x)Q_i(x,\rmd y)\left(1\wedge \frac{\pi(y)Q_i(y,x)\varpi_i(y)}{\pi(x)Q_i(x,y)\varpi_i(x)}\right)\\
+\delta_x(\rmd y)\left(1-\sum_{i=1}^n\varpi_i(x)\varrho_i(x)\right)
\end{multline*}
and the transition mechanism $X_t\to X_{t+1}$ is described at Algorithm \ref{alg2}.

\begin{algorithm}
\caption{A second locally-weighted MCMC for MH kernels, transition $X_{t}\to X_{t+1}$}
\label{alg2}
\begin{algorithmic}[1]
\Require $X_t=x\in\Xset$
\State draw $I\sim\varpi(x) \rightsquigarrow i$
\State propose $\tX\sim Q_i(x,\cdot)\rightsquigarrow \tx$ and set $X_{t+1}=\tx$ with probability
\begin{equation}
\label{eq1_bis}
\bar\beta_i(x,\tx):=1\wedge \frac{\pi(\tx)Q_i(\tx,x)\varpi_i(\tx)}{\pi(x)Q_i(x,\tx)\varpi_i(x)}
\end{equation}
and $X_{t+1}=x$ otherwise.
\end{algorithmic}
\end{algorithm}

\begin{prop}
 The Markov transition kernel  $\bP_\omega$ is $\pi$-reversible.
\end{prop}

\begin{proof}
We consider the distribution whose density is given by $\bar\pi(i,x):=\varpi_i(x)\pi(x)$. Let $K$ be the transition kernel of the Markov chain $\{(X_t,I_t),\,t\in\nset\}$ defined by the MH kernels with proposal $(I,X)|(I_t,X_t)\sim \varpi_I(X_t)\otimes Q_I(X_t,X)$. Then $K$ is $\bar\pi$-reversible by construction and thus for each $(i,i')\in\{1,\ldots,n\}$ and any $(A,B)\in\Xalg\otimes\Xalg$,
$$
\int_A \bar\pi(i,\rmd x)K((i,x),(i',B))=\int_B \bar\pi(i',\rmd x)K((i',x),(i,A))
$$
but noting that $K((i,x),\cdot)$ does not depend on $i$, and integrating out over all the possible $i$ and $i'$ on both sides we have that
$$
\int_A \pi(\rmd x)\sum_{i'=1}^n K(x;(i',B))=\int_B\pi(\rmd x)\sum_{i=1}^n K(x;(i,B))
$$
and the proof is completed by noting that $\sum_{i'=1}^n K(x;(i',B))=\bP_\varpi(x,B)$.
\end{proof}

In Algorithm \ref{alg1}, a proposal $\tX$ can be rejected (1) because of the non-zero diagonal mass of $P_i$ or (2) because of the correction step necessary to keep the locally-weighted algorithm $\pi$-invariant. In contrast, a proposal $\tX$ in Algorithm \ref{alg2} faces only one accept/reject step. This naturally induces a Peskun ordering between the Markov kernels $P_\varpi^\ast$ and $\bP_\varpi$ which yields the following result which indicates that when $P_1,\ldots,P_n$ are MH kernels, the locally-weighted MCMC of Algorithm \ref{alg2} should be preferred to Algorithm \ref{alg1}, when the efficiency is measured by the asymptotic variance.

\begin{prop}
\label{prop:peskunMH}
For any square integrable function  $f\in\Ltwo(\pi)$ and any $\varpi\in\Delta_{n-1}^\ast$,  we have
\begin{equation}
\label{eq5}
\var(f,P_\varpi^\ast)\geq \var(f,\bP_\varpi)\,.
\end{equation}
\end{prop}

\begin{proof}
Note that for all $x\in\Xset$ and any $A\in\Xalg$,
\begin{enumerate}[(i)]
\item the Markov subkernels associated to $P_\varpi^\ast$ and $\bP_\varpi$ write
\begin{equation*}
\left\{
\begin{array}{l}
P_\varpi^\ast(x,A\backslash\{x\})=\sum_{i=1}^{n}\varpi_i(x)\int_A Q_i(x,\rmd y) \beta_i^\ast(x,y)\,,\\
\bP_\varpi(x,A\backslash\{x\})=\sum_{i=1}^{n}\varpi_i(x)\int_A Q_i(x,\rmd y) \bar\beta_i(x,y)
\end{array}
\right.
\end{equation*}
\item for all $i\in\{1,\ldots,n\}$ and for $(x,y)\in \Xset^2$,
\begin{multline*}
\beta_i^\ast(x,y)=\left\{1\wedge \frac{\pi(y)Q_i(y,x)}{\pi(x)Q_i(x,y)}\right\}\left\{1\wedge\frac{\varpi_i(y)}{\varpi_i(x)}\right\}
\\\leq
1\wedge \frac{\pi(y)Q_i(y,x)\varpi_i(y)}{\pi(x)Q_i(x,y)\varpi_i(x)}=\bar\beta_i(x,y)\,,
\end{multline*}
since for any positive real numbers $(a,b)$, $(1\wedge a)(1\wedge b)<1\wedge ab$.
\end{enumerate}
Combining (i) and (ii), we obtain that $\bP_\varpi$ dominates $P_\varpi^\ast$ in the Peskun ordering sense. Since $P_\varpi^\ast$ and $\bP_\varpi$ are both $\pi$-reversible, the inequality \eqref{eq5} follows by applying Theorem 4 from \cite{tierney1998note}.
\end{proof}

\begin{rem}
Let $\Xset$ be countable. Then, the locally-balanced MCMC algorithm proposed in \cite{zanella2020informed} can be seen as a particular instance of Algorithm \ref{alg2}. Assume that each $x\in\Xset$ has exactly $n$ neighbors $\mathsf{N}(x)=\{\mathsf{N}_1(x),\ldots,\mathsf{N}_n(x)\}\subset \Xset$ to which transition from state $x$ is possible. Let, for each $i\in\{1,\ldots,n\}$, $Q_i(x,\,\cdot\,)=\delta_{\mathsf{N}_i(x)}$. Then setting $\varpi_i(x)=g(\pi(\mathsf{N}_i(x))/\pi(x))/\sum_{\ell=1}^n g(\pi(\mathsf{N}_\ell(x))/\pi(x))$ for any function $g:\rset^+\to\rset^+$ such that $g(t)=tg(1/t)$ in Algorithm \ref{alg2} defines a locally-balanced algorithm in the sense of \cite{zanella2020informed}. The motivation behind the two samplers is similar: they are both especially relevant when the noise-level in vanishing-noise distribution decreases as they are able to exploit sparse and filamentary structures in $\pi_\sigma$. The mean to reach that goal is however different: a single proposal $Q$ is turned into a locally-balanced one in \cite{zanella2020informed} by the transformation $Q\mapsto g(\pi(y))Q(x,y)/\sum_{z\in\Zset}g(\pi(z))Q(x,z)$ while, in Algorithm \ref{alg2}, a uniform mixture of proposals is turned into a locally-weighted one by the transformation $(1/n)\sum_{i=1}^nQ_i(x,y)\mapsto \sum_{i=1}^n \varpi_i(x)Q_i(x,y)$. The locally-balanced algorithm is appealing since for several important models, its acceptance ratio is, by the very design of the locally-balance proposal, shown to tend to one when $d\to\infty$. The unfortunate byproduct is that this construction also makes simulating from a locally-balanced proposal only feasible when $\Xset$ is countable.
\end{rem}

\begin{rem}
It is possible to construct yet another Markov kernel based on the same proposal scheme (i) $I\sim \varpi(X_t)$ and (ii) $X|I\sim Q_I(X_t,\cdot)$ of Algorithm \ref{alg2} but using the acceptance probability defined by
$$
\tilde\beta_i(X_t,X):=1\wedge \frac{\pi(X)\sum_{j=1}^n \varpi_j(X)Q_j(X,X_t)}{\pi(X_t)\sum_{j=1}^n \varpi_j(X_t)Q_j(X_t,X)}\,.
$$
This choice specifies a \textit{meta}-Metropolis-Hastings Markov chain which is thus $\pi$-reversible. In the case of position-independent weight function, $\omega\in\Delta_{n-1}$ we know from \cite[Proposition 5]{tierney1998note} that the later Markov chain is more efficient, again from the asymptotic variance perspective, than Algorithm \ref{alg2}. However, no such thing can be proven when $\varpi\in\Delta_{n-1}^\Xset$ and since $\tilde\beta_i$ is typically more expensive to calculate than $\bar\beta_i$ as it requires evaluating $2n$ proposals at each iteration, we do not consider this algorithm any further.
\end{rem}

\section{The kernel selection probability function}
\label{sec5:weightfunction}
To implement a locally-weighted algorithm (Alg. \ref{alg1} or Alg. \ref{alg2}) a choice has to be made concerning the  weight function $\varpi\in\Delta_{n-1}^\Xset$. Contrarily to Example \ref{ex1}, a reasonable weight function is usually not apriori known. There are many possible heuristics and the one we follow in this paper is to set for $x\in\Xset$ the kernel selection probability function as
\begin{equation}
\label{eq:varpi}
\varpi_i(x)\propto \esp_i\left[g(\pi(\tX))\,\bigg|\,x\right]\,,\qquad i\in\{1,\ldots,n\}\,,
\end{equation}
where the conditional expectation is w.r.t. $Q_i(x,\,\cdot\,)$ and $g:\rset^+\to\rset^+$. Essentially, the choice of $g$ modulates the sampling strategy and in particular controls the tradeoff between attempting easy moves that are likely to be accepted and more risky ones. With $g=g_0$ the constant function, $\varpi_i(x)=\omega_i=1/n$ for any $x\in\Xset$ and all $i\in\{1,\ldots,n\}$. But taking $g$ as the identity function, higher weights are assigned to kernels leading, on average, to states whose density is higher than states that would be generated using $g_0$. This feature is exacerbated when $g$ is the square function or any other faster growing function. On the contrary, if $g$ is the inverse function, proposed states have, on average, a lower density than states that would be generated using $g_0$. In this paper, we consider $g(x)=x$ which is sensible in the context of noise vanishing distributions where one is first and foremost interested in efficient sampling on the subspace or manifold containing the bulk of the probability mass. Of course, the weight $\varpi_i(x)$ is all the more informative given that $\var_i[g(\pi(\tX))\,|\,x]$ is small. This is the case when the proposal distributions $Q_i$ have a low variance and/or  a low-dimensional support, which usually requires $n$ to be large.

Apart from certain situations in which the state space is countable, pointwise evaluation of $\varpi_i(x)$ in Eq. \eqref{eq:varpi} is rarely feasible. When the proposal distributions $Q_i$ are random walk kernels, i.e. have the form $Q_i(x,\rmd y)=R_i(y-x)\rmd y$, the following approximation can be considered
$$
\widehat{\varpi}_i^{(L)}(x)=\frac{1}{L}\sum_{\ell=1}^n \pi(x+ Z_\ell^{(i)})\,,\qquad (Z_1^{(i)},\ldots,Z_L^{(i)})\sim_{iid}R_i\,,\qquad i\in\{1,\ldots,n\}\,,
$$
for some $L\in\nset$. For example, if $Q_i(x,\cdot)=\norm(x,\Sigma_i)$, then $R_i=\norm(0,\Sigma_i)$. It can be readily checked that the particles $(Z_1,\ldots,Z_n)$ are exogenous random variables as they do not depend on the past history of the Markov chain nor from its current state. Readers familiar to the Multiple-Try Metropolis algorithm \citep{liu2000multiple} will see a resemblance with the resulting algorithm. By replacing $\varpi$ by $\widehat{\varpi}^{(L)}$ in $\bP_\varpi$, we essentially turn a time-homogeneous Markov chain into a time-inhomogeneous one, since each transition is carried out by a kernel which is parameterized by the $2n$ sets of particles $(Z_1,\ldots,Z_L)$ which help approximating $\varpi_1(x),\ldots,\varpi_n(x)$, necessary to draw $I$, and $\varpi_{1}(\tX),\ldots,\varpi_n(\tX)$, necessary to compute the acceptance probability $\bar\beta_i(x,\tilde{X})$. Since the resulting kernel does not depend on a particular choice for the weight function $\varpi$, we simply denote it by $\bP_L$ where $L\in\nset$ is the number of particles used to estimate $\varpi$.

The question whether or not the time-inhomogeneous Markov chain inherits the same convergence properties, LLN and CLT from the time-homogeneous Markov chain is hard to answer in all generality. However, since the kernels vary only through the set of particles used to approximate $\varpi$ and that they are all $\pi$-reversible MH kernels, we expect this approximation step to not deteriorate significantly the convergence of the Markov chain. For more details on convergence of time-inhomegeneous Markov chain, see \cite{douc2004quantitative} and \cite{saloff2007convergence} and for CLT see \cite{sethuraman2004martingale}. Another option is to use the same set of particles for each $Q_i$, throughout the algorithm so as to retrieve a time-homogeneous chain.

\section{Numerical Examples}
\label{sec:6}
In this section, we consider three sampling problems defined on general state spaces. For each, the bulk of the probability mass of the distribution of interest displays sparse and manifold-like features. Since the full conditional distributions of the models considered are not straightforward, a Gibbs sampler cannot be implemented. With some abuse of notations, we will keep the random-scan terminology to refer to the algorithm which is usually known as Metropolis-within-Gibbs with state-independent selection weight. Our goal is to compare such a random-scan procedure with a corresponding locally-weighted one. In all examples, we will consider mixing a collection of proposal kernels rather than $\pi$-reversible kernels directly, hence the locally-weighted procedure will be as specified at Algorithm \ref{alg2}. For each example, we specify the collection of proposal kernels $Q_1,Q_2,\ldots,Q_n$, the random-scan weight $\omega\in\Delta_{n-1}$ and the weight function $\varpi:\Xset\to \Delta_{n-1}$ used by the locally-weighted algorithm. When constructing a reasonable weight function is too involved, we use the technique detailed in Section \ref{sec5:weightfunction} to estimate the local weights, on the fly.

Random-scan and locally-weighted strategies are compared according to their time to convergence from some initial distribution $\pi_0$ as well as through the asymptotic variance of the empirical average of some test functions over the sample path of both Markov chains, started at stationarity. More precisely:
\begin{itemize}
\item The convergence in distribution is assessed by estimating the Kullback-Leibler divergence (KL) between $\pi$ and $\pi_0 P^t$, that is the distribution of the Markov chain after $t$ transitions and started at $\pi_0$. The KL divergence is estimated using a nearest neighbor entropy estimator, developed in  \cite{chauveau2013smoothness} and \cite{chauveau2020entropy}.
\item The asymptotic variance of a Markov chain $P$ used to estimate $\pi f$ is estimated by simulating a population of MCMC estimators of $\{\widehat{\pi f}\}_1,\{\widehat{\pi f}\}_2,\ldots$  obtained through the simulation in parallel of i.i.d. Markov chains started at $\pi$ when possible, or from a proxy of $\pi$ otherwise, for $T$ iterations. The time horizon $T$ is problem specific and is set large enough so that the MCMC estimators reach their asymptotic normal regime. The asymptotic variance $\var(P,f)$ is thus estimated by
    $$
    T\widehat{\var}\left(\{\widehat{\pi f}\}_1,\{\widehat{\pi f}\}_2,\ldots\right)\,,
    $$
    where $\widehat{\var}$ denotes the empirical variance operator.
\end{itemize}
Studying simultaneously the transient and  stationary regimes of MCMC algorithms is important for applications in Statistics and Machine Learning which typically focus on the Root Mean Square Error (RMSE) of estimators.

\subsection{Mixture of Gaussians}
This example introduces a non-straightforward mixture of Gaussian distribution which can be seen as an extension of Example \ref{ex1} to general state spaces. In particular, the distribution is parameterized by the dimension of the state space $d$ and a noise level $\sigma$.  With such a synthetic example, it is possible to study the algorithms in noise vanishing regime $\sigma \downarrow 0$ and/or in high dimensional settings $d\uparrow \infty$.

\begin{exa}
\label{ex7}
Let $\pi_{d,\sigma}$ be the $d$-dimensional mixture of $d$ Gaussian distributions, parameterized by a noise level $\sigma>0$ and defined as
\begin{multline}
\label{eq:ex7}
\pi_{d,\sigma}=\frac{1}{d}\sum_{i=1}^d\norm_d(\mu_i,\Sigma_i)\,,\\
\mathrm{with}\quad \mu_1:=\0_d\quad\mathrm{and\;for\; all}\quad
 i\in\{2,\ldots,d\}\,,
\mu_{i}:=\mu_{i-1}+\gamma_i\,,\quad \\
\mathrm{with}\quad \gamma_i=(\gamma_i(1),\ldots,\gamma_i(d))^\mathrm{T}\,,\quad \gamma_i(j)=\frac{1}{\sigma}\Phi^{-1}(0.9)\left({\mathds{1}}_{\{j=i-1\}}+{\mathds{1}}_{\{j=i+1\}}\right)\,,\\
i\in\{1,\ldots,d\}\,,\quad \Sigma_i=\mathrm{diag}([\1_{i-1}\T\,, 1/\sigma^2\,,\1_{d-i}\T])\,,
\end{multline}
with the convention that for any real number $c$, $\mathbf{c}_d\in\rset^d$ is the constant vector equal to $c$ and $\mathbf{c}_0=\{\emptyset\}$ and $\Phi$ is the standard normal cdf. This definition implies that, for any $d\in\nset$, $\pi_{d,\sigma}$ features a more pronounced sparse and filamentary structure  as the noise level $\sigma$ decreases, see Figure \ref{fig:cross}. Even though sampling i.i.d. draws from $\pi_{d,\sigma}$ is straightforward, the random-scan algorithm $P_\omega$ and the locally-weighted algorithms (more precisely the kernel $\bP_\varpi$ for Algorithm \ref{alg2}  and the kernel $\bP_L$ where the weight function $\varpi$ is estimated on the fly with $L$ particles, see Section \ref{sec5:weightfunction}) are implemented on a computer to produce (dependent) draws from $\pi_{d,\sigma}$.  Given that this example may be seen as a continuous counterpart to the sampling problem of Example \ref{ex3},  one may reasonably expect  $\bP_\varpi$  to dominate $P_\omega$, at least when defined with a relevant choice of weight function $\varpi$. Moreover, the performance of $\bP_\varpi$, relative to $P_\omega$, should increase as $\pi_{d,\sigma}$ gets more sparse and filamentary, i.e.  as $d$ increases and $\sigma$ decreases. The magnitude of improvement remains to be assessed empirically and it is particularly interesting to study how it compares to the theoretical results obtained for Example \ref{ex1} in the noise-free regime at Propositions \ref{prop:hypercube:1}, \ref{prop:hypercube:2} and \ref{prop:hypercube:3}. On the practical side, it is interesting to study how $\bP_L$ compares to $P_\omega$ since, by contrast to $\bP_\varpi$, both kernels require the same initial knowledge on $\pi_{d,\sigma}$ to be implemented. Another relevant question is to assess whether estimating the weights does not cause an overwhelming computational burden to $\bP_L$, relatively to $P_\omega$. Elements of response to those questions based on numerical experiments are provided.
\end{exa}

\begin{figure}[h]
\centering
\includegraphics[scale=0.65]{./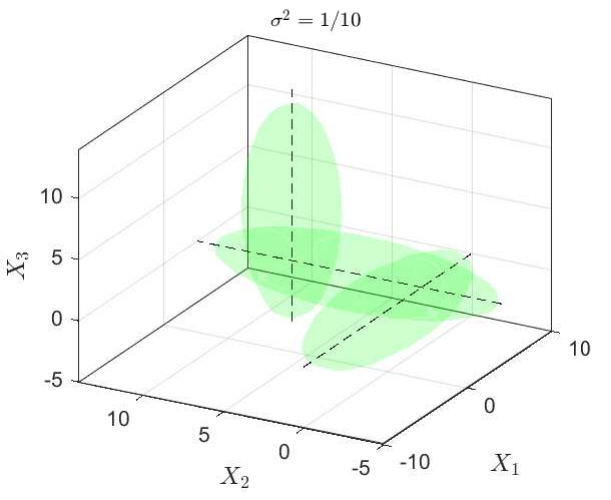}\includegraphics[scale=0.65]{./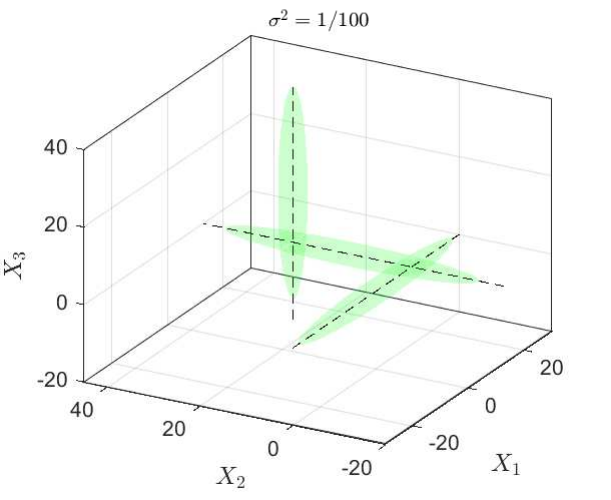}
\includegraphics[scale=0.65]{./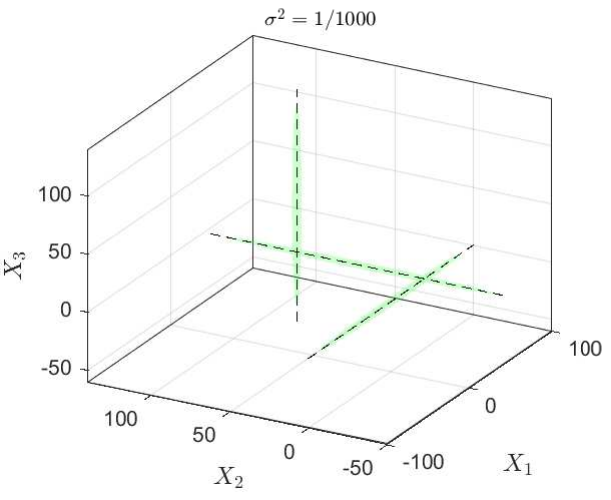}
\caption{Example \ref{ex7}. Representation of the distribution $\pi_{3,\sigma}$, for three parameters $\sigma^2\in\{0.1,0.01,0.001\}$.  The three-dimensional ellipsoids cover 95\% of the probability mass of each mixture component of $\pi_{3,\sigma}$. The dashed line represents the direction of the eigenvector associated to the largest eigenvalue of the covariance matrix of each component. Note that the axis scale is different in each plot.  \label{fig:cross}}
\end{figure}

\begin{table}
\centering
\begin{tabular}{ccccc}
\hline
$d=3$ & \multicolumn{4}{c}{$\var(\bP_\varpi,f)/\var(P_\omega,f)$} \\
 \hline
 $\sigma^2$   & $10^{-1}$ & $10^{-2}$ & $10^{-3}$ & $10^{-4}$ \\
    \hline
$f_1$& $0.56$, $0.57*$& $0.37$, $0.54*$ & $0.22$, $0.47*$ & $0.13$, $0.47*$ \\
$f_2$ & $0.71$, $0.65*$ & $0.40$, $0.57*$ & $0.25$, $0.47*$& $0.16$, $0.47*$ \\
$f_3$ & $0.72$, $0.66*$ & $0.40$, $0.57*$& $0.25$, $0.47*$ & $0.16$, $0.47*$\\
\hline
\end{tabular}

\begin{tabular}{ccccc}
\hline
$d=3$ &   \multicolumn{2}{c}{$\var(\bP_{10},f)/\var(P_\omega,f)$} &  \multicolumn{2}{c}{$\var(\bP_{100},f)/\var(P_\omega,f)$}\\
 \hline
 $\sigma^2$    &$10^{-1}$ & $10^{-2}$ & $10^{-1}$ & $10^{-2}$\\
    \hline
$f_1$& $0.65$ & $0.35$ & $0.65$ & $0.34$\\
$f_2$ &   $0.69$ & $0.40$ & $0.70$ & $0.40$\\
$f_3$ &   $0.68$ &$0.40$& $0.70$ &  $0.40$\\
\hline
\end{tabular}
\vspace{.5cm}

\begin{tabular}{cccccccccc}
\hline
$d=5$ & \multicolumn{4}{c}{$\var(\bP_\varpi,f)/\var(P_\omega,f)$} &  \multicolumn{3}{c}{$\var(\bP_{10},f)/\var(P_\omega,f)$} &  \multicolumn{2}{c}{$\var(\bP_{100},f)/\var(P_\omega,f)$}\\
 \hline
 $\sigma^2$   & $10^{-1}$ & $10^{-2}$ & $10^{-3}$ & $10^{-4}$ &$10^{-1}$ & $10^{-2}$ &$10^{-3}$ & $10^{-1}$ & $10^{-2}$\\
    \hline
$f_1$& $0.35*$ & $0.34*$ &  $0.32*$ & $0.30*$ & $0.61*$ & $0.40*$ & $0.34*$ & $0.52*$ & $0.40*$ \\
$f_2$ & $0.44*$ & $0.33*$ & $0.32*$  & $0.29*$ & $0.62*$ & $0.44*$ & $0.35*$ &$0.61*$ & $0.42*$\\
$f_3$ & $0.48*$ & $0.34*$ & $0.33*$ & $0.30*$ & $0.61*$ & $0.45*$ &$0.35*$ & $0.61*$ & $0.42*$\\
\hline
\end{tabular}

\vspace{.5cm}
\begin{tabular}{cccccccccc}
$d=10$ & \multicolumn{4}{c}{$\var(\bP_\varpi,f)/\var(P_\omega,f)$} &  \multicolumn{3}{c}{$\var(\bP_{10},f)/\var(P_\omega,f)$}
 &  \multicolumn{2}{c}{$\var(\bP_{100},f)/\var(P_\omega,f)$}\\
 \hline
 $\sigma^2$   & $10^{-1}$ & $10^{-2}$ & $10^{-3}$ & $10^{-4}$ &$10^{-1}$ & $10^{-2}$ & $10^{-4}$ & $10^{-1}$ & $10^{-2}$\\
    \hline
$f_1$& $0.22*$ & $0.18*$ &  $0.17*$ & $0.17*$ & $0.64*$ & $0.33*$ & $0.15*$ &  $0.44*$&  $0.28*$\\
$f_2$ & $0.25*$ & $0.18*$ &  $0.18*$  & $0.18*$ & $0.61*$ & $0.35*$& $0.16*$ & $0.44*$ &  $0.31*$\\
$f_3$ & $0.43*$ & $0.19*$ &  $0.19*$ & $0.18*$ & $0.66*$ & $0.34*$& $0.16*$ & $0.44*$ &   $0.29*$\\
\hline
\end{tabular}
\caption{Example \ref{ex7}. Ratio of asymptotic variances for three test functions between a locally-weighted kernel  $\bP_\varpi$ with $\varpi$ defined at Eq. \eqref{eq:ex8:weights} or $\bP_L$ for $L\in\{10,100\}$ and the random-scan kernel $P_\omega$ with $\omega=(1/d,\,\cdots\,,\,1/d)\T$. Test functions were defined as $f_1(x)=\mathbf{1}_{\{x_1<0\}}$, $f_2(x)=\|x\|_2^2$ and $f_3(x)=x_2^2$. Results for different model parameters $(d,\sigma^2)\in \{3,5,10\}\times\{1/10,1/100,1/1000,1/10000\}$ are provided. Entries with an asterisk indicate that the variance parameter for the random-scan proposal kernel was set to that tuned by the locally-weighted preliminary run (see paragraph \textit{Proposal kernels} at page 15). The smaller the ratio the better the locally-weighted kernel for estimating $\pi f_i$ ($i\in\{1,2,3\}$), relatively to the random-scan one. All asymptotic variances were estimated using replicas of Markov chains sufficiently long so as to enter a central limit theorem regime, for all three test functions. All chains were started in stationary regime.\label{tab:ex7:avar}}
\end{table}

\begin{figure}
\centering
\includegraphics[scale=0.5]{./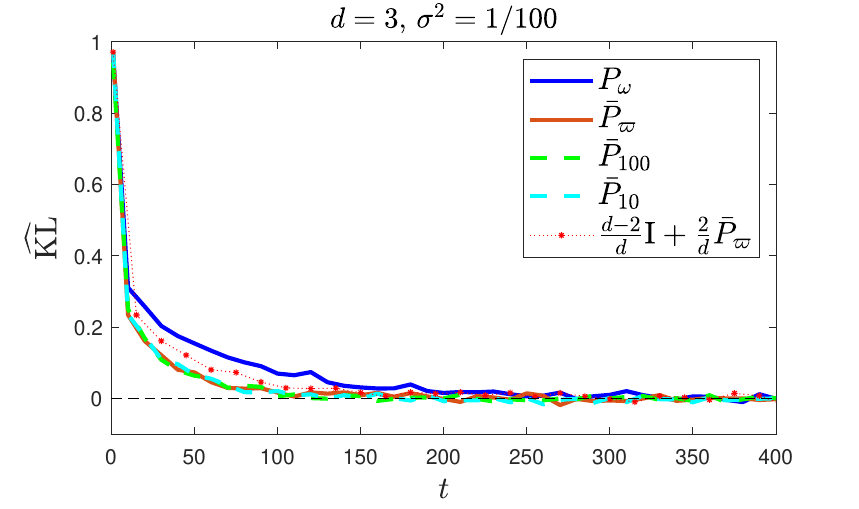}\includegraphics[scale=0.5]{./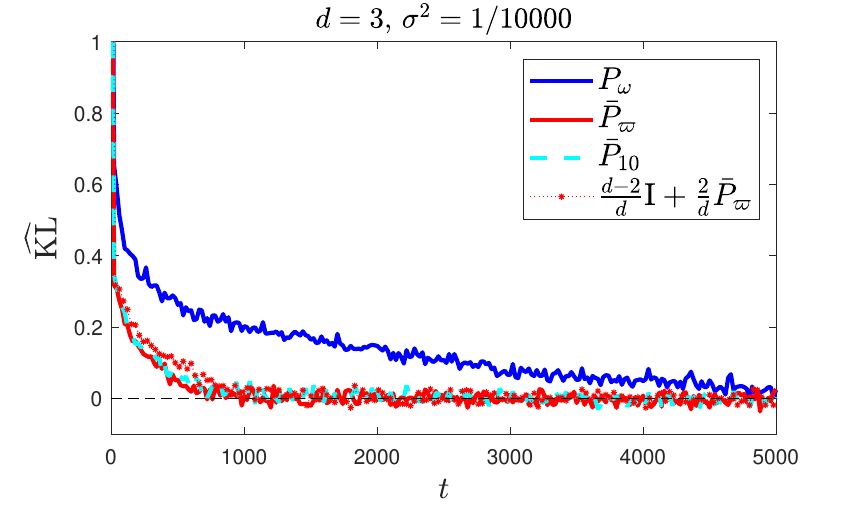}

\includegraphics[scale=0.5]{./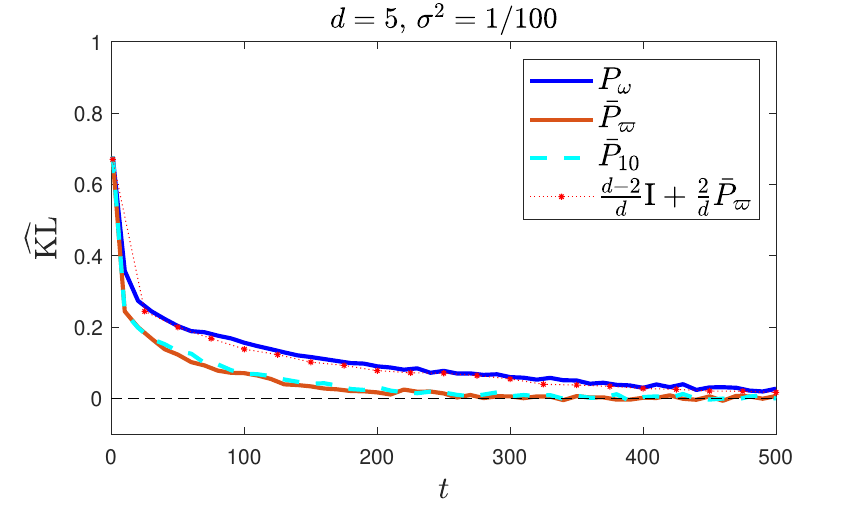}\includegraphics[scale=0.5]{./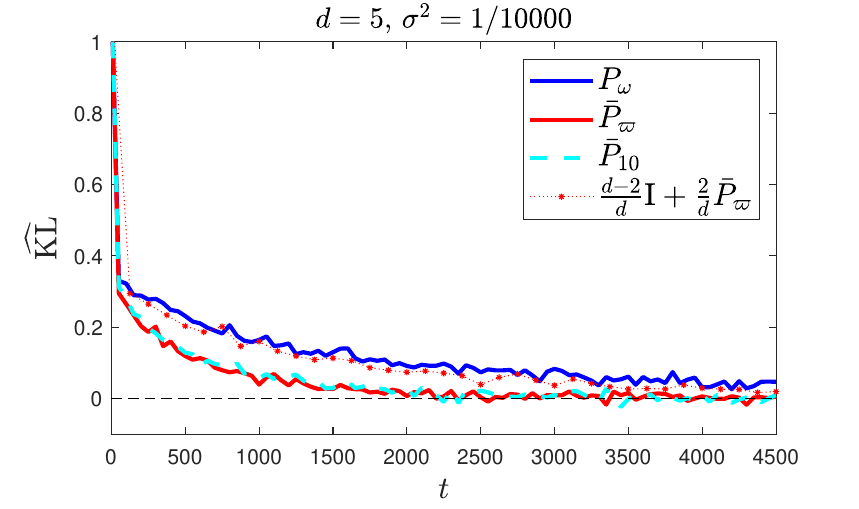}

\includegraphics[scale=0.5]{./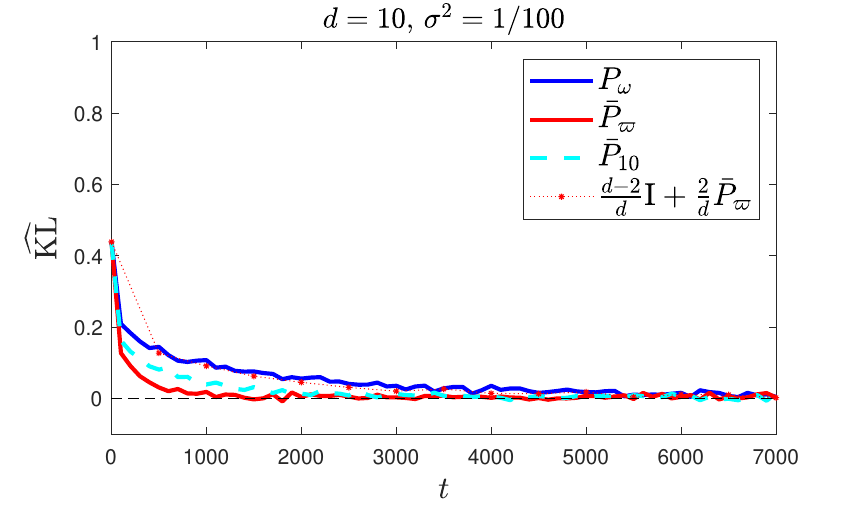}\includegraphics[scale=0.5]{./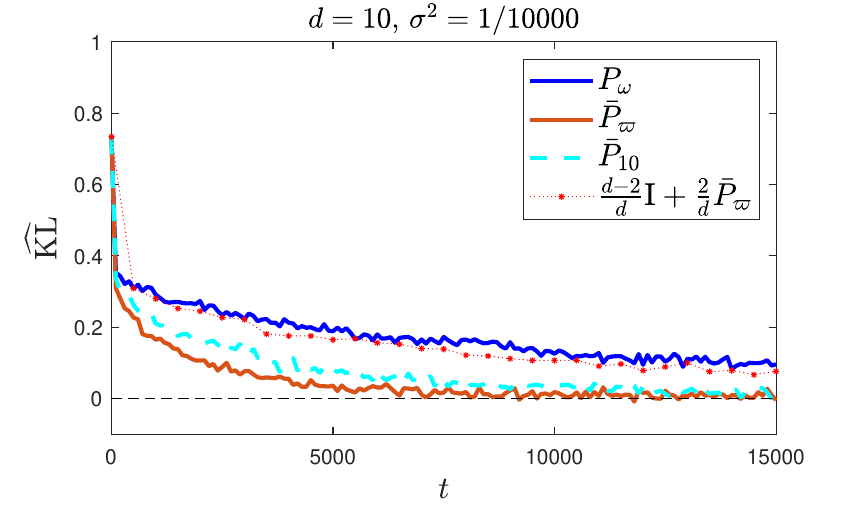}
\caption{Example \ref{ex7}. Distributional convergence of $\{\pi_0 P^t\}$ from $\pi_0=\norm_d(\mu_1,\mathrm{Id}_d)$ towards $\pi_{d,\sigma}$ for $P\in\{P_\omega,\bP_\varpi,\bP_{10}\}$ and a lazy version of $\bP_\varpi$ which only attempts moving using $\bP_\varpi$ with probability $2/d$. The plots estimate $\mathrm{KL}(\pi_0P^t,\pi_{d,\sigma})$, that is  the Kullback-Leibler divergence between $\pi_0 P^t$ and $\pi_{d,\sigma}$ as a function of $t$, for different parameters $(d,\sigma)$. Estimating the local weights (i.e. using $\bP_L$ instead of $\bP_\varpi$) slows down very mildly the convergence of the locally-weighted strategy. For $d=3$, the random-scan proposal kernel variance was adaptively tuned using an adaptive random-scan MCMC algorithm while for $d>3$ the random-scan proposal kernel variance was tuned with the variance parameter estimated from an adaptive locally-weighted MCMC algorithm $\bP_\varpi$. We note that when the same proposal variance is used for the locally-weighted and random-scan chains (as for $d=5$ and $d=10$), the lazy chain $\pi_0\left[(1-2/d)\text{I}+d/2 \bP_\varpi\right]^t$ offers a relatively tight lower bound for $\pi_0 P_\omega^t$, which is in line with the theory of Example \ref{ex1}, see Prop. \ref{prop:hypercube:1}. \label{fig:ex7_cv}}
\end{figure}

\paragraph*{Available proposal kernels.}
Given the symmetry of $\pi_{d,\sigma}$ (see Figure \ref{fig:cross}), the proposal kernels are single-site update Gaussian kernel with a certain variance parameter $\upsilon>0$, which depends on $\sigma$. Therefore, the collection of proposal kernels $Q_1,\ldots,Q_d$ (thus $n=d$) is defined such that
$$
(i,x)\in\{1,\ldots,d\}\times \Xset\,,\qquad Q_i(x,\rmd y)=\phi_1(\rmd y_i\,|\,x_i,\upsilon)\prod_{j\neq i}\delta_{x_j}(\rmd y_j)\,,
$$
where $\upsilon>0$ is the random-walk variance parameter and for any $d\in\nset$, $\phi_d(\,\cdot\,|\,\mu,\Sigma)$ is the $d$-dimensional Gaussian pdf with mean parameter $\mu$ and covariance $\Sigma$. For each noise level parameter $\sigma$ considered, the parameter $\upsilon$ was set for the three methods $P_\omega,\bP_\varpi,\bP_L$ independently using a preliminary adaptive run for each algorithm tuning $\upsilon$ on the fly until the average acceptance probability stabilizes between $0.3$ and $0.4$. However, for $d\geq 5$ we found that the resulting tuned value $\upsilon$ for random-scan ($P_\omega$) did not allow to reach the asymptotically normal regime of the MCMC estimator, in a reasonable time frame.\footnote{ This observation is explained as follows. For any $X_t$ in an area of non-negligible probability and each $i\in\{1,\ldots,d\}$,  with probability $1-1/d$, $\tilde{X}\sim Q_i(X_t,\cdot)$ should be such that $|\tilde{X}_i-X_{t,i}|\ll1$ in order to have a chance of being accepted. Indeed with probability $1-1/d$, $X_t$ belongs to one of those components of the Gaussian mixture which is not stretched along the $i$-th axis and which therefore requires a relatively small perturbation to remain in a non-negligible probability region of the state-space. As a result, the adaptive strategy sets $\upsilon$ to a relatively small value which does not allow the random-scan chain to traverse the state-space efficiently, all the more that $\sigma$ is low.} Thus, for those situations, the random-scan was implemented using the parameter $\upsilon$ adaptively tuned by the locally-weighted adaptive run $\bP_\varpi$. To emphasize this nuance, results obtained in that setup were reported with an asterisk at Tables \ref{tab:ex7:avar}.

\paragraph*{Weight function.}

By symmetry of $\pi_{d,\sigma}$, the optimal selection probability for the random-scan algorithm is uniform on $\{1,\ldots,d\}$, i.e. $\omega\propto\mathbf{1}_d$.

The locally-weighted weight function should give larger probability to update the component of the Markov chain corresponding to the one or two direction(s) along which $\pi_{d,\sigma}$ varies very slowly (relatively to the other directions), i.e. the direction along which $\pi_{d,\sigma}$ stretches out on. More formally, it is defined as
\begin{equation}
\label{eq:ex8:weights}
(i,x)\in\{1,\ldots,d\}\times \Xset\,,\qquad\varpi_i(x):\propto\sqrt{\phi_d(x\,|\,\mu_i,\Sigma_i)+\frac{1}{d^4}\max_{1\leq i\leq d}\phi_d(x\,|\,\mu_i,\Sigma_i)}\,,
\end{equation}
where the symbol $\propto$ means that for all $x\in\Xset$, the entries $\{\varpi_i(x)\}_{i=1}^d$ are normalized, i.e. $\sum_{i=1}^d \varpi_i(x)=1$. The rationale behind Eq. \eqref{eq:ex8:weights} is that the first term is likely to be negligible for all but one direction (or two whenever $x\in\Xset$ is in an area where two components are roughly equally likely). Without the second term of order $d^{-4}$, the chain would  be nearly reducible since for each $x\in\Xset$ at most two components have a reasonable chance to be updated. The second term inside the square root is set so as to give a chance to also update the other components. By contrast with $P_\omega$ which gives those components a $1/d$ chance,  $\bP_\varpi$  assigns those components a selection chance which scales with $1/d^2$. This can be seen as a tradeoff between the random-scan strategy and a deterministic one which would, in the limit $\sigma\to 0$, systematically pick the one (or two) components on which $\pi_\sigma$ locally stretches out on, thereby failing the sampler for $\sigma>0$.

\paragraph*{Results}

\begin{itemize}
\item\textbf{In terms of asymptotic efficiency}: Table \ref{tab:ex7:avar} reports the ratio of asymptotic variances of the MCMC estimator of $\pi {f}_i$, for three test functions, between a locally-weighted MCMC ($\bP_\varpi$ or $\bP_L$) and its random-scan counterpart $P_\omega$. The test functions are defined as $f_1(x)=\mathbf{1}_{\{x_1<0\}}$, $f_2(x)=\|x\|_2^2$ and $f_3(x)=x_2^2$, for all $x\in\rset^d$. Results point out that a locally-weighted algorithm allows to reduce significantly the asymptotic variance characterizing the random-scan strategy. As expected and in line with the analysis carried out at Example \ref{ex1}, $\bP_\varpi$ outperforms $P_\omega$  more and more significantly when $\sigma$ decreases and $d$ increases. In particular, for each $d\in\{3,5,10\}$, the ratios $\var(f_i,P_\omega)/\var(f_i,\bP_\varpi)$ ($i\in\{1,2,3\}$) tend to stabilize to a limiting value as $\sigma$ decreases and, more quantitatively, it can be checked that the limiting value is always less than $2/d$, as is the case for Example \ref{ex1} in the small noise regime (see Propositions \ref{prop:hypercube:1} and \ref{prop:hypercube:3}). Recall that the proposal kernel variance plays a huge role in those results. It may be noted that, for $d=3$, results obtained by the random-scan kernel tuned with the proposal kernel variance estimated on the fly with an adaptive rule are very poor compared to those obtained by the same mechanism but using the locally-weighted kernel. For reasons mentioned in the footnote of page 13, when $d>3$ the adaptively tuned proposal variance obtained by the random-scan is too small to observe convergence of the random-scan MCMC estimator in a reasonable time frame and results are obtained by comparing $P_\omega$ and $\bP_\varpi$ both fitted with the locally-weighted proposal variance. We thus believe that even though random-scan is outperformed by the locally-weighted kernel in each scenario, results obtained with $d\in\{5,10\}$ are actually flattering for the random-scan approach. Indeed, without the locally-weighted exploration strategy the proposal variance of $P_\omega$ could simply not have been tuned efficiently. Perhaps the most encouraging observation for a practical locally-weighted implementation, is that replacing $\bP_\varpi$ by $\bP_L$ hardly deteriorates the improvement, at least  if the noise level is sufficiently low, relative to $d$. As expected,  the performance of $\bP_L$ increases with $L$ the number of auxiliary particles but it is interesting to note that in the small noise regime few particles are indeed necessary to mimic the canonical choice $\varpi$ of  Eq. \eqref{eq:ex8:weights}: for instance $\bP_\varpi$ and $\bP_{10}$ are comparable when $d=5$, $\sigma^2=1/1000$ and $\bP_{10}$ even dominates $\bP_\varpi$ when $d=10$, $\sigma^2=1/10000$. Indeed, the particular choice of weight function is by no mean optimal.

\item \textbf{In terms of convergence time}: An empirical convergence analysis is carried out and Figure \ref{fig:ex7_cv} reports the results. While it is difficult to estimate the asymptotic rate of convergence from such experiments, those results may allow to gain insight about the transient phase, and thus the mixing time, of both random-scan and locally-weighted Markov chains. All the MCMC algorithms were initialized with $\pi_0=\norm_d(\mu_1,\mathrm{Id}_d)$ and Figure \ref{fig:ex7_cv} illustrates how fast the Markov chains traverse through the different ``edges'' of the distribution.  Interestingly, in the small noise regime and when both locally-weighted and random-scan are tuned with the same proposal variance, the convergence is sped up by a factor $d/2$ when using the locally-weighted algorithm instead of the random-scan.  This observation may be linked to the results obtained in the context of Example \ref{ex1} and Proposition \ref{prop:hypercube:1} in particular. However, we surmise that for each $d$ there exists a certain noise threshold $\sigma_d^\ast$ above which there is no benefit in using a locally-weighted algorithm, see Figure \ref{fig:ex_8_cv_2} for $d=3$ and $\sigma_3^\ast\approx 1/10$. Of course, $\sigma_d^\ast$ should increase with $d$.  Finally, regarding computational considerations, we have the following ordering on the computational cost $\text{cost}(P_\omega)<\text{cost}(\bP_\varpi)<\text{cost}(\bP_{L})$, the later being an increasing function of $L$, see Table \ref{tab:cpu} for quantitative comparisons. Figure \ref{fig:ex_8_cv_3} shows that when $\pi_{d,\sigma}$ is not sufficiently sparse and filamentary, the convergence of $\bP_{10}$ is slower (in wall-clock time) than $P_\omega$ (see the case $d=3$, $\sigma^2=1/100$). However, we stress that with a fully parallel architecture, the computational cost of $\bP_L$ is equivalent to $\bP_1$. The main conclusion of this experiment is that in the small-noise regime, the practical locally-weighted algorithm $\bP_L$ scales much better than its random-scan counterpart with the dimension $d$ of this sampling problem, taking into account the computational time and noting that, by contrast to $\bP_\varpi$, both $\bP_L$ and $P_\omega$ do not need any additional information on $\pi_{d,\sigma}$.
\end{itemize}

\begin{figure}
\centering
\includegraphics[scale=0.5]{./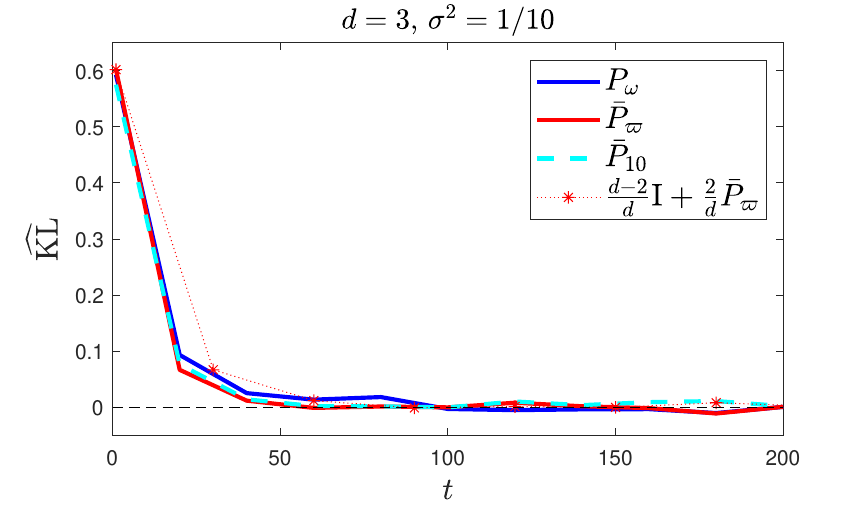}\includegraphics[scale=0.5]{./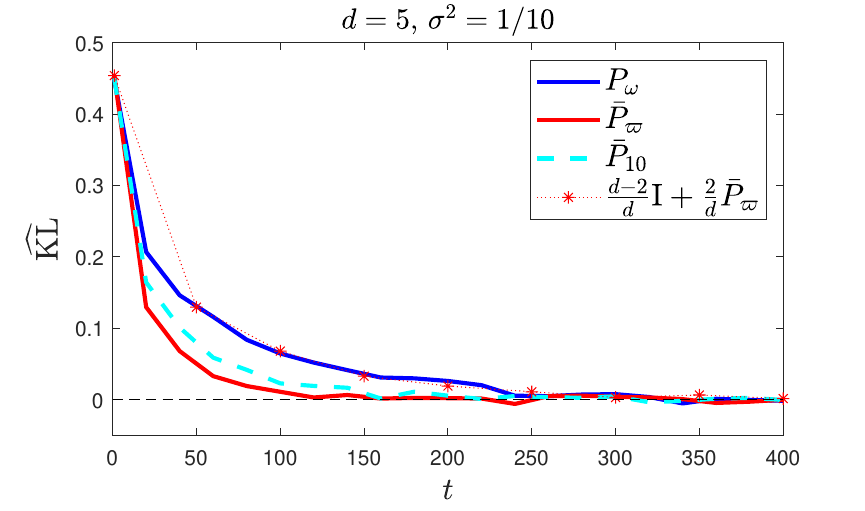}
\includegraphics[scale=0.5]{./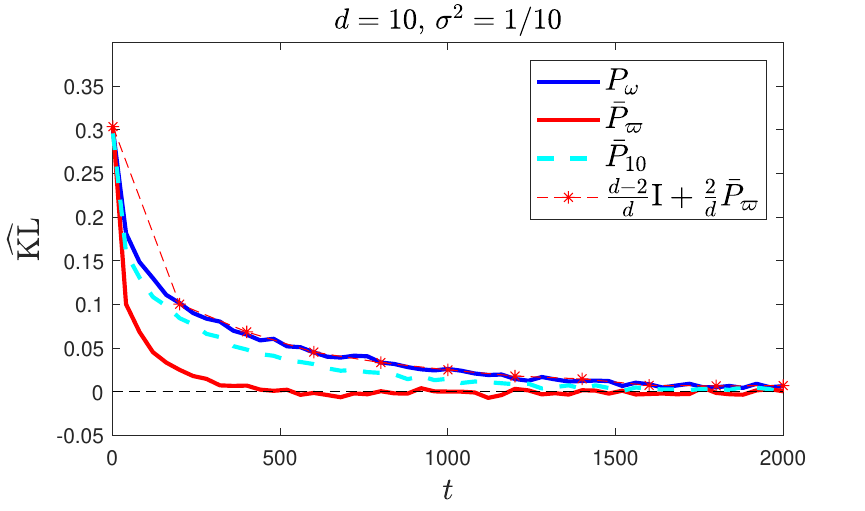}
\caption{Example \ref{ex7}. Distributional convergence of $\{\pi_0 P^t\}$ from $\pi_0=\norm_d(\mu_1,\mathrm{Id}_d)$ towards $\pi_{d,\sigma}$ for a given noise level $\sigma^2=1/10$ when $d$ increases. The proposal variance of $P_\omega,\bP_\varpi$ and $\bP_{10}$ were tuned independently according to a preliminary adaptive MCMC run for each kernel and for each $d\in\{3,5,10\}$. We see that the improvement of $\bP_\varpi$ relative to $P_\omega$ increases with $d$. However, while $\bP_{10}$ is similar to $\bP_\varpi$ for $d=3$, as $d$ increases, $\bP_{10}$ does not keep up with $\bP_\varpi$ and is in fact closer to $P_\omega$ when $d=10$. This means that $L=10$ particles are not enough to retain the locally-weighted advantage and that the estimated weight function is essentially uniform.   \label{fig:ex_8_cv_2}}
\end{figure}

\begin{table}
\centering
\begin{tabular}{ccccc}
\hline
  $d$ & $\mathcal{T}(\bP_{\varpi})/\mathcal{T}(P_{\omega})$ &$\mathcal{T}(\bP_{1})/\mathcal{T}(P_{\omega})$& $\mathcal{T}(\bP_{10})/\mathcal{T}(P_{\omega})$ & $\mathcal{T}(\bP_{100})/\mathcal{T}(P_{\omega})$\\
  \hline
  $3$ &  $1.60$ & $2.73$ & $2.88$ & $3.52$\\
  $5$ &  $1.50$ & $2.29$ & $2.45$ & $6.21$\\
  $10$ & $1.39$ & $1.98$ & $3.37$ & $9.63$\\
  \hline
\end{tabular}
\caption{Example \ref{ex7}. Comparison of the runtime of a single iteration of  the locally-weighted algorithms $\mathcal{T}(\bP_{\varpi})$, $\mathcal{T}(\bP_{1})$, $\mathcal{T}(\bP_{10})$ and $\mathcal{T}(\bP_{100})$, relatively to a single iteration of  the random-scan $\mathcal{T}(P_{\omega})$.\label{tab:cpu}}
\end{table}

\begin{figure}
\centering
\includegraphics[scale=0.5]{./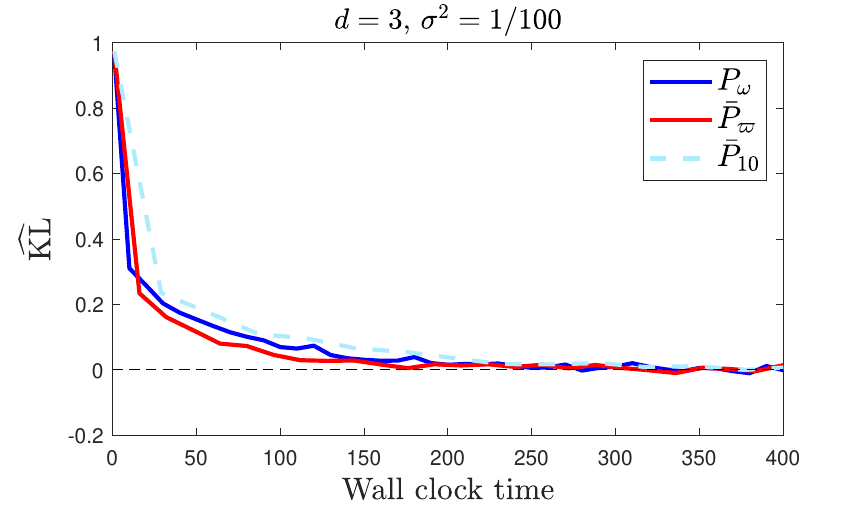}\includegraphics[scale=0.5]{./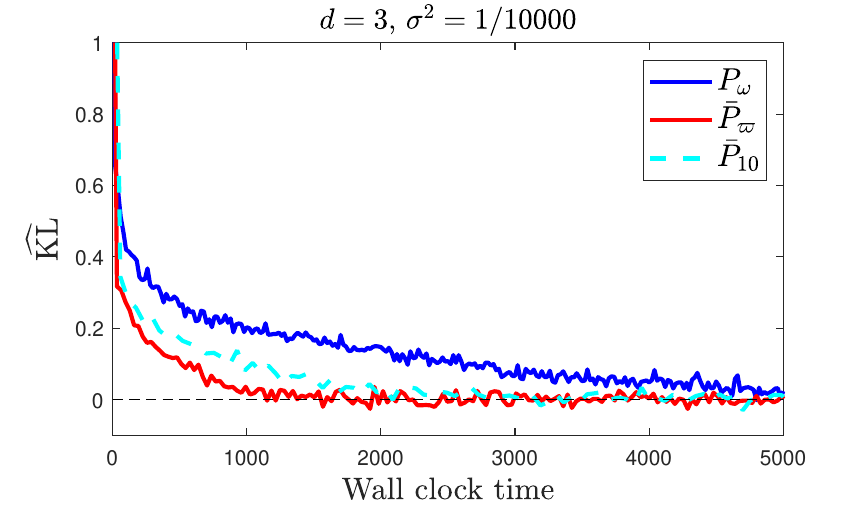}
\includegraphics[scale=0.5]{./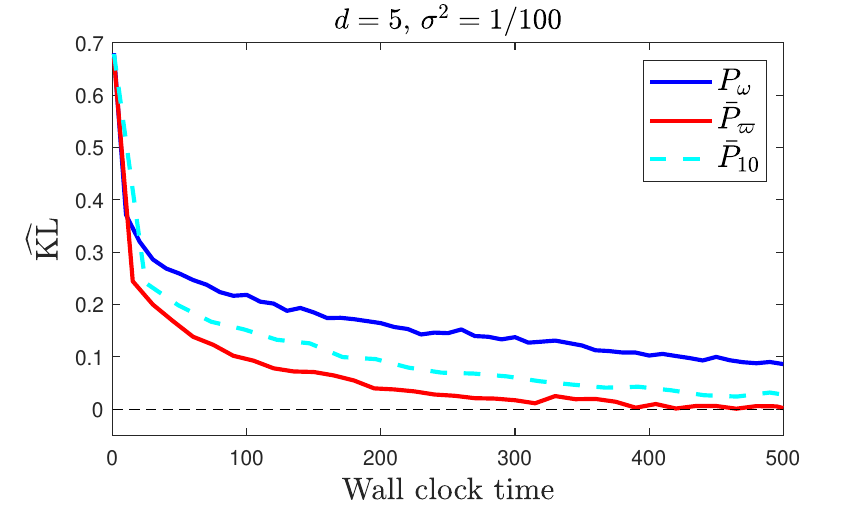}\includegraphics[scale=0.5]{./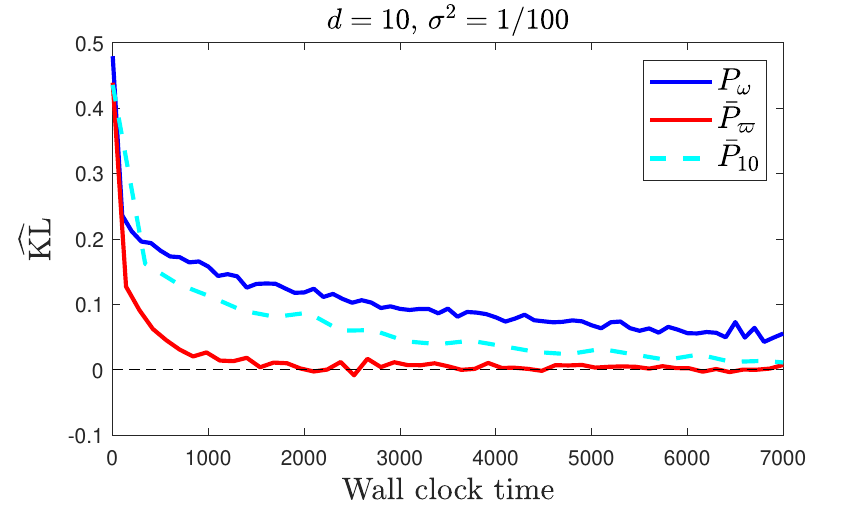}
\caption{Example \ref{ex7}. Distributional convergence of $\{\pi_0 P^t\}$ from $\pi_0=\norm_d(\mu_1,\mathrm{Id}_d)$ towards $\pi_{d,\sigma}$ in time-normalized experiments. In those experiments, $\bP_\varpi$ is always best but it gets better relatively to $P_\omega$, when $\sigma$ diminishes while $d$ is constant  (first row)  and when $d$ increases while $\sigma$ is constant (second row). The same is not true for $\bP_{10}$: first, it may be worse than $P_\omega$ if the ratio $d/\sigma$ is not large enough (upper left-hand corner). Second, when $d$ increases with $L$ and $\sigma$ constant, it is not clear whether or not the improvement of $\bP_L$ relative to $P_\omega$ increases (second row).
\label{fig:ex_8_cv_3}}
\end{figure}

\subsection{Sound source localization}
\begin{exa}\label{ex9}
This example is borrowed from \cite{forbes2021approximate}. We aim at inferring the unknown position $\theta=(\theta_1,\theta_2)\in\Theta=[-a,a]^2$  $(a>0)$ of the source of a sound based on the signal recorded by two pairs of microphones: one pair at $\mu_1=(-b,0)$ and $-\mu_1=(b,0)$ and the other one at $\mu_2=(0,-b)$ and $-\mu_2=(0,b)$ for some $b\in(0,a)$. The prior on $\theta$ is the uniform distribution on $\Theta$. To localize the position of the source, the Interaural Time Difference (ITD) is used so that the observed data is the recorded ITD. For a given pair of microphones, the ITD is the time lag between the detection of the sound signal by both microphones of that pair. In our setup, we assume that the ITD is measured in equal probability with either pairs. Hence knowing that sound waves propagate in the air at a constant speed denoted by $c_s$, the ITD (for the pair $(\mu_1,-\mu_1)$ and given that the sound source location is $\theta$ ) is simply given by
$$
\text{ITD}_1(\theta)=\frac{1}{c_s}\bigg|\|\theta-\mu_1\|_2-\|\theta+\mu_1\|_2\bigg|
$$
and similarly for the pair $(\mu_2,-\mu_2)$, see \cite{forbes2021approximate} for more details. To account for measurement errors in ITD, we assume that it is observed under a small additive Student noise and the observed data model is
$$
Y=\text{ITD}_J(\theta)+\sigma\varepsilon\,,\qquad J\sim\text{Unif}\{1,2\}\,,\quad \varepsilon\sim\text{Student}(\nu)\,.
$$
In our experiment we have used the parameters $a=1$, $b=0.22$, $\sigma^2=10^{-5}$, $\nu=3$ and $Y$ was drawn from the likelihood model using the true location $\theta^\ast=(0.75,0.25)$ (the outcome of that draw was $Y=0.132)$. The corresponding posterior distribution denoted $\pi$ is illustrated at Figure \ref{fig:sound1}. In such a setting the mass of the posterior distribution concentrates around one-dimensional manifolds as $\sigma$ goes to zero and thus we expect a locally-weighted MCMC to perform better than a random-scan approach, provided that the directions supporting the manifold can be identified.
\end{exa}

\begin{figure}
\centering
\includegraphics[scale=0.25]{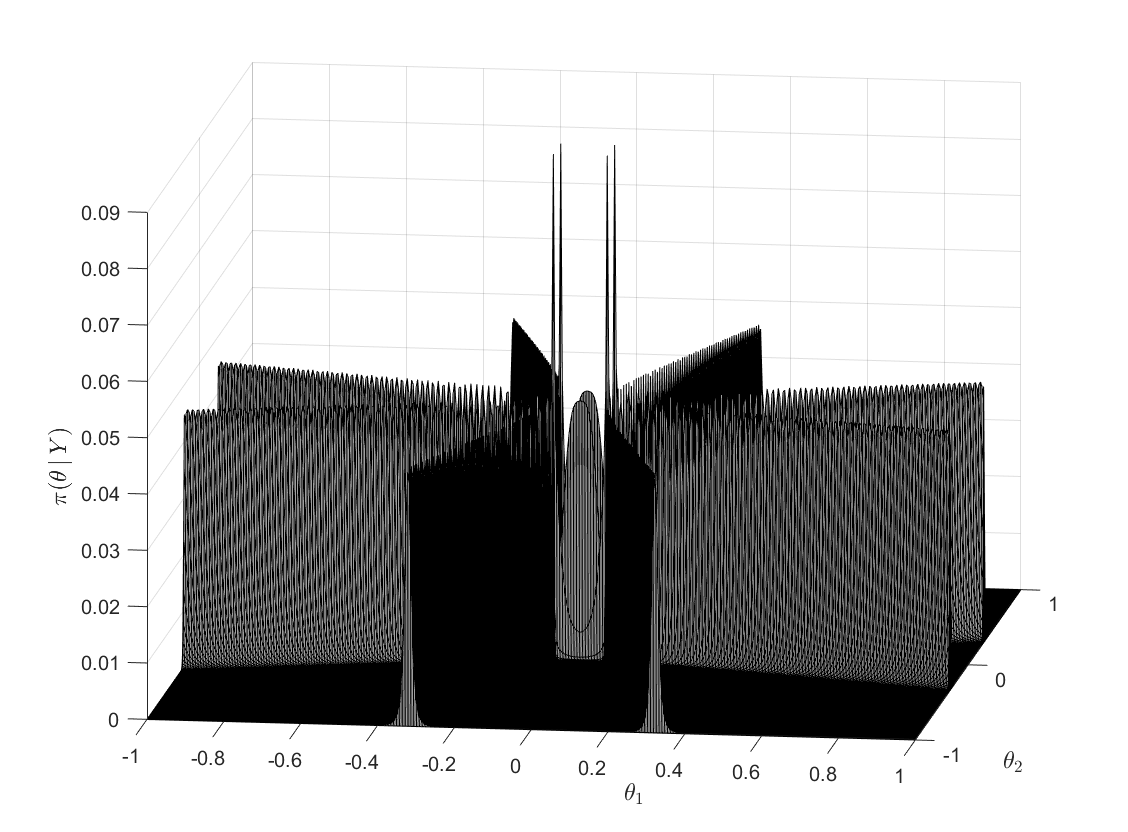}\includegraphics[scale=0.5]{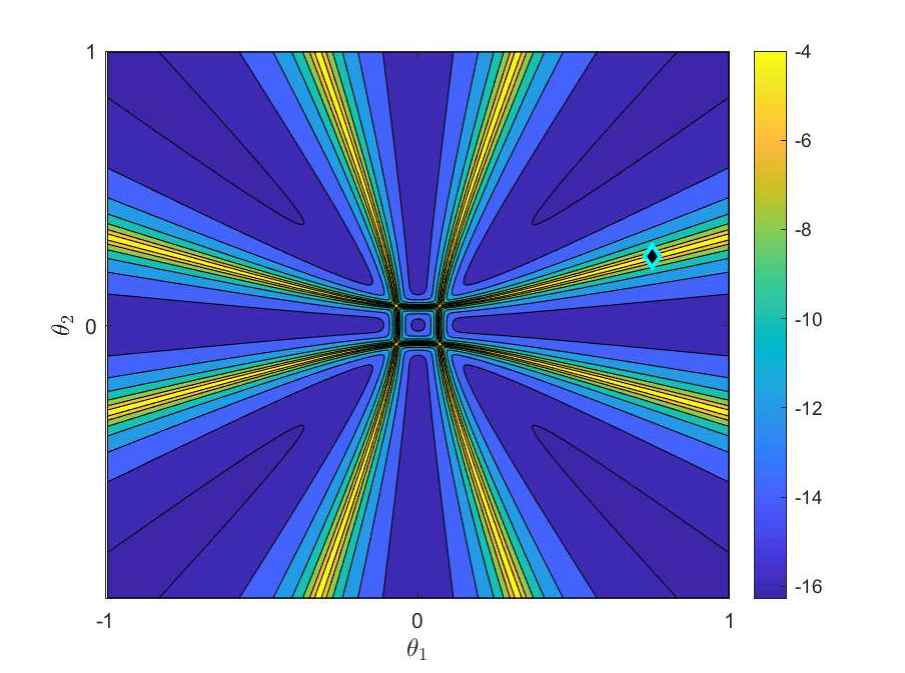}
\caption{Example \ref{ex9}. Illustration of the posterior distribution $\pi(\theta\,|\,Y)$ (left) and contour lines of the function $\theta\mapsto\log\pi(\theta\,|\,Y)$, the black diamond gives the location of the true source $\theta^\ast$  (right).\label{fig:sound1}}
\end{figure}

\paragraph*{Setup.} We do not consider here the locally-weighted kernel $\bP_\varpi$ as defining a relevant weight function $\varpi$ appears difficult without further knowledge on $\pi$. We thus compare the random-scan kernel $P_\omega$ to the locally-weighted kernel $\bP_L$, for several values of $L$. To identify the manifold directions, a na\"ive random-walk Metropolis-Hastings was first run and reasonable directions were obtained by doing a local linear regression on the resulting samples: four directions were identified which, combined with the two canonical ones, resulted in the definition of $n=6$ proposal kernels available to both the random-scan and locally-weighted samplers. The proposal variance of each proposal $Q_i$ was adaptively tuned for both $P_\omega$ and $\bP_L$, as in Example \ref{ex7}. Furthermore, the selection probability $\omega$ of the random-scan algorithm was set to the uniform distribution on $\{1,2,\ldots,6\}$.

\paragraph*{Results.} Figure \ref{fig:ex_8_KLcv} shows that even when taking into considerations the additional computational complexity of $\bP_L$ relatively to $P_\omega$, the locally-weighted algorithms converge faster than the random-scan approach, in the KL divergence sense. In particular, $\bP_{10}$ is more than 3 times faster than $P_\omega$. We note that the large computational budget of $\bP_{1000}$ makes it barely faster than $P_\omega$. Table \ref{tab:cpu:ex8} provides the relative computational complexity $\mathcal{T}(\bP_L)/\mathcal{T}(P_\omega)$ for several $L$, where for any Markov kernel $P$, $\mathcal{T}(P)$ is the computational cost of one iteration of $P$. In particular $\bP_{10}$ is about 3 times more expensive to run than $P_\omega$, implying that for a given runtime one would obtain 3 times more samples by using $P_\omega$ rather than $\bP_{10}$. Hence if computation is not an issue, $\bP_{10}$ does converge roughly 10 times as fast as $P_\omega$ in the KL divergence sense, rather than $3.3$ shown in the time normalized experiment of Figure \ref{fig:ex_8_KLcv}.

Table \ref{tab:ex_8_avar} provides an insight of the asymptotic variance of the locally-weighted and random-scan samplers for three test functions, $f_1(x)=\1_{\{x_1<0.2\}}$, $f_2(x)=\|x\|_2^2$ and $f_3(x)=x_2^2$. For each $L\in\{1,10,100,1000\}$, the left subcolumn gives an estimation of ratio \allowbreak $\var(\bP_L,f)/\var(P_\omega,f)$ and the right subcolumn adjusts those ratios by taking into account the different computational complexity of each kernel. Indeed, if a time budget $\tau$ is available, then for any kernel Markov $P$ the number sample produced  is given by $T(P)=\lfloor\tau/\mathcal{T}(P)\rfloor$. Thus the variance of the MCMC estimator obtained with a time budget $\tau$ is, provided that $\tau$ is large enough, roughly equal to $\var(P,f)\mathcal{T}(P)/\tau$. The right subcolumns are thus given by $\var(\bP_L,f)\mathcal{T}(\bP_L)/\var(P_\omega,f)\mathcal{T}(P_\omega)$, i.e. the product of each left subcolumn of this Table by the corresponding entry of Table \ref{tab:cpu:ex8}. This shows that, without considering the computational aspect of the samplers, $\bP_L$ dominates $P_\omega$ for each function considered and, even though $\bP_{1000}$ hardly provides any improvement over $\bP_{100}$, the larger $L$ the better. Taking into account the computational time, $\bP_1$, $\bP_{10}$ and $\bP_{100}$ offer better results than $P_\omega$ and more precisely $\bP_{10}>\bP_{1}>\bP_{100}$, while $\bP_{1000}$ may even be worse than $P_\omega$ for some functions due to its overwhelmingly large computational budget. Again, we stress that the computational efficiency of $\bP_L$ could be greatly improved using a parallel implementation for the weight function learning step and thus results given at Figure \ref{fig:ex_8_KLcv} and in the right subcolumns of Table \ref{tab:cpu:ex8} are in fact flattering for the random-scan approach.

\begin{figure}
\centering
\includegraphics[scale=0.6]{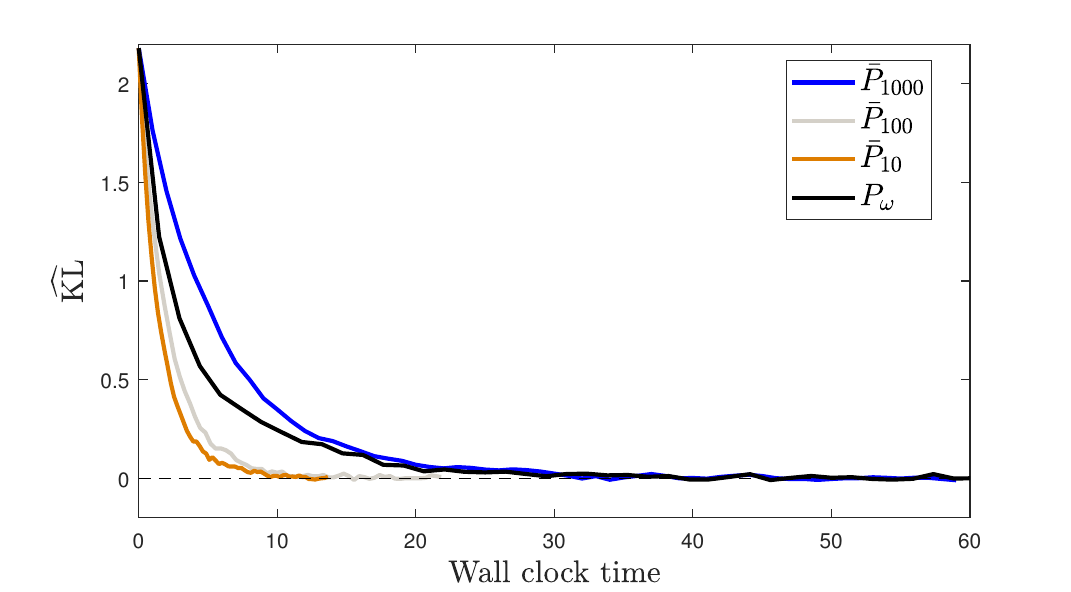}
\caption{Example \ref{ex9}. Convergence of the random-scan kernel $P_\omega$ and several locally-weighted algorithm $\bP_L$, in function of the wall clock time, starting from $\nu=\text{unif}([-a\,;\,a]^2)$.\label{fig:ex_8_KLcv}}
\end{figure}

\begin{table}
\centering
\begin{tabular}{cccc}
\hline
$\mathcal{T}(\bP_{1})/\mathcal{T}(P_{\omega})$& $\mathcal{T}(\bP_{10})/\mathcal{T}(P_{\omega})$ & $\mathcal{T}(\bP_{100})/\mathcal{T}(P_{\omega})$& $\mathcal{T}(\bP_{1000})/\mathcal{T}(P_{\omega})$\\
  \hline
  $2.26$ & $2.40$ & $4.25$ & $16.64$\\
  \hline
\end{tabular}
\caption{Example \ref{ex9}. Comparison of the runtime of a single iteration of  the locally-weighted algorithms $\mathcal{T}(\bP_{1})$, $\mathcal{T}(\bP_{10})$, $\mathcal{T}(\bP_{100})$ and $\mathcal{T}(\bP_{1000})$, relatively to a single iteration of  the random-scan $\mathcal{T}(P_{\omega})$.\label{tab:cpu:ex8}}
\end{table}

\begin{table}
\begin{tabular}{ccccc}
\hline
 & \multicolumn{2}{c}{$\var(\bP_1,f)/\var(P_\omega,f)$}    & \multicolumn{2}{c}{$\var(\bP_{10},f)/\var(P_\omega,f)$} \\\hline
 $f_1$ &$0.019$ & $0.047$ & $0.013$& $0.038$ \\
 $f_2$ &$0.185$ & $0.464$ & $0.117$& $0.340$  \\
 $f_3$ & $0.096$& $0.241$ & $0.071$& $0.207$ \\
 \hline
\end{tabular}
\begin{tabular}{ccccc}
\hline
 &    \multicolumn{2}{c}{$\var(\bP_{100},f)/\var(P_\omega,f)$}  & \multicolumn{2}{c}{$\var(\bP_{1000},f)/\var(P_\omega,f)$}\\\hline
 $f_1$ & $0.010$& $0.051$ & $0.010$& $0.204$ \\
 $f_2$ & $0.094$& $0.480$ & $0.091$& $1.856$\\
 $f_3$ &  $0.068$& $0.348$ & $0.071$& $1.451$\\
 \hline
\end{tabular}
\caption{Example \ref{ex9}. Ratio of the asymptotic variance of several locally-weighted MCMC estimator of $\pi f$ for three test functions, $f_1(x)=\1_{\{x_1<0.2\}}$, $f_2(x)=\|x\|_2^2$ and $f_3(x)=x_2^2$, relative to that of the random-scan MCMC estimator. For each $L$, the left subcolumn gives the estimation of the asymptotic variance ratios while the right subcolumn gives the ratio of the MCMC estimator in time-normalized experiments. \label{tab:ex_8_avar}}
\end{table}

\subsection{Messenger RNA-based transfection studies using GFP}
\begin{exa}
In molecular biology, messenger RNA (mRNA) can be used to deliberately transfer genes to a cell, a step known as transfection. Understanding the delivery and the kinetics of transfection has important implications in medical biotechnology, including COVID-19 vaccines. The green fluorescent protein (GFP), derived from the jellyfish, is a reliable marker which allows to visualize spatial and temporal patterns of gene expression in vivo, see \cite{soboleski2005green} and \cite{leonhardt2014single} for further details on the use of GFP to monitor mRNA-based transfection. The statistical model proposed in \cite{leonhardt2014single} is a two-state ordinary differential equation with unknown functions $m$ and $G$ giving the number of mRNA and GFP molecules, respectively. This system of ODE is parameterized by $\kappa$ the translation rate of mRNA to GFP, $\beta$ and $\delta$ the degradation rate of GFP and mRNA, respectively, such that:

\begin{equation}
\left\{
\begin{array}{l}
\dfrac{\rmd G}{\rmd t}=\kappa m-\beta G\,, \vspace{0.25cm}\\
\dfrac{\rmd m}{\rmd t}=-\delta m\,.
\end{array}
\right.
 \label{eq:exRNA}
 \end{equation}
The observed data $\{Y_1,Y_2,\ldots,Y_r\}$ are obtained from $r$ repeated experiments, each of which consists of $p=180$ GFP measurements at regular time intervals over a period of 30 days. In particular, for each experiment $i\in\{1,\ldots,r\}$, $Y_i=(Y_{i,1},\ldots,Y_{i,p})$ and $Y_{i,j}$ is the number of GFP molecules measured in experiment $i$ at time-step $\tau_j$, with $\tau_p=30$ days. The statistical model is given by
\begin{multline*}
(i,j)\in\{1,\ldots,r\}\times\{1,\ldots,p\}\,,\qquad Y_{i,j}=G(\tau_j)+\sigma\varepsilon_{i,j}\,,\\
\qquad (\varepsilon_{1,1},\ldots,\varepsilon_{1,p},\ldots,\varepsilon_{r,p})\sim_{\iid}\norm(0,1)\,.  
\end{multline*}
The objective is to estimate the model parameters $\beta$, $\delta$ and $\kappa$, conditionally on $\{Y_1,\ldots,Y_r\}$ in order to infer the mRNA-based transfection procedure. In this particular case, Eq. \eqref{eq:exRNA} can be solved so that $G$ is known. The data used in this Example were first published in \cite{frohlich2018multi} \footnote{The data are available at \url{https://doi.org/10.5281/zenodo.1228898} under the license CC BY-SA 4.0. The data consist in the first 20 GFP trajectories extracted from the file \texttt{code/project/data/20160427\_mean\_eGFP.xlsx} which were normalized so as to range in $[0,1]$.}.
 Furthermore, we fix the noise parameter in Eq. \eqref{eq:exRNA} to $\sigma=5$ and proceed with the estimation of the other parameters in log-scale, i.e  $X:=[\log\beta,\,\log\delta,\,\log\kappa]$. The prior distribution on $X$ is Gaussian with mean parameter $[2\,,2\,,2]$ and covariance matrix $10\mathrm{I}_3$ with the constraint that $X_i<0$ for $i\in\{1,2,3\}$ since all parameters are rate parameters.
Two conditional posterior distributions are represented as heat map at Figure \ref{fig:exRNA_1}. \label{ex3}
\end{exa}

\paragraph*{Setup.}
It is well known that in such ODE-based models, identifiability issues regarding the degradation rate parameters may arise. To take this into account, in addition of $Q_0$ defined as a Gaussian proposal kernel with diagonal covariance matrix $\Delta_0$, which can be seen as a ``na\"ive'' MH proposal kernel, we consider the following collection of proposal kernels. For any even $n$ and $i\in\{1,\ldots,n\}$, let $Q_{i,n}$ be the Gaussian proposal kernel with covariance matrix given by $\Delta_i H_{i,n} H_{i,n}\T$ where $\Delta_i$ is a scaling diagonal matrix and $H_{i,n}$ the orthogonal projection matrix onto the space spanned by $[0\,,0\,,1]$ and $[-\sin(\vartheta_{i,n}),\,\cos(\vartheta_{i,n}),\,0]$ with $\vartheta_{i,n}=\pi(i-1)/n$. In other words, while $Q_0$ perturbs the current state in a three dimensional spaces, each $Q_{i,n}$ operates on a specific two-dimensional plane, capitalizing on the correlation structure of $\beta$ and $\delta$ (or the lack thereof). In particular, for any even $n$, $Q_{1,n}$ only allows perturbations of parameters $\beta,\kappa$ while $Q_{n/2,n}$ only allows perturbations of parameters $\delta,\kappa$. Figure \ref{fig:exRNA_2} illustrates how those degenerate Gaussian kernels effectively discretize the three dimensional space into a collection of two dimensional planes which appear more adapted to explore the posterior geometry. As $n$ increases, the hyperbolic shape typical to the conditional posterior of $(\log\beta,\log\kappa)$ (see the LHS of Figure \ref{fig:exRNA_1}) could be reasonably well approximated by a mixture of those Gaussian kernels. To summarize, the case $n=0$ corresponds to the na\"ive Random Walk Metropolis--Hastings algorithm, while for any even number $n>0$, $n+1$ proposal kernels are available $Q_0,Q_{1,n},Q_{2,n},\ldots,Q_{n,n}$. In the following, we compare how the random-scan and locally-weighted kernels exploit this family of proposals.

\paragraph*{Weights.} For now, assume that $n$ is fixed and dependance on $n$ is made implicit in notations. Given the symmetry of the problem, a uniform weight is set for the random-scan kernel $P_\omega$, i.e. $\omega=(\omega_0,\ldots,\omega_n)$ with $\omega_i=1/(n+1)$ for all $i\in\{1,\ldots,n\}$.  A numerical analysis of the posterior distribution shows that when $|x_1|$ (resp. $|x_2|$) gets larger, the posterior variability of $X_2$ (resp. $X_1$) given $X_1=x_1,X_3$ (resp. $X_2=x_2,X_3$) decreases significantly to present a Gaussian look centered between $-2$ and $-3$ but that the corresponding posterior densities decreases much more slowly. As a consequence, one would seek to assign a larger probability to the two kernels that update only $X_1$ (resp. only $X_2$) when $|X_1|$ (resp. $|X_2|$) is very large. While such an implementation is impossible for random-scan samplers, a locally-weighted algorithm allows it by setting the weight function $\varpi(x)=(\varpi_0(x),\ldots,\varpi_n(x))$ as
\begin{equation}\label{eq:exRNA:weight}
\varpi_i(x)\propto \left\{\begin{array}{cc}
1&\text{if } i\in\{0,\ldots,n\}\backslash\{1,n/2\}\,,\\
1\vee \left(x_1+2.5\right)^2&\text{if }i=1\,,\\
1\vee \left(x_2+2.5\right)^2&\text{if }i=n/2\,.\\
\end{array}\right.
\end{equation}
While not arbitrary, this choice of weight function is not optimal in any sense and may be improved upon. In particular, it can be noted that the locally-weighted feature only appears when the Markov chain visits or attempts visiting a tail region defined as $\{x:|x_1-(-2.5)|>1 \;\text{or}\;|x_2-(-2.5)|>1 \}$ as otherwise the selection probability is uniform. Nevertheless, as simple as it is, this choice of weight function leads, as we shall see, to a significant improvement over random-scan strategies.

\paragraph*{Tuning parameters.}
There are two tuning parameters: the number of degenerate proposal kernels $n$ and the proposal scaling parameters $\Delta_{0},\Delta_1,\ldots,\Delta_n$. The scaling parameters were tuned on the fly using an adaptive MCMC strategy until the acceptance rate of each kernel stabilizes between 30\% and 40\%. Different values of $n$ were considered with $n\in\{0,2,4,6,10,14,20\}$. For each algorithm the value of $n$ retained is the one that maximizes the Expected Squared Jump Distance (ESJD). As shown at Figure \ref{fig:exRNA_esjd},  $n=2$ is best (on average) for random-scan and $n=10$ is best (on average) for the locally-weighted strategy.

\paragraph*{Results.} Two types of results are reported: Figure \ref{fig:exRNA_KL} compares the convergence properties of locally-weighted (with $n=10$), random-scan (with $n=2$) and na\"ive MH (with $n=0$) when the initial distribution is set as the prior distribution. In Table \ref{tab:exRNA_var}, the asymptotic performance of those three algorithms are reported, for four test functions. Both illustrations show the significant gain in statistical efficiency offered by the above-specified locally-weighted strategy, which could be even further optimized by a more sophisticated choice of weight function $\varpi$, over a sensible random-scan alternative.

\begin{figure}[h]
\centering
\includegraphics[scale=0.3]{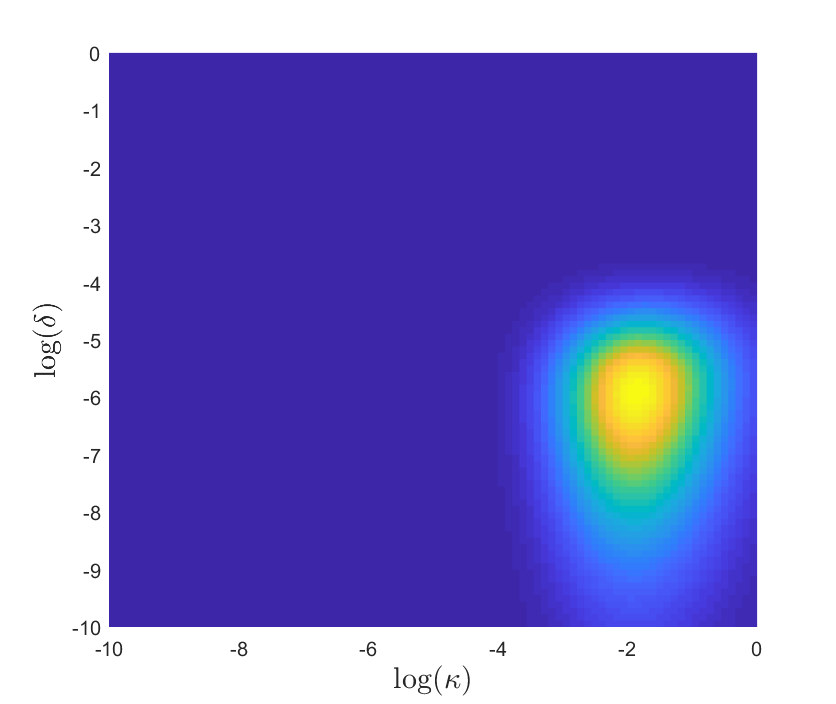}\includegraphics[scale=0.3]{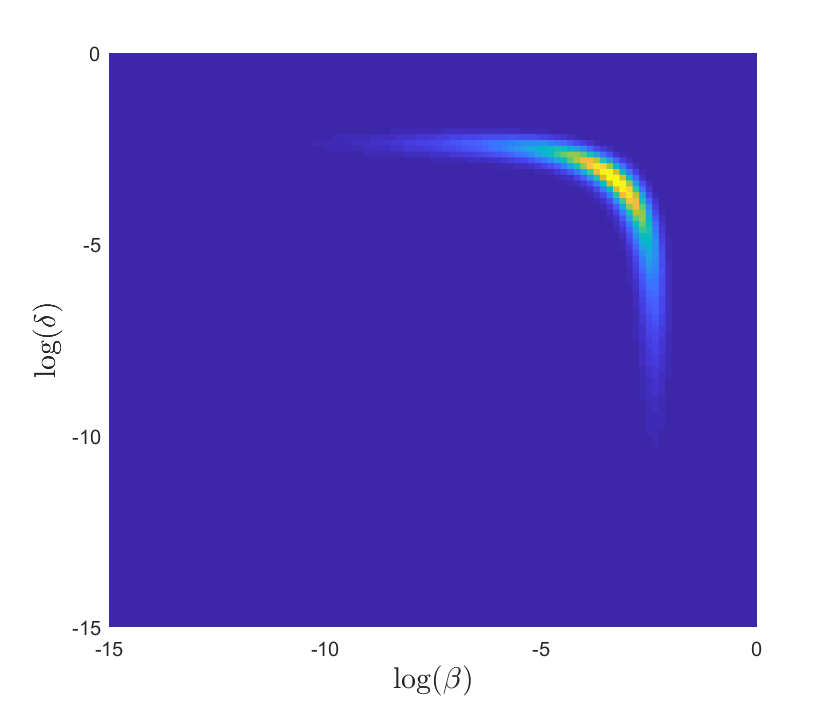}
\caption{Example \ref{ex3}.  Heat map of two conditional posterior distributions, $\pi(\log\beta,\log\kappa\,|\,Y,\log\delta=-3.51)$ and $\pi(\log\beta,\log\delta\,|\,Y,\log\delta=-2.30)$.}
\label{fig:exRNA_1}
\end{figure}

\begin{figure}[h]
\hspace{-1cm}
\includegraphics[scale=0.35]{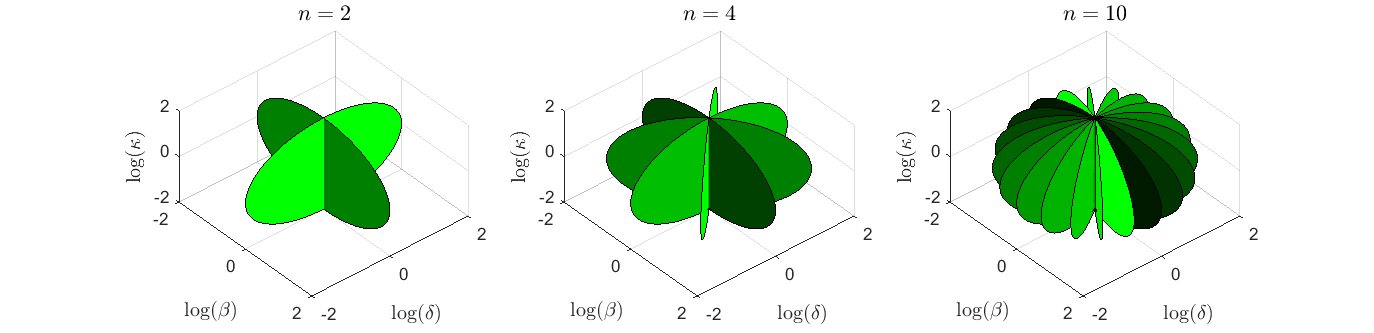}
\caption{Example \ref{ex3}.  Representation of the support of the proposal kernels $Q_{1,n}(\mathbf{0}_3,\cdot),\ldots,Q_{n,n}(\mathbf{0}_3,\cdot)$, for $n\in\{2,4,10\}$ by mean of disks (the support of those degenerate Gaussian kernels is unbounded).}
\label{fig:exRNA_2}
\end{figure}

\begin{figure}
\hspace{-3cm}
\includegraphics[scale=0.67]{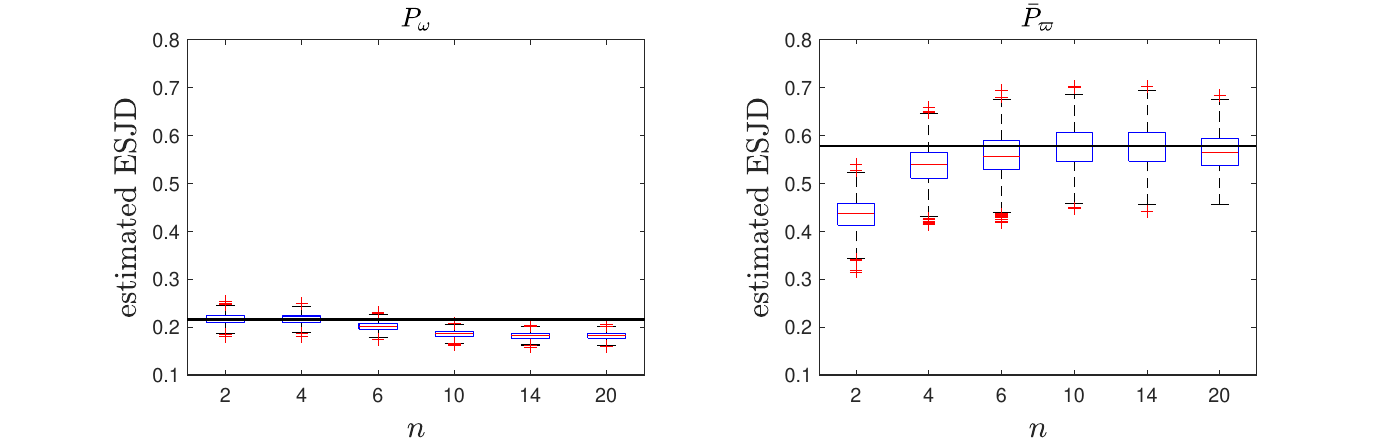}
\caption{Example \ref{ex3}. Expected Squared Jumping Distance (ESJD) for the two chains, random-scan on the LHS and locally-weighted on the RHS. Results are based on the replication of 10,000 Markov chains each with time horizon $T=10,000$, started from the prior distribution. The jumps size of $P_\omega$ (resp. $\bar{P}_\varpi$) is (on average) maximized for $n=4$ (resp. for $n=10$). Note that the choice of $n=4$ (resp. $n=14$) leads to similar results but the smaller value is retained for parsimony. We note that the average jumps size achieved by $\bar{P}_\varpi$ (with $n=10$) is on average 2.7 larger than that of $P_\omega$ (with $n=2$). By comparison, the ESDJ achieved for $n=0$ i.e. only $Q_0$ is used (both algorithms coincide in this case),  is about $14\%$. \label{fig:exRNA_esjd}}
\end{figure}

\begin{figure}
\centering
\includegraphics[scale=0.72]{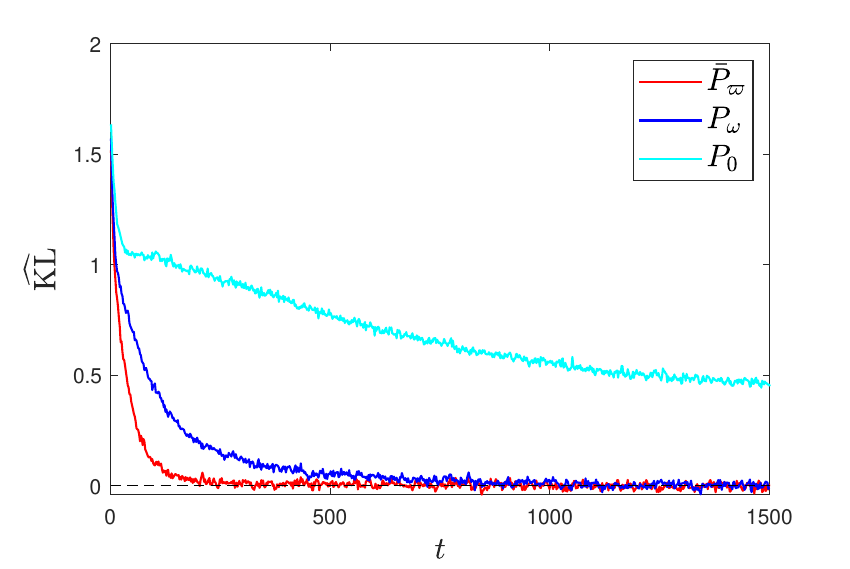}
\caption{Example \ref{ex3}.\label{fig:exRNA_KL} Illustration of the convergence of the locally-weighted $\bar{P}_\varpi$ (with $n=10$), the random-scan ${P}_\omega$  (with $n=2$) and the na\"ive MH $P_0$ (with $n=0$) from the prior distribution to the posterior, as estimated by the Kullback-Leibler divergence between $\pi$ and $\text{prior}\times P^t$ with $P\in \{\bar{P}_\varpi,P_\omega,P_0\}$. Results obtained from $3,000$ independent replications of the three Markov chains. Empirically, locally-weighted reaches at $t\approx 300$ a neighborhood of $\pi$ after which no visible improvement can be seen on this plot. This same neighborhood is reached at $t\approx 1050$ by random-scan. The na\"ive MH remains far off the stationary regime even after $t=1,500$. }
\end{figure}

\begin{table}
\centering
\begin{tabular}{cccccc}
\hline
& $f_1(x)=x_1$ &  $f_2(x)=x_3$ & $f_3(x)=\|x\|$ & $f_4(x)=\1_{\{x_1<-10\}}$ \\
\hline
${\var(\bP_\varpi,f)}/{\var(P_\omega,f)}$ & $0.857$ & $0.300$ & $0.353$ & $0.198$ \\
${\var(\bP_\varpi,f)}/{\var(P_0,f)}$ & $0.448$ & $0.068$ & $0.140$ & $0.074$\\
\hline
\end{tabular}
\caption{Example \ref{ex3}. First row shows the asymptotic variance ratio between locally-weighted $(n=10)$ and random-scan $(n=2)$, for four test functions. Second row shows the asymptotic variance ratio for the same functions between locally-weighted $(n=10)$ and the na\"ive MH kernel $(n=0)$. Results were obtained from $1,000$ replications of the three Markov chains started (approximately) at stationarity with time horizon $T=10,000$. It was checked that for that time horizon, all MCMC estimates had reached their normal asymptotic regime, for each test function. \label{tab:exRNA_var}}
\end{table}

\section{Conclusion}

A general way of aggregating $n$ reversible Markov kernels (Algorithm \ref{alg1}) or $n$ proposal kernels (Algorithm \ref{alg2}) using a state-dependent selection probability $\varpi\in\Delta_{n-1}^\Xset$ has been proposed. While designing a meaningful $\varpi$ may seem difficult when little prior information on $\pi_\sigma$ is available,  our work has shown that a locally-weighted algorithm using particle-based weights, which does not require any additional knowledge on $\pi_\sigma$ (see Section \ref{sec5:weightfunction}), outperforms significantly a random-scan counterpart, even in time-normalized experiments. This gain can be further increased by considering parallel computing environment for the weights calculation.

In the same range of ideas, considering modern day statistical models where the likelihood function evaluation is computationally very expensive, MCMC methods such as the locally-weighted algorithms proposed in this paper, the locally-balanced \citep{zanella2020informed} or even the Multiple-Try algorithm can take advantage of a parallel implementation. Indeed, taking for simplicity the case of Example \ref{ex1} where $\Xset=\rset^d$ with $n=d$ Gibbs kernels $P_1,\ldots,P_d$ and assuming a noise vanishing context where for each $x\in\Xset$, at most two kernels are adapted to the local geometry of $\pi_\sigma$, parallelization implies that those methods would need $\mathcal{O}(1)$ likelihood evaluations to move significantly on $\pi_\sigma$. This contrasts with uninformed methods that would need $\mathcal{O}(d)$ likelihood evaluations to do so. Thus, in addition to the obvious greater statistical efficiency, those methods also benefit from a better computational efficiency in such a scenario.

Another interesting aspect of our work is to question the search for asymptotical optimality in the design of MCMC samplers. Indeed, while we have seen (Section \ref{ex3}) that the locally-weighted approach can be less efficient than its random-scan counterpart in the asymptotic regime, the former can reach a very good approximation of $\pi_\sigma$ much faster than the latter. The noise vanishing distributions provides a typical class of sampling problems where this phenomenon occurs, sometimes exhibiting a dramatic improvement  as shown by some examples studied in this paper. New tools such as the pseudo-spectral gap \citep{atchade2019approximate} or new analyses based on the drift and minorization approach on large sets \citep{yang2017complexity} may help to develop more precise non-asymptotical error bounds which would formalize that aspect explicitly.

\section{Supplementary Material}
\subsection{Proof of Proposition \ref{prop:hypercube:1}}
\label{proof1}
\begin{proof}
We first recall some basic notions related to discrete Markov chains coupling. Let $\pi$ be a distribution on $(\Xset,\Xalg)$ and two $\pi$-invariant Markov chains $\{X_t\}:=\{X_t,\,t\in\nset\}$ and $\{X'_t\}:=\{X'_t,\,t\in\nset\}$ with the same transition matrix $P$. A joint process $\{\Gamma_t\}:=\{(X_t,X'_t)\}$ defined on $(\Xset\times\Xset,\Xalg\otimes\Xalg,\proba)$ is referred to as a coupling of $\{X_t\}$ and $\{X'_t\}$ if $\{\Gamma_t\}$ admits $\{X_t\}$ and $\{X'_t\}$ as marginal distributions. Defining the coupling time $\tau(\Gamma)$ as
$$
\tau(\Gamma):=\inf_{t\in\nset}\{X_t=X'_t\}\,,
$$
a useful property of coupled Markov chains, arising from the coupling inequality states that:
\begin{equation}
\label{eq:coupl}
\|P^t(x,\,\cdot\,)-P^t(y,\,\cdot\,)\|\leq \proba_{x,y}\{\tau>t\}\,,
\end{equation}
where $\proba_{x,y}$ is the probability distribution generated by the simulation of the coupled Markov chain $\{\Gamma_t\}=\{X_t,X'_t\}$ started at $\Gamma_0=(x,y)$. In Eq. \eqref{eq:coupl}, we have used the shorthand notation $\tau$ for $\tau(\Gamma)$, noting however that a coupling time is relative to a specific coupling. Since we have
\begin{equation}
\label{eq:coupl2}
\sup_{x\in\Xset}\|P^t(x,\,\cdot)-\pi\|\leq \sup_{(x,y)\in\Xset^2}\|P^t(x,\,\cdot)-P^t(y,\,\cdot)\|\,,
\end{equation}
combining Eqs. \eqref{eq:coupl} and \eqref{eq:coupl2} shows that the coupling time distribution characterizes the Markov convergence. In particular, using Markov inequality, we have
$$
\sup_{x\in\Xset}\|P^t(x,\,\cdot)-\pi\|\leq \frac{1}{t}\sup_{(x,y)\in\Xset^2}\esp_{x,y}(\tau)\,,
$$
where $\esp_{x,y}$ is the expectation under $\proba_{x,y}$.

In this proof, for any quantity $\alpha$ relative to the random-scan Gibbs sampler, the equivalent quantity related to the locally-weighted algorithm (Alg. \ref{alg1}) will be denoted as $\alpha^\ast$. In particular, let $\proba^\ast$ be the probability distribution generated by Algorithm \ref{alg1} and $\esp^\ast$ be the expectation operator under $\proba^\ast$. The dependance on $\omega$ and $\varpi$ is omitted.

Without loss of generality, we order $\Xset$ such that the states $\{x_1,\ldots,x_{1+d(m-1)}\}$ correspond to the filament (\ie $\Zset$). We notice that the transition matrices $M$ and $\Mast$ corresponding respectively to the random-scan and the locally-weighted sampler (Alg. \ref{alg1}) satisfy in this case:
\begin{equation}
M=
\begin{bmatrix}
P & 0 \\
A & B
\end{bmatrix}
\qquad \text{and}\qquad \Mast=
\begin{bmatrix}
\Past & 0 \\
A^\ast & B^\ast
\end{bmatrix}\,,
\end{equation}
and clearly $\Zset$ is an absorbing state. Assuming that both Markov chains start in $\Zset$, it is thus sufficient to analyse only the transition matrices $P$ and $\Past$ which are essentially the restriction of the Markov chains to $\Zset$. Let $\{X_t\}$ and $\{X^\ast_t\}$ be the two Markov chains generated by $P$ and $\Past$ respectively.

The first step of the proof consists in projecting the Markov chains $\{X_t\}$ and $\{X^\ast_t\}$ onto a smaller state space by lumping some states from $\Zset$ together. Let us write $\Zset$ as $\Zset=\{\Vset_1,\Eset_1,\Vset_2,\Eset_2,\ldots,\Eset_d,\Vset_{d+1}\}$ where $\Vset_k$ and $\Eset_k$ are respectively the $k$-th vertex and the $k$-th edge of the hypercube that belongs to $\Zset$ such that $\Vset_k\cap\Eset_k=\{\emptyset\}$. The folded representation of the Markov chain $\{X_t\}$ with transition kernel $P$ is the discrete time process $\{Y_t\}$ defined on $\Yset=\{1,\ldots,2d+1\}$ as follows: if there is $k\in\{1,\ldots,d+1\}$ such that $X_t=\Vset_k$, set $Y_t=2k-1$ or if there is $k\in\{1,\ldots,d\}$ such that $X_t\in\Eset_k$, set $Y_t=2k$. In other words, $\{Y_t\}$ inherits the vertices from $\{X_t\}$ but aggregates together into a unique state, the states that are in between two consecutive vertices. The same mapping allows to define $\{Y_t^\ast\}$ as the \textit{folded} version of the locally-weighted Markov chain $\{X^\ast_t\}$. An illustration of the folded Markov chains $\{Y_t\}$ and $\{Y_t^\ast\}$ is given in Figure \ref{fig:representation}, in the case where $d=3$. In the following, we refer to as $Q$ (resp. $Q^\ast$) the transition matrix of $\{Y_t\}$ (resp. $\{Y_t'\}$).

\begin{figure}
\centering
\fbox{
\begin{tikzpicture}[,->,>=stealth',shorten >=1pt,auto,node distance=2.2cm,
                    thick,main node/.style={circle,draw,font=\sffamily\Large\bfseries}]
  \node[main node] (1) {1};
  \node[main node] (2) [below right of=1] {2};
  \node[main node] (3) [above right of=2] {3};
  \node[main node] (4) [below right of=3] {4};
  \node[main node] (5) [above right of=4] {5};
  \node[main node] (6) [below right of=5] {6};
  \node[main node] (7) [above right of=6] {7};
  \path[every node/.style={font=\sffamily\small}]
    (1) edge [bend left] node {$\alpha$} (3)
        edge [bend left] node[below] {$\bar{\beta}$} (2)
        edge [loop left] node {} (1)
    (2) edge [bend right] node[right] {$\alpha$} (3)
        edge [bend left] node[below] {$\bar{\alpha}$} (1)
        edge [loop below] node {} (2)
    (3) edge [bend right] node [below] {$\beta$} (2)
        edge node [below] {$\alpha$} (1)
        edge [bend left] node[below] {$\beta$} (4)
        edge [bend left] node[above] {$\alpha$} (5)
        edge [loop above] node {} (3)
    (4) edge [bend left] node {$\alpha$} (3)
        edge [bend right] node [right] {$\alpha$} (5)
        edge [loop below] node {} (4)
    (5) edge [loop above] node {} (5)
        edge node [below] {$\alpha$} (3)
        edge [bend left] node[below] {$\beta$} (6)
        edge [bend left] node[above] {$\alpha$} (7)
        edge [loop above] node {} (5)
        edge [bend right] node [below] {$\beta$} (4)
    (6) edge [bend left] node {$\alpha$} (5)
        edge [bend right] node [right] {$\bar{\alpha}$} (7)
        edge [loop below] node {} (6)
    (7) edge node {$\alpha$} (5)
        edge [bend right] node[below] {$\bar{\beta}$} (6)
        edge [loop right] node {} (7);
\end{tikzpicture}}
\caption{Projection on the folded space of the random-scan and locally-weighted Markov chains sampling from $\pi$, in the case where $d=3$. The odd states correspond to vertices and the even ones to the aggregated states between two vertices. For the random-scan, the transition probabilities of the \textit{folded} Markov chain $\{Y_t\}$ are $\alpha=\bar{\alpha}=1/dm$ and $\beta=\bar{\beta}=\{1-\frac{2}{m}\}/d$. For the locally-weighted algorithm, the transition probabilities of $\{Y^\ast_t\}$ are $\alpha^\ast=1/2m$, $\bar\alpha^\ast=1/m$, $\beta^\ast=1/2-1/m$ and $\bar\beta^\ast=1-2/m$. For each state, the self loop indicate the probability to stay put, which equals one minus the sum of outwards probabilities. \label{fig:representation}}
\end{figure}

The second step is to define a coupling for the two folded Markov chains $\{Y_t\}$ and $\{Y^\ast_t\}$. For simplicity, we only present the coupling for $\{Y_t\}$ but the same approach is used for $\{Y^\ast_t\}$. Since there is an order on $\Yset$, we consider the reflection coupling presented in Algorithm \ref{alg:coupling} that exploits the symmetry of the Markov chain. Clearly since $U\sim\unif(0,1)$ implies that $1-U\sim\unif(0,1)$, the marginal chains satisfy $Y_t\sim Q^t(Y_0,\,\cdot\,)$ and $Y_t'\sim Q^t(Y_0',\,\cdot\,)$ and the resulting discrete time process $\{(Y_t,Y_t')\}$ jointly defined is a coupling of $\{Y_t\}$ and $\{Y_t'\}$. The coupling introduced in Algorithm \ref{alg:coupling}, allows to derive the expected coupling time, \ie the time at which the two Markov chains $\{Y_t\}$ and $\{Y_t'\}$ coalesce. By symmetry, the Markov chains coalesce necessarily when $Y_\tau=Y_\tau'=d+1$. Therefore, denoting by $\esp_0^\diamond$ the expectation under the coupling $\{(Y_t,Y_t')\}_t$ on $(\Yset\times\Yset,\Yalg\otimes\Yalg)$ started at $Y_0=1$ and $Y_0'=2d+1$, we have
$$
\esp_{0}^\diamond(\tau)=\esp_1^\diamond(T_{d+1})=\esp_{2d+1}^\diamond(T_{d+1})\,,\\
$$
where for any $k\in\Yset$, $T_k:=\inf\{t>0,\;Y_t=k\}$ and $\esp_k^\diamond$ denotes the expectation of the marginal Markov chain $\{Y_t\}$ started at $Y_0=k$. The same coupling for the locally-weighted Markov chain yields $\esp_{0}^{\ast\diamond}(\tau)=\esp_1^{\ast\diamond}(T_{d+1})$. Central to this proof is the fact that a reflection coupling similar to Algorithm \ref{alg:coupling} exists for the Markov chains $\{X_t\}$ and $\{X^\ast_t\}$ and since
the average time to reach the middle of the filament $\Zset$ when starting from one end is the same regardless whether the space is folded or not we have
\begin{equation}
\label{eq:coupling_folded}
\esp_1^\diamond(T_{d+1})=\esp_1(T_{d+1})\,,
\end{equation}
which implies that $\esp_{0}^\diamond(\tau)=\esp_{0}(\tau)$. The same argument holds for the locally-weighted Markov chain $\{X^\ast_t\}$ and its folded version $\{Y^\ast_t\}$.
\begin{algorithm}
\caption{Reflection coupling on the hypercube}\label{alg:coupling}
\begin{algorithmic}[1]
\State Initialise the two Markov chains with $Y_0=1$ and $Y'_0=2d+1$
\State Set $t=0$, $Y=Y_0$ and $Y'=Y_0'$
\While{$Y_t\neq Y'_t$}
\State Draw $U\sim_\iid\unif(0,1)$ and set $U'=1-U$
\State Define $\eta=\{\sum_{i=1}^j Q(Y,i)\}_{j=1}^d$ and $\eta'=\{\sum_{i=1}^jQ(Y',2d+1-i)\}_{j=1}^d$
\State Set $Y=1+\sum_{k=1}^{d-1}\1_{\eta_k<U}$ and $Y'=1+\sum_{k=1}^{d-1}\1_{\eta'_k<U'}$
\State Set $t=t+1$, $Y_t=Y$ and $Y_t'=Y'$
\EndWhile
\State Set $\tau=t$
\ForAll{$t=\tau+1,\tau+2,\ldots$}
\State Simulate $Y_t$  using the steps (4)--(7) with $Y=Y_t$
\State Set $Y_{t}'=Y_t$
\EndFor
\end{algorithmic}
\end{algorithm}

Working on the folded space allows to derive $\esp_1^\diamond(T_{d+1})$ and $\esp_1^{\diamond\ast}(T_{d+1})$ in an easier way and this is the last part of the proof. We take $d$ even so as to make the algebra more immediate. In this case, since $d+1$ is odd, the state $d+1$ corresponds to a vertex. Clearly, $\esp_1^\diamond(T_{d+1})$ is the average time to absorption of a fictitious chain that would contain only the $d+1$ first states, replacing the outwards connections of $d+1$ by a self loop with probability 1. Denoting by $Q_{d+1}$ the transition matrix of this fictitious chain, by $Q_d$ the transition matrix of the $d$ first transient states and by $I_d$ the $d$-dimensional identity matrix, the matrix $I_d-Q_d$ is invertible and its inverse, often known as the fundamental matrix of $Q_{d+1}$, contains information related to the absorption time, see \eg the Chapter 11 in \citet{grinstead2012introduction}. In particular, we have that for any $i<d$ starting position of the chain, then
$$
\esp_i^\diamond(T_{d+1})=\{(I_d-Q_d)^{-1}1_d\}_i\,,
$$
where $1_d$ denotes here the $d$-dimensional $1$ vector. This implies that $\esp_1^\diamond(T_{d+1})$ is simply the sum of the first row of $(I_d-Q_d)^{-1}$. It is possible to calculate analytically the fundamental matrix for each chain $Q_{d+1}$ and $Q^\ast_{d+1}$ and the proof follows from comparing each first row sum.

Using symbolic computation provided by Matlab, we found the following entries for the first row of the $d$-dimensional fundamental matrix of the random-scan
\begin{equation}
v_d=\frac{1}{\alpha(2\alpha+\beta)}\left(\beta\;,\; 2\alpha \;,\;3\beta \;,\; 4\alpha\;,\; \cdots\;,\; (d-1)\beta \;,\; d\alpha\right)\,,
\end{equation}
and
\begin{multline*}
v_d^\ast=\frac{1}{\alpha(2\alpha^\ast+\beta^\ast)}\big(\beta^\ast\;,\; 2\alpha^\ast\;,\; \cdots
\quad (d-3)\beta^\ast \;,\; (d-2)\alpha^\ast\;,\;  \alpha^\ast(2\alpha^\ast+\beta^\ast)\phi_d\;,\;
\alpha^\ast(2\alpha^\ast+\beta^\ast)\psi_d \big)\,,
\end{multline*}
where
$$
\phi_d=\frac{2\beta^\ast}{\alpha^\ast\delta^\ast}\left\{\left(\frac{3d}{2}-1\right)\alpha^\ast+(d-1)\beta^\ast\right\}\,,
\qquad\psi_d=\frac{1}{\delta^\ast}\left\{3d\alpha^\ast+(2d-1)\beta^\ast\right\}\,,
$$
and $\delta^\ast=6{\alpha^\ast}^2+7\alpha^\ast\beta^\ast+2{\beta^\ast}^2$.
Letting $d=2p$ and using the fact that
$$
\sum_{k=1}^{2p}k\1_{\{k\,\text{is odd}\}}=p^2\,,\qquad \sum_{k=1}^{2p}k\1_{\{k\,\text{is even}\}}=p(p+1)\,,
$$
the sum of $v_d$'s elements is
\begin{equation}
\label{eq:proof_ex_1}
\esp_1^\diamond(T_d)=\frac{1}{\alpha(2\alpha+\beta)}\left\{\beta p^2+\alpha p(p+1)\right\}=\frac{m-1}{4}d^3+\frac{1}{2}d^2\,,
\end{equation}
by definition of $\alpha$ and $\beta$. Using the same argument the sum of $v^\ast_d$'s elements is
\begin{equation}
\label{eq:proof_ex_2}
\esp_1^{\ast\diamond}(T_d)=\frac{m-1}{2}d^2+(3-2m)d+2(m-2)+\phi_d+\psi_d\,.
\end{equation}
By straightforward algebra, we have
\begin{equation*}
\phi_d+\psi_d=2(m-1)d-2(m-2)\,,
\end{equation*}
which plugged into Eq. \eqref{eq:proof_ex_2} yields
\begin{equation}
\label{eq:proof_ex_3}
\esp_1^{\ast\diamond}(T_d)=\frac{m-1}{2}d^2+d\,.
\end{equation}
The proof is completed by comparing Eqs. \eqref{eq:proof_ex_1} and \eqref{eq:proof_ex_3} and using Eq. \eqref{eq:coupling_folded}.
\end{proof}

\subsection{Proof of Proposition \ref{prop:hypercube:2}}
\label{proof2}
\begin{proof}
We consider a version of the locally-weighted kernel $P_\varpi^\ast$ delayed by a factor $\lambda\in(0,1)$:
\begin{equation}
P^\ast_\lambda=\lambda P^\ast+(1-\lambda)\Id\,.
\end{equation}
Here again, we slightly change notation compared to the statement of   Proposition \ref{prop:hypercube:2}. The asymptotic rate of convergence and asymptotic variance of $\pi$-reversible Markov kernels can be assessed by studying their spectral properties. Indeed, defining the absolute spectral gap of a Markov kernel $P$ as
$$
\gap(P):=1-\sup\left\{|\lambda|,\;\lambda\in\Sp(P)\backslash\{1\}\right\}\,,
$$
where $\Sp(P)$ is the spectrum of $P$, Proposition 2 from \cite{rosenthal2003asymptotic} states that
\begin{equation}
\label{eq:spe}
\sup_{x\in\Zset}\lim_{t\to\infty}\frac{1}{t}\log\|\delta_x P^t-\pi\|=\log(1-\gamma(P))\,.
\end{equation}
We recall that since $P$ is a Markov operator, $\Sp(P)\subset(-1,1]$. Hence, the larger the spectral gap, the faster the convergence. In this proof, several elements of the proof of Proposition \ref{prop:hypercube:1} are used and in particular, the folded version of the two Markov chains.  A route to prove Prop. \ref{prop:hypercube:2} can be given by:
\begin{enumerate}
\item  introduce the two unfolded Markov chains with transition kernel $\bP_\lambda^\ast$ and $\bP$ which operate on $\Zset$,
\item  show that $\gap(\bP_\lambda^\ast)=\gap(P_\lambda^\ast)$ and $\gap(\bP)=\gap(P)$,
\item  show that $\gap(\bP_\lambda^\ast)=\gap(Q^\ast_\lambda)$ and $\gap(\bP^\ast)=\gap(Q)$,
\item  show that for $\lambda=2/d$, $\gap(Q)=\gap(Q_\lambda^\ast)$,
\item show that $\gap(P_\lambda^\ast)=\lambda\gap(P^\ast)$.
\end{enumerate}

Getting the analytical expression of $\gamma(P)$ and $\gamma(P_\lambda^\ast)$ is challenging. Instead of calculating the eigenvectors of the transition matrices $P$ and $P_\lambda^\ast$, we resort to the folded versions of those Markov chains in the same spirit as the proof of Proposition \ref{prop:hypercube:1}. Indeed, the resulting transition matrices on the folded space $\Yset=\{1,2,\ldots,2d+1\}$ are pentadiagonal and this facilitates the derivation of their spectrum. This is exploited at step 4 of our proof.

First, let us define the operators $\Gamma$ and $\Omega$ that map $\Zset$ to $\Yset$ and $\Yset$ to $\Zset$, respectively. Using the notation $\Zset=\{\Vset_1,\Eset_1,\ldots,\Vset_{d+1}\}$ defined in the proof of Proposition \ref{prop:hypercube:1}, $\Gamma$ maps a state $x\in\Zset$ to a step $y\in\Yset$ as follows:
\begin{itemize}
\item If there exists $k\in\nset$, such that $x=\Vset_k$, set $y=2(k-1)+1$.
\item If there exists $k\in\nset$, such that $x\in\Eset_k$, set $y=2k$.
\end{itemize}
The operator $\Omega$ maps a state $y\in\Yset$ to a step $x\in\Zset$ as follows:
\begin{itemize}
\item If there exists $k\in\nset$, such that $y=2k+1$, set $x=\Vset_{k+1}$.
\item If there exists $k\in\nset$, such that $y=2k$, pick $x$ uniformly at random in $\Eset_k$.
\end{itemize}
Hence, contrarily to $\Gamma$, $\Omega$ is a stochastic operator. More precisely, $\Omega$ and $\Gamma$ are matrices such that $\Omega\in\mathcal{M}_{d(n-1)+1,2d+1}((0,1))$ and $\Gamma\in\mathcal{M}_{2d+1,d(n-1)+1}((0,1))$ and their construction is detailed at Algorithm \ref{alg:matrices}. 

\begin{algorithm}
\caption{Construction of the mapping matrices}\label{alg:matrices}
\begin{algorithmic}[1]
\State set $\Omega_{1,\cdot}=\{\delta_{1,j}\}_{j\leq 2d+1}$ and $\Gamma_{\cdot,1}=\{\delta_{1,j}\}_{j\leq 2d+1}$
\State $k\gets 2$
\ForAll{$i=2,\ldots,d(n-1)+1$}
\If{it exists  $\ell\geq 0$ \,s.t. $i=\ell(n-1)+1$}
\State set $k\gets k+1$
\State set $\Omega_{i,\cdot}=\{\delta_{k,j}\}_{j\leq 2d+1}$ and $\Gamma_{\cdot,i}=\{\delta_{k,j}\}_{j\leq 2d+1}$
\State set $k\gets k+1$
\Else
\State set $\Omega_{i,\cdot}=\{\delta_{k,j}\}_{j\leq 2d+1}$ and $\Gamma_{\cdot,i}=(1/(n-2))\{\delta_{k,j}\}_{j\leq 2d+1}$
\EndIf
\EndFor
\end{algorithmic}
\end{algorithm}

To circumvent calculating the eigenvalues of $P$ and $P_\lambda^\ast$, a natural idea is to look at the spectrum of their equivalent transition kernels on the folded space $\Yset$ defined formally as
\begin{equation}
\label{eq:folded}
Q:=\Gamma P \Omega\,\quad \text{and}\quad Q^\ast_\lambda:=\Gamma P^\ast_\lambda \Omega
\end{equation}
and illustrated at Figure \ref{fig:representation} (in the case $\lambda=0$). Unfortunately, those folded Markov chains cannot be directly used since $\Sp(Q)\neq \Sp(P)$ and $\Sp(Q^\ast_\lambda)\neq \Sp(P^\ast_\lambda)$. Indeed, it can be readily checked that
\begin{equation}
\text{Tr}(P)=1+(n-1)(d-1)\neq 2d-1=\text{Tr}(Q)
\end{equation}
and thus, should $\gamma(Q)$ and $\gamma(Q_\lambda^\ast)$ be analytically tractable, one could not call on to Eq. \eqref{eq:spe} to conclude the proof.

The trick is to consider the unfolded kernels stemming from $Q$ and $Q_\lambda^\ast$ and defined as
\begin{equation}
\label{eq:unfolded}
\bP:=\Omega Q \Gamma\,\quad \text{and}\quad \bP^\ast_\lambda:=\Omega Q^\ast_\lambda \Gamma\,.
\end{equation}
Intuitively, while the dynamic of $P$ (resp. $P_\lambda^\ast$) is fundamentally on $\Zset$, $\bP$ (resp. $\bP_\lambda^\ast$) generates a process which fundamentally operates on $\Yset$ via $Q$ (resp. $Q_\lambda^\ast$) and which is then mapped back to $\Zset$. The operator $\Sigma:=\Omega\Gamma$ acts as an operator on $\Zset$ which randomizes the inner part of each edge while leaving untouched the vertices. In fact, $\Sigma$ is a Markov operator which is reversible w.r.t. the uniform distribution since it is symmetric. Therefore, it is easy to see that applying $\Sigma$ after or before $P$ leads to same Markov transition and thus, $P$ and $\Sigma$ commute. The same can be said about $\Sigma$ and $P_\lambda^\ast$. Finally since $\Sigma\Sigma=\Sigma$, we have that $\Sigma$ is an orthogonal projector of $\Zset$.

It can be readily checked that, for $m=3$, $P=\bP$ and thus we only need to prove step 1 of the proof for $m>3$ since in that case, $P\neq \bar{P}$. Combining Eqs. \eqref{eq:folded}, \eqref{eq:unfolded} and using the fact that $P\Sigma=\Sigma P$, we have
\begin{equation}\label{eq:relation}
 \bP=\Sigma P \Sigma=P\Sigma=\Sigma P
 \end{equation}
and similarly for $(\bP_\lambda^\ast,P_\lambda^\ast)$.  Even though they are different, $\bP$ and $\bP_\lambda^\ast$ are still useful for our analysis. For example, Lemma \ref{lem:2_bis} shows that for any $t>0$ and any starting point $x$ in the set of vertices, we have
\begin{equation}
\label{eq:equiv_cv}
\|\delta_xP^t-\pi\|=\|\delta_x\bP^t-\pi\|\quad \text{and}\quad
\|\delta_x{P_\lambda^\ast}^t-\pi\|=\|\delta_x{\bP_\lambda^{\ast\,t}-\pi}\|\,.
\end{equation}
As a consequence, when assessing the efficiency of $P$ one can equivalently study $\bP$ and similarly for $P^\ast$ with $\bP_\lambda^\ast$. We also note that since $\bP$ (resp. $\bP_\lambda^\ast$) is the composition of two $\pi$-reversible Markov kernels which commute, $\bP$ (resp. $\bP_\lambda^\ast$) is also $\pi$-reversible.

Recall that the absolute spectral gap of a $\pi$-reversible Markov kernel $K$ is the quantity $\gap(K)\in[0,1]$ defined by
$$
\gap(K)=\min\left[\gap_L(K),\gap_R(K)\right]
$$
where the left and right spectral gaps are respectively given by
\begin{multline*}
\gap_L(K)=\inf_{f\in\Lb}\pscal{f}{(I+K)f}_\pi\,,\\
\gap_R(K)=\inf_{f\in\Lb}\pscal{f}{(I-K)f}_\pi\,.
  \end{multline*}
If $K$ is a positive operator on $\Lb$, it is easy to see that $\gap_L(K)\geq 1$ and since $\gap_R(K)\in[0,1]$, then  $\gap(K)=\gap_R(K)$. By Lemma \ref{lem:2_bis}, we know that $P$, $\bP$, $P_\lambda^\ast$ and $\bP_\lambda^\ast$ are all positive operators on $\Lb$ and thus we only need to show that
$\gap_R(P)=\gap_R(\bP)$ and $\gap_R(P_\lambda^\ast)=\gap_R(\bP_\lambda^\ast)$. The proof is identical in both cases. We have for any $f\in\Lb$ that
\begin{multline*}
\pscal{f}{(I-P)f}_\pi=\pscal{f}{(I-\bP+\bP-P)f}_\pi\\
=\pscal{f}{(I-\bP)f}_\pi-\pscal{f}{(P-\bP)f}_\pi\leq \pscal{f}{(I-\bP)f}_\pi
\end{multline*}
since by Lemma \ref{lem:2_bis}, $P-\bP$ is a positive operator on $\Lb$. Thus $\gap_R(P)\leq \gap_R(\bP)$ but since, again by Lemma \ref{lem:2_bis}, $\Sp(\bP)\backslash \{0\}\subset \Sp(P)$ we have that $\gap_R(P)=\gap_R(\bP)$, which establishes step 1 of the proof for each $m\geq 3$ and each $d\geq 2$. Lemma \ref{lem:0} proves that $\gap(\bP)=\gap(Q)$ and $\gap(\bP_\lambda^\ast)=\gap(Q_\lambda^\ast)$. Lemma \ref{lem:1} completes the proof by showing that $\gap(Q)=\gap(Q^\ast_{2/d})$. To prove the last point, we note that for each $f\in\Lb$
$$
\pscal{f}{(I-P_\lambda^\ast)f}_\pi=\lambda\pscal{f}{(I-P^\ast)f}
$$
which yields $\gap_R(P_\lambda^\ast)=\lambda\gap_R(P^\ast)$.
Since $\bP_\lambda^\ast$ and $\bP$ are positive operators on $\Lb$, we have that $\gap(P_\lambda^\ast)=\lambda\gap(P^\ast)$.
\end{proof}

\subsection{Proof of Proposition \ref{prop:hypercube:3}}
\label{proof3}
\begin{proof}
The off-diagonal part of the transition kernels is given by
\begin{itemize}
\item for $k\in\{1,\ldots,d\}$, $(x,y)\in\Eset_k$,
$$
P_\varpi^\ast(x,y)=\frac{1}{m}\,,\qquad P_\omega(x,y)=\frac{1}{dm}\,,
$$
\item for $k\in\{1,\ldots,d\}$, $x\in\Eset_k$, $y\in\Vset_{k}\cup\Vset_{k+1}$,
$$
P_\varpi^\ast(x,y)=\frac{1}{2m}+\frac{1}{2m}\1_{(x,y)\in\Eset_d\times\Vset_{d+1}\cup\Eset_1\times\Vset_1}\,,\qquad P_\omega(x,y)=\frac{1}{dm}\,,
$$
\item for $k\in\{2,\ldots,d\}$, $x\in\Vset_k$ and $y\in\Eset_{k}\cup\Eset_{k+1}\cup\Vset_{k-1}\cup\Vset_k$,
$$
P_\varpi^\ast(x,y)=\frac{1}{2m}\,,\qquad P_\omega(x,y)=\frac{1}{dm}\,,
$$
\item for  $x\in\Vset_1$ and $y\in\Eset_1\cup\Vset_2$,
$$
P_\varpi^\ast(x,y)=\frac{1}{2m}+\frac{1}{2m}\1_{(x,y)\in\Vset_1\times\Eset_1}\,,\qquad P_\omega(x,y)=\frac{1}{dm}\,,
$$
and similarly for $x\in\Vset_{d+1}$ and $y\in\Eset_d\cup\Vset_d$.
\end{itemize}
Thus for all $x\neq y$, we have $P_\varpi^\ast(x,y)\geq ({d}/{2}) P_\omega(x,y)$ and  $P_\varpi^\ast$ and $P_\omega$ admit a strong form of Peskun ordering. This implies, see for instance \cite[Theorem 2]{zanella2020informed}, that for any bounded function $f:\Xset\to\rset$,
$$
\var(f,P_\varpi^\ast)\leq \frac{2}{d}\var(f,P_\omega)+\left(\frac{2}{d}-1\right)\var_\pi f(X)\,.
$$
Note that the inequality is tight since for $d=2$ and $m=3$, $P_\varpi^\ast=P_\omega$.
\end{proof}
\section{Technical Lemmas}

\begin{lemma}
\label{lem:2}
Let $P$ be the transition matrix of the random-scan, $Q$ its equivalent representation on the folded state space and $\Omega$ and $\Gamma$ be the two mapping matrices defined at Algorithm \ref{alg:matrices} and let $\bP:=\Omega Q\Gamma$ and $\bPast_\lambda:=\Omega \Qast_\lambda\Gamma$. Then we have for $x=(1\,,1\,,\cdots\,, 1\,,1)$ and all $t>0$
\begin{equation}
\label{eq:proof_ex_5}
\delta_x P^t=\delta_x\bP^t\,.
\end{equation}
Similarly for the locally-weighted algorithm, we have for all $t>0$
\begin{equation}
\label{eq:proof_ex_6}
\delta_x{P_\lambda^\ast}^t=\delta_x\bar{P}_\lambda^{\ast\,t}\,.
\end{equation}
\end{lemma}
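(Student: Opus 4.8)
The plan is to reduce everything to a single algebraic factorisation together with a symmetry property of the RSGS on the filament. Write $\Pi:=\Omega\Gamma$, a square matrix on $\Zset$. Reading off the matrices built in Algorithm \ref{alg:matrices}, one sees that $\Pi$ is the \emph{lump-averaging} operator attached to the partition $\{\Vset_1,\Eset_1,\Vset_2,\ldots,\Eset_d,\Vset_{d+1}\}$: it fixes every vertex and replaces the mass carried by an edge $\Eset_k$ by its uniform average over the $n-2$ interior states of $\Eset_k$. In particular $\Pi^2=\Pi$, one has $\delta_x\Pi=\delta_x$ for every vertex $x$, and a law $\mu$ on $\Zset$ satisfies $\mu\Pi=\mu$ if and only if $\mu$ is uniform within each edge. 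Since $Q=\Gamma P\Omega$, the unfolded kernel factors as $\bP=\Omega Q\Gamma=(\Omega\Gamma)P(\Omega\Gamma)=\Pi P\Pi$, and identically $\bPast_\lambda=\Pi P^\ast_\lambda\Pi$ because $\Qast_\lambda=\Gamma P^\ast_\lambda\Omega$.

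The crux is the following symmetry lemma: \emph{if $\mu$ is uniform within each edge, then so is $\mu P$}; equivalently $\mu\Pi=\mu\Rightarrow\mu P\Pi=\mu P$. I would prove it through equivariance. Let $G$ be the group of permutations of $\Zset$ that permute the interior states of each edge among themselves while fixing all vertices. Because $\pi$ is uniform on $\Zset$ and the edges are structurally identical, the RSGS kernel commutes with every $\sigma\in G$, i.e. $P\sigma=\sigma P$ as matrices. Concretely, from a state $x$ in the interior of $\Eset_k$ the RSGS refreshes a uniformly chosen coordinate: refreshing coordinate $k$ resamples from $\pi(\,\cdot\,|\,x_{-k})$, which is uniform on $\{\Vset_k\}\cup\Eset_k\cup\{\Vset_{k+1}\}$, whereas refreshing any other coordinate leaves $x$ unchanged since the corresponding full conditional of $\pi$ is degenerate. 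Thus the exit law from $x$ depends only on the edge containing $x$ and is symmetric in that edge's interior states, and the same holds at the vertices. A law $\mu$ is uniform within each edge precisely when it is $G$-invariant ($\mu\sigma=\mu$ for all $\sigma\in G$), and then $(\mu P)\sigma=\mu(\sigma P)=(\mu\sigma)P=\mu P$, so $\mu P$ is again $G$-invariant. The identical computation applies to $P^\ast$, and hence to the convex combination $P^\ast_\lambda=\lambda P^\ast+(1-\lambda)\Id$. This verification, though conceptually transparent, carries the bulk of the work and is the main obstacle.

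With these two ingredients the statement follows by induction on $t$, taking $x=(1,\ldots,1)=\Vset_1$, which is a vertex. I would establish simultaneously that (i) $\delta_xP^t=\delta_x\bP^t$ and (ii) $\delta_xP^t$ is uniform within each edge. Both hold at $t=0$ since $\delta_x$ is a point mass at a vertex. Assuming them at $t$, claim (ii) at $t+1$ is immediate from the lemma applied to $\delta_xP^t$. For (i), using the inductive hypothesis (i) and then $\bP=\Pi P\Pi$,
$$
\delta_x\bP^{t+1}=(\delta_x\bP^{t})\,\Pi P\Pi=(\delta_xP^{t})\,\Pi P\Pi=(\delta_xP^{t})\,P\Pi=(\delta_xP^{t})\,P=\delta_xP^{t+1},
$$
where the third equality uses $(\delta_xP^{t})\Pi=\delta_xP^{t}$ from (ii), and the fourth uses that $(\delta_xP^{t})P$ is uniform within each edge, again by the lemma. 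This proves \eqref{eq:proof_ex_5}. Replacing $P$ by $P^\ast_\lambda$ and $\bP$ by $\bPast_\lambda$ verbatim, and invoking the symmetry lemma for $P^\ast_\lambda$, yields \eqref{eq:proof_ex_6}.
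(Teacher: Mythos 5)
Your proof is correct, and it takes a genuinely different route from the paper's. Both arguments are inductions on $t$ that exploit the same underlying phenomenon (started from a vertex, the law of the chain stays uniform within each edge, so folding loses no information), but the machinery of the inductive step differs. The paper proves an auxiliary result about the \emph{unfolded} kernel (Lemma~\ref{lem:2_1}: $\bP^t(x,i)$ depends on $i\in\Eset_k$ only through $k$) and then closes the induction with the edge-sum identity $\sum_{i\in\Eset_k}P(i,\,\cdot\,)=\sum_{i\in\Eset_k}\bP(i,\,\cdot\,)$, verified entrywise, including an explicit case-by-case enumeration for $P^\ast$ (and it sets $\lambda=1$ for notational simplicity). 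You instead work with $P$ itself: you isolate the averaging projection $\Pi=\Omega\Gamma$, use the factorisation $\bP=\Pi P\Pi$ and $\bPast_\lambda=\Pi P^\ast_\lambda\Pi$, and prove that $P$ and $P^\ast_\lambda$ preserve edge-uniform laws by equivariance under the group of within-edge permutations; the induction then amounts to absorbing the two copies of $\Pi$, and your single induction carries both claims (equality of laws and edge-uniformity) at once. What your route buys: no entry of $\bP$ or $\bPast_\lambda$ ever needs to be computed, the paper's two nested inductions collapse into one, general $\lambda$ is handled verbatim since $\Id$ commutes with everything, and the strong-lumpability structure is made explicit rather than implicit. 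What the paper's route buys: it is entirely elementary, relying only on finite sums and the displayed transition probabilities (which double as a check of Figure~\ref{fig:representation}), with no appeal to group actions. One caveat you flag but slightly understate: the equivariance of $P^\ast$ does require a separate short verification at the vertices and at the two extreme edges $\Eset_1$ and $\Eset_d$, where the weight function degenerates (e.g.\ $\omega(\Vset_1)=\delta_1$, so the acceptance probabilities toward the two endpoints of $\Eset_1$ differ); equivariance still holds because the group fixes vertices, but these are exactly the cases the paper enumerates when establishing Eq.~\eqref{eq:proof_ex_7_ast}, and your ``identical computation applies'' should be expanded into those few lines.
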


\begin{proof}[Proof of Lemma \ref{lem:2}]
We prove Eqs. \eqref{eq:proof_ex_5} and \eqref{eq:proof_ex_6} by induction. For notational simplicity, we present the proof for $\lambda=1$, \ie $\bP_\lambda^\ast\equiv\bP^\ast$. We first establish Eq. \eqref{eq:proof_ex_5}. We use the notation of Proof of Proposition \ref{prop:hypercube:1} and let $\Vset:=\{\Vset_1,\ldots,\Vset_{d+1}\}$. The initialisation follows from noting that $P(x,\,\cdot)=\bP(x,\,\cdot)$, for any $x\in\Vset$. Now, assume that $\delta_xP^t=\delta_x\bP^t$ and note that
\begin{multline}
\label{eq:proof_ex_6b}
\delta_xP^{t+1}=\sum_{i\in\Vset}P^t(x,i)P(i,\,\cdot\,)+\sum_{i\in\Eset}P^t(x,i)P(i,\,\cdot\,)\\
=\sum_{i\in\Vset}\bP^t(x,i)P(i,\,\cdot\,)+\sum_{i\in\Eset}\bP^t(x,i)P(i,\,\cdot\,)\,,\\
=\sum_{i\in\Vset}\bP^t(x,i)\bP(i,\,\cdot\,)+\sum_{k=1}^d\sum_{i\in\Eset_k}\bP^t(x,i)P(i,\,\cdot\,)\,,
\end{multline}
where the first line comes from the recursion assumption and the second follows from the initialisation stage. The second term in the last line of Eq. \eqref{eq:proof_ex_6b} requires a special attention. In particular, Lemma \ref{lem:2_1} shows that for all $x\in\Vset$ and any edge state $i$, $\bP^t(x,i)$ depends only on $i$ through the edge it belongs to. In other words, for all $k\in\{1,\ldots,d\}$, there exists a function $\rho_k^t$ such that $\bP^t(x,i)=\varrho_k^t(x)$ for all $i\in\Eset_k$ and all $x\in\Vset$. Plugging this into Eq. \eqref{eq:proof_ex_6b} yields
\begin{equation}
\label{eq:proof_ex_8}
\delta_xP^{t+1}=\sum_{i\in\Vset}\bP^t(x,i)\bP(i,\,\cdot\,)+\sum_{k=1}^d\varrho_k^t(x)\sum_{i\in\Eset_k}\bP(i,\,\cdot\,)\,.
\end{equation}
Finally, we note that
\begin{equation}
\label{eq:proof_ex_7}
\sum_{i\in\Eset_k}P(i,\,\cdot\,)=\sum_{i\in\Eset_k}\bP(i,\,\cdot\,)\,.
\end{equation}
Indeed, by straightforward algebra, denoting $\Vset_{k-1}$ and $\Vset_k$ the adjacent vertices of $\Eset_k$, it can be readily checked that $\sum_{i\in\Eset_k}P(i,j)=\sum_{i\in\Eset_k}\bP(i,j)=\{(m-2)/dm\} \1_{j\in\{\Vset_{k-1},\Vset_k\}}+(1-2/dm)\1_{j\in\Eset_k}$.
Combining Eqs. \eqref{eq:proof_ex_8} and \eqref{eq:proof_ex_7} finally yields
\begin{multline*}
\delta_xP^{t+1}=\sum_{i\in\Vset}\bP^t(x,i)\bP(i,\,\cdot\,)+\sum_{k=1}^d\varrho_k^t(x)\sum_{i\in\Eset_k}\bP(i,\,\cdot\,)\\
=\sum_{i\in\Vset}\bP^t(x,i)\bP(i,\,\cdot\,)+\sum_{i\in\Eset}\bP(x,i)^t\bP(i,\,\cdot\,)=\delta_x\bP^{t+1}\,,
\end{multline*}
which completes the first part of the proof. To prove Eq. \eqref{eq:proof_ex_6}, we note that the initialisation is straightforward since there is a one-to-one mapping on $\Vset$ between the folded and unfolded representation. The induction is concluded by applying the same reasoning, noting that Lemma \ref{lem:2_1} holds for $\bP_\lambda^\ast$ also and that
\begin{equation}
\label{eq:proof_ex_7_ast}
\sum_{i\in\Eset_k}P_\lambda^\ast(i,\,\cdot\,)=\sum_{i\in\Eset_k}\bP_\lambda^\ast(i,\,\cdot\,)\,.
\end{equation}
Indeed,
\begin{itemize}
\item  for $k=1$,
\begin{itemize}
\item $\sum_{i\in\Eset_1}P_\lambda^\ast(i,j)=(m-2)/m=\sum_{i\in\Eset_1}\bP_\lambda^\ast(i,j)$ if $j=\Vset_1$,
\item    $\sum_{i\in\Eset_1}P_\lambda^\ast(i,j)=1-3/2m=\sum_{i\in\Eset_1}\bP_\lambda^\ast(i,j)$ if $j\in\Eset_1$,
\item    $\sum_{i\in\Eset_1}P_\lambda^\ast(i,j)=(m-2)/2m=\sum_{i\in\Eset_1}\bP_\lambda^\ast(i,j)$ if $j=\Vset_2$,
\item    $\sum_{i\in\Eset_1}P_\lambda^\ast(i,j)=0=\sum_{i\in\Eset_1}\bP_\lambda^\ast(i,j)$ for any $j\in\Vset\backslash\{\Vset_1,\Eset_1,\Vset_2\}$,
\end{itemize}
\item for $1<k<d$,
\begin{itemize}
\item $\sum_{i\in\Eset_k}P_\lambda^\ast(i,j)=(m-2)/2m=\sum_{i\in\Eset_1}\bP^\ast(i,j)$ if $j\in\{\Vset_k,\Vset_{k+1}\}$,
\item $\sum_{i\in\Eset_k}P_\lambda^\ast(i,j)=1-1/m=\sum_{i\in\Eset_k}\bP_\lambda^\ast(i,j)$ if $j\in\Eset_k$,
\item $\sum_{i\in\Eset_k}P_\lambda^\ast(i,j)=0=\sum_{i\in\Eset_k}\bP_\lambda^\ast(i,j)$ for any $j\in\Vset\backslash\{\Vset_k,\Eset_k,\Vset_{k+1}\}$,
\end{itemize}
\item the case $k=d$ is identical to the case $k=1$.
\end{itemize}
\end{proof}

\begin{lemma}
\label{lem:2_1}
In the context of Lemma \ref{lem:2}, for any $x\in\Vset$, for all $k\in\{1,\ldots,d\}$ and $i\in\Eset_k$, the transition probabilities $\bP^t(x,i)$ and $\bar{P}_\lambda^{\ast\,t}(x,i)$ are conditionally independent of $i$ given $i\in\Eset_k$.
\end{lemma}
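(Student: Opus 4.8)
The plan is to reduce the entire statement to a single algebraic identity relating the two mapping matrices of Algorithm~\ref{alg:matrices}, namely that $\Gamma$ is a right inverse of $\Omega$ on the folded space,
\begin{equation}
\label{eq:GammaOmega}
\Gamma\Omega=I_{2d+1}\,.
\end{equation}
First I would verify \eqref{eq:GammaOmega} directly from the construction in Algorithm~\ref{alg:matrices}. For a folded vertex index $y=2k-1$, the $y$-th row of $\Gamma$ is the Dirac at $\Vset_k\in\Zset$ while the $\Vset_k$-th row of $\Omega$ is the Dirac at $y$, so the $(2k-1)$-th row of $\Gamma\Omega$ equals $\delta_{2k-1}$. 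For a folded edge index $y=2k$, the $y$-th row of $\Gamma$ spreads mass $1/(n-2)$ over the $n-2$ states of $\Eset_k$, and the $\Omega$-row of each such state is the Dirac at $y$; summing the $n-2$ identical contributions returns $\delta_{2k}$. This is the only genuine computation in the argument, and it is exactly where the geometry of the folding enters: $\Gamma$ and $\Omega$ form a right-inverse pair precisely because $\Gamma$ distributes uniformly over an edge what $\Omega$ has collapsed to a single point.

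Next I would telescope the powers of $\bP$. Starting from the definitions $\bP=\Omega Q\Gamma$ and $\bPast_\lambda=\Omega\,\Qast_\lambda\,\Gamma$, a one-line induction using \eqref{eq:GammaOmega} gives, for every $t\geq 1$,
\begin{equation}
\label{eq:telescope}
\bP^t=\Omega Q^t\Gamma\qquad\text{and}\qquad \bP_\lambda^{\ast\,t}=\Omega\,{\Qast_\lambda}^{\,t}\,\Gamma\,,
\end{equation}
since $\bP^{t+1}=\Omega Q^t(\Gamma\Omega)Q\Gamma=\Omega Q^{t+1}\Gamma$, and identically for $\bPast_\lambda$. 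The essential point is that every interior factor $\Gamma\Omega$ cancels, leaving a single copy of $\Gamma$ on the right of the expression.

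Finally I would read the conclusion off \eqref{eq:telescope}. Fixing a vertex $x=\Vset_k\in\Vset$, the $\Vset_k$-row of $\Omega$ is the Dirac at the folded state $2k-1$, so $\delta_x\bP^t=\nu\Gamma$ with $\nu:=\delta_{2k-1}Q^t$ a distribution on $\Yset$. For any edge $\Eset_\ell$, right multiplication by $\Gamma$ sends the folded mass $\nu(2\ell)$ uniformly onto the $n-2$ states of $\Eset_\ell$, and no other folded state contributes to an edge state of $\Eset_\ell$; hence for every $i\in\Eset_\ell$,
\begin{equation*}
\bP^t(x,i)=\frac{(\delta_{2k-1}Q^t)(2\ell)}{n-2}\,,
\end{equation*}
which depends on $i$ only through the edge containing it. The same three steps with $\Qast_\lambda$ in place of $Q$ settle the claim for $\bP_\lambda^{\ast\,t}$, since $\Omega$ and $\Gamma$ are common to both chains. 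I expect the only real obstacle to be the bookkeeping in \eqref{eq:GammaOmega}, in particular checking the boundary rows (the end vertices $\Vset_1,\Vset_{d+1}$ and the first and last edges) against the index conventions of Algorithm~\ref{alg:matrices}; once that identity is in hand, the telescoping and the final reading are immediate.
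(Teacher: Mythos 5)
Your proof is correct, but it takes a genuinely different route from the paper's. The paper proves Lemma \ref{lem:2_1} by induction on $t$, working directly with the unfolded kernel: the initialisation checks that $\bP(x,\,\cdot\,)$ is edge-constant when $x$ is a vertex, and the induction step splits the Chapman--Kolmogorov sum $\bP^{t+1}(x,i)=\sum_j\bP^t(x,j)\bP(j,i)$ over the two vertices adjacent to $\Eset_k$ and over $\Eset_k$ itself, using that $\bP(j,i)$ is constant in $i\in\Eset_k$; the same argument is then invoked again for $\bPast_\lambda$. You instead route everything through the algebraic identity $\Gamma\Omega=I_{2d+1}$ --- an identity the paper itself states, but only uses later, in the proof of Lemma \ref{lem:0} --- to telescope $\bP^t=\Omega Q^t\Gamma$ and ${\bPast_\lambda}^{t}=\Omega\,{\Qast_\lambda}^{t}\,\Gamma$, and then read the conclusion off the final factor $\Gamma$, which spreads whatever folded mass reaches the index $2\ell$ uniformly over the $n-2$ states of $\Eset_\ell$. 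Your route buys three things the paper's does not: an explicit formula $\bP^t(x,i)=(\delta_{2k-1}Q^t)(2\ell)/(n-2)$ identifying the common value that the paper only names $\varrho_\ell^t(x)$; validity for an arbitrary starting state $x\in\Zset$, not just vertices, since $\delta_x\Omega$ is a Dirac for every $x$; and a single argument covering both kernels at once, whereas the paper must recheck the edge-constancy of the entries of $\bPast$. Conversely, the paper's induction needs no inverse identity and stays at the level of explicit transition probabilities. One terminological slip on your side: since $\Gamma\Omega=I_{2d+1}$, the matrix $\Gamma$ is a \emph{left} inverse of $\Omega$ (equivalently, $\Omega$ is a right inverse of $\Gamma$), not a right inverse of $\Omega$ as you write; the displayed identity and your row-by-row verification of it, including at the boundary indices, are nevertheless correct.
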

\begin{proof}
We prove Lemma \ref{lem:2_1} by recursion for $\bP^t(x,i)$ only, the proof for $\bar{P}_\lambda^{\ast\,t}(x,i)$ being identical. The initialisation follows from noting that for any $i$ belonging to an edge connected to $x$, $\bP(x,i)=1/dm$. For any $i$ belonging to an edge not connected to $x$, $\bP(x,i)=0$. As a consequence, for all $i$ belonging to the same edge, $\bP(x,i)$ is independent of $i$. Let us assume that for any $x\in\Vset$, for all $k\in\{1,\ldots,d\}$, for all $i\in\Eset_k$, $\bP^t(x,i)=\varrho_k^t(x)$, \ie $\bP^t(x,i)$ is independent of $i$. We have:
\begin{multline*}
\bP^{t+1}(x,i)=\sum_{j\in\Xset}\bP^t(x,j)\bP(j,i)
=\sum_{j\in\{\Vset_{k-1},\Vset_k\}}\bP^t(x,j)\bP(j,i)+\sum_{j\in\Eset_k}\bP^t(x,j)\bP(j,i)\,,\\
=\sum_{j\in\{\Vset_{k-1},\Vset_k\}}\bP^t(x,j)\slash{dm}+\varrho_k^t(x)\sum_{j\in\Eset_k}\bP(j,i)\,,
\end{multline*}
and since for all $(i,j)\in\Eset_k^2$, $\bP(j,i)$ is independent of $i$, there exists $\rho_k^{t+1}(x)$ such that for all $i\in\Eset_k$, $\bP^{t+1}(x,i)=\rho_k^{t+1}(x)$, which completes the proof.
\end{proof}

\begin{lemma}
 \label{lem:2_bis}
 Let $m>3$. The Markov kernels $\bP$, $\bP_\lambda^\ast$, $P$ and $P_\lambda^\ast$ are all positive operators on $\Lb$. Moreover, they satisfy
 $$
 P_\lambda^\ast-\bP_\lambda^\ast\succeq 0\,,\qquad
  P-\bP\succeq 0
 $$
 and
 $$
 \Sp(\bP_\lambda^\ast)\backslash\{0\}\subset\Sp(P_\lambda^\ast)\,,\qquad  \Sp(\bP)\backslash\{0\}\subset\Sp(P)\,.
 $$
\end{lemma}
\begin{proof}
First, note that for all $x\in\cup \Vset_k$, $\bP_\lambda^\ast(x,\cdot)=P_\lambda^\ast(x,\cdot)$ and for all $x\in\Eset_k$, $k\in\{2,\ldots,d-1\}$ and $y\neq x$
$$
P_\lambda^\ast(x,y)=\frac{1}{m}\1_{y\in \Eset_{k}}+\frac{1}{2m}\1_{y\in\Vset_{k}\cup\Vset_{k+1}}\leq \frac{1}{m}\left[\frac{m-1}{m-2}\right]\1_{y\in \Eset_{k}}+\frac{1}{2m}\1_{y\in\Vset_{k}\cup\Vset_{k+1}}=\bP_\lambda^\ast(x,y)\,,
$$
and for all $x\in\Eset_1$ with $y\neq x$
\begin{multline*}
  P_\lambda^\ast(x,y)=\frac{1}{m}\1_{y\in \Eset_{1}}+\frac{1}{m}\1_{y\in \Vset_{1}}+\frac{1}{2m}\1_{y\in \Vset_{2}}+\\ \leq 
  \frac{1}{m}\left[\frac{m-3/2}{m-2}\right]\1_{y\in \Eset_{1}}+\frac{1}{m}\1_{y\in \Vset_1}+\frac{1}{2m}\1_{y\in\Vset_{2}}
=\bP_\lambda^\ast(x,y)\,,
\end{multline*}
and similarly for $x\in\Eset_d$. Thus $\bP_\lambda^\ast$ dominates $P_\lambda^\ast$ in the Peskun ordering sense (see \cite{peskun1973optimum}) and by \cite[Lemma 3]{tierney1998note}, $P_\lambda^\ast-\bP_\lambda^\ast$ is a positive operator on $\Lb$. Even though the calculations are different, the same result can be shown to hold for $P$ and $\bP$. Second, for $m> 3$ the second and third rows of $\bP$ (resp. $\bP_\lambda^\ast$) are identical and thus the leading principal minors of order 3 and higher of $\bP$ (resp. $\bP_\lambda^\ast$) are null. It can be checked that the leading principal minor of order 2 of $\bP$ and $\bP_\lambda^\ast$ are respectively
\begin{multline*}
\frac{1}{(md)^2}\left[\left(md-(m-1)\right)\left(\frac{md-2}{m-2}\right)-1\right]\\
\geq \frac{1}{m(md)^2}\left[(m-1)(d-1)(md-2)-1\right]\geq 0
\end{multline*}
for $m\geq 3$, $d\geq 2$ and
$$
\frac{1}{m^2}\left[\frac{3}{4}\frac{2m-3}{m-2}-1\right]\geq 0\,.
$$
for any $m\geq 3$. As a consequence, $\bP$ and $\bP_\lambda^\ast$ are semi-definite positive for $m>3$. It is also possible to show that $\bP$ and $\bP_\lambda^\ast$ are semi-definite positive by induction on $d\geq 2$, for $m=3$. Combining $P_\lambda^\ast-\bP_\lambda^\ast\succeq 0$ and $\bP_\lambda^\ast\succeq 0$ yields $P_\lambda^\ast\succeq 0$ and similarly for $P$. To prove the last result on the spectrum, we note that $0\in\Sp(\bP)$ since $\det \Sigma=0$ and $\bP=P\Sigma$. In particular, the null space of $\Sigma$ being of dimension $d(m-3)$, we have that $0\in\Sp(\bP)$ with multiplicity $d(m-3)$. Let $f$ be an eigenfunction of $\bP$ whose corresponding eigenvalue $\rho$ is not zero. Then
$$
\bP f=\rho f\Rightarrow \Sigma P\Sigma f=\rho \Sigma f\Leftrightarrow P\Sigma f=\rho \Sigma f
$$
where we have used the facts that $P$ and $\Sigma$ commute and that $\Sigma$ is a projector of $\Zset$. It comes that $\rho\in \Sp(P)$ and that its corresponding eigenfunction is $\Sigma f$. The same argument holds for $\bP_\lambda^\ast$ and $P_\lambda^\ast$.
\end{proof}

\begin{lemma}
\label{lem:0}
In the context of the proof of Proposition \ref{prop:hypercube:2}, $\gap(\bP)=\gap(Q)$ and $\gap(\bP_\lambda^\ast)=\gap(Q_\lambda^\ast)$.
\end{lemma}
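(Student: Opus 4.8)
The plan is to relate the spectra of the large unfolded kernel $\bP=\Omega Q\Gamma$, which acts on $\Zset$ and has size $d(n-1)+1$, and the small folded kernel $Q=\Gamma P\Omega$, which acts on $\Yset$ and has size $2d+1$, through the two mapping matrices. The whole argument rests on a single structural identity, $\Gamma\Omega=\Id_{2d+1}$, combined with the classical linear-algebra fact that for $A\in\matset_{p,q}(\rset)$ and $B\in\matset_{q,p}(\rset)$ the products $AB$ and $BA$ share the same nonzero eigenvalues with the same algebraic multiplicities, the larger product carrying only extra zeros.

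First I would verify $\Gamma\Omega=\Id_{2d+1}$ directly from Algorithm \ref{alg:matrices}. Reading off the construction: each row of $\Omega$ and each column of $\Gamma$ indexed by a vertex of $\Zset$ is a single basis vector pointing at the corresponding folded vertex of $\Yset$, whereas for the $n-2$ interior states of an edge $\Eset_k$ the relevant $\Omega$-rows all equal the folded-edge basis vector while the matching $\Gamma$-columns all equal that same vector scaled by $1/(n-2)$. Summing over preimages then gives $(\Gamma\Omega)_{y,y'}=\delta_{y,y'}$ for a folded vertex and, for a folded edge, $(n-2)\times(1/(n-2))\,\delta_{y,y'}=\delta_{y,y'}$; hence $\Gamma\Omega=\Id_{2d+1}$. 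I would emphasise that $\Omega\Gamma$ is \emph{not} the identity (it is the projection onto edge-constant functions), so the order of composition genuinely matters here.

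With that identity in hand, the heart of the argument is to set $A:=\Omega Q$ and $B:=\Gamma$, so that $AB=\Omega Q\Gamma=\bP$ while $BA=\Gamma\Omega Q=\Id_{2d+1}\,Q=Q$. The determinant relation $\det(\la\Id-AB)=\la^{\,d(n-3)}\det(\la\Id-BA)$ then shows $\Sp(\bP)$ equals $\Sp(Q)$ together with $d(n-3)$ additional zeros. To convert this into equality of spectral gaps I would note that both kernels are stochastic, since $\Gamma\mathbf 1=\mathbf 1$, $Q\mathbf 1=\mathbf 1$ and $\Omega\mathbf 1=\mathbf 1$ give $\bP\mathbf 1=\mathbf 1$, and that $Q$ is an irreducible birth--death chain on the path $\Yset$ (cf. Figure \ref{fig:representation}), so $1$ is a simple eigenvalue of $Q$ and therefore, by the spectral coincidence, a simple eigenvalue of $\bP$. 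Removing this common eigenvalue, the sets $\Sp(\bP)\setminus\{1\}$ and $\Sp(Q)\setminus\{1\}$ differ only by the spurious zeros, which cannot exceed the subdominant eigenvalue in modulus; hence their suprema of $|\la|$ agree and $\gap(\bP)=\gap(Q)$. Running the identical computation with $A:=\Omega Q_\lambda^\ast$, $B:=\Gamma$ yields $\gap(\bP_\lambda^\ast)=\gap(Q_\lambda^\ast)$.

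The step I expect to be the main obstacle is the first one: it is exactly where the specific normalisation built into $\Omega$ and $\Gamma$ in Algorithm \ref{alg:matrices} is used, and the entire argument hinges on the asymmetry $\Gamma\Omega=\Id$ versus $\Omega\Gamma\neq\Id$ and on getting the weight $1/(n-2)$ right so that each folded edge is reconstructed exactly. Everything downstream is the standard $AB$/$BA$ spectral fact plus routine bookkeeping of the eigenvalue $1$ and of the spurious zeros introduced by the dimension gap $d(n-3)$.
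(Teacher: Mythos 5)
Your proposal is correct, and it reaches the paper's key identity $\Sp(\bP)=\Sp(Q)\cup\{0\}$ by a genuinely different mechanism. The paper proves the two inclusions separately: it transports eigenvectors ($Qy_0=\la y_0\Rightarrow \bP\Omega y_0=\la\Omega y_0$, using $\Gamma\Omega=\Id_{2d+1}$ and $\ker(\Omega)=\{0\}$), and then, because $\Gamma$ has a nontrivial kernel, it must patch the converse direction with a rank computation: the columns of $\bP$ indexed by states of a common edge coincide, so $\mathrm{rank}(\bP)=2d+1$, $\dim\ker(\bP)=(n-3)d$, $\ker(\bP)=\ker(\Gamma)$, and $0\notin\Sp(Q)$. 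You instead set $A:=\Omega Q$, $B:=\Gamma$, note $AB=\bP$ and $BA=\Gamma\Omega Q=Q$, and invoke the classical identity $\det(\la\Id-AB)=\la^{\,d(n-3)}\det(\la\Id-BA)$, which delivers both inclusions \emph{and} the multiplicity of the spurious zeros in one stroke, with no rank bookkeeping; the paper's route, in exchange, yields extra structural information (an explicit description of $\ker(\bP)$ and the fact that $Q$ is nonsingular) that your argument does not need and does not produce. Your entrywise verification of $\Gamma\Omega=\Id_{2d+1}$, including the $1/(n-2)$ normalisation, is also more explicit than the paper's injectivity remark, and your observation that $\Omega\Gamma\neq\Id$ correctly identifies where the asymmetry matters. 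Two harmless inaccuracies: the folded chain $Q$ is not a birth--death chain in the strict sense (Figure \ref{fig:representation} shows vertex-to-vertex transitions that skip the intervening edge state), and the simplicity of the eigenvalue $1$ is not actually needed, since the spectral gap removes $1$ as a set element and your determinant identity already matches its multiplicity in $\Sp(\bP)$ and $\Sp(Q)$; neither point affects the validity of your conclusion $\gap(\bP)=\gap(Q)$ and $\gap(\bP_\la^\ast)=\gap(Q_\la^\ast)$.
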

\begin{proof}
Without loss of generality and for notational simplicity, the proof is carried out in the case $\lambda=1$, \ie $\bP^\ast_\lambda\equiv \bP^\ast$. Central to this proof is the fact that $\Gamma\Omega=I_{2d+1}$, where $\Gamma$ and $\Omega$ are the two change of basis matrices from $\Zset$ to its folded counterpart $\Yset$ and conversely, see their formal definition given at Algorithm \ref{alg:matrices}. Indeed, it can be readily checked that $\Omega$ is an injection from $\Yset$ to $\Zset$ and thus admits a left inverse. This left inverse corresponds to the reverse transformation from $\Zset$ to $\Yset$, which is precisely $\Gamma$.

We establish $\gap(\bP)=\gap(Q)$ and  $\gap(\bP^\ast)=\gap(Q^\ast)$ is obtained in the same way. Let $\lambda\in\Sp(Q)$. Then, by definition of $\Sp(\bP)$, there exists a non null vector $y_0\in\rset^{2d+1}$ such that
\begin{equation}
\label{eq:sp2}
Q y_0=\lambda y_0\Leftrightarrow Q \Gamma\Omega y_0=\lambda y_0\Leftrightarrow\Omega Q \Gamma\Omega y_0=\lambda \Omega y_0
\Leftrightarrow \bP\Omega y_0=\lambda\Omega y_0\,.
\end{equation}
Moreover, since $\text{ker}(\Omega)$ is restricted to the null vector $0_{2d+1}$, $\Omega y_0\neq 0_{(n-1)d+1}$ and $\lambda\in\Sp(\bP)$.

Let $\lambda\in\Sp(\bP)$, then, by definition of $\Sp(\bP)$, there exists a non null vector $x_0\in\rset^{(n-1)d+1}$ such that, $\bP x_0=\lambda x_0$. By definition of $\bP$, we have that
\begin{equation}
\label{eq:sp1}
\Omega Q \Gamma x_0=\lambda x_0\Leftrightarrow \Gamma \Omega Q \Gamma x_0=\lambda \Gamma x_0 \Leftrightarrow Q \Gamma x_0=\lambda \Gamma x_0\,.
\end{equation}
Now, $\text{ker}(\Gamma)$ is not restricted to $0_{(n-1)d+1}$. Indeed, it can be readily checked that $x_0:=(0,1,-1,0,\ldots,0)$ belongs to $\text{ker}(\Gamma)$. As a consequence, for any $\lambda\in\Sp(\bP)$ if the eigenvector associated $\lambda$ does not belong to $\ker(\Gamma)$, then $\lambda\in\Sp(Q)$. In contrast, if $x_0\in\ker(\Gamma)$, it cannot be concluded whether or not $\lambda\in\Sp(Q)$. A careful look at the transition matrix $\bP$ shows that the columns of $\bP$ are not linearly independent. In particular, the columns corresponding to states $x_0\in\Zset$ belonging the same edge $\Eset_k$ are all equal. As a consequence, $\text{rank}(\bP)=2d+1$ which implies that $\text{dim}(\text{ker}(\bP))=(m-1)d+1-2d-1=(m-3)d$. This shows that $0\in\Sp(\bP)$ with multiplicity $(m-3)d$ and in fact $\ker(\Gamma)=\ker(\bP)$. Conversely, $\text{rank}(Q)=2d+1$ and thus $\text{dim}(\text{ker}(Q))=0$ which implies that $0\not\in\Sp(Q)$. Combining those different observations yield to
\begin{equation}
\label{eq:lasteq}
\Sp(\bP)=\Sp(Q)\cup 0\,.
\end{equation}
The proof is concluded by noting that from the definition of the absolute spectral gap, we have
\begin{equation}
\gap(\bP)=1-\sup_{\lambda\in\Sp(\bP)}|\lambda|=1-\sup_{\lambda\in\Sp(\bP)\backslash\{0\}}|\lambda|=
1-\sup_{\lambda\in\Sp(Q)}|\lambda|=\gap(Q)\,.
\end{equation}
\end{proof}

\begin{lemma}
\label{lem:1}
In the context of Proposition \ref{prop:hypercube:2} and whenever $d$ is even, we have that
$$
\gap(Q)=\gap(Q_{\lambda}^\ast)\qquad \text{for}\;\lambda=2/d\,.
$$
\end{lemma}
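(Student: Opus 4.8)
The plan is to compare $Q$ and $Q_{2/d}^\ast$ directly, using the fact that delaying only rescales the matrix entries. Writing $Q_{2/d}^\ast=(2/d)Q^\ast+(1-2/d)\Id$ and reading the transition probabilities of $Q^\ast$ off Figure \ref{fig:representation}, I would first record the entries of $Q_{2/d}^\ast$. With $\alpha=1/(dn)$ and $\beta=(n-2)/(dn)$ the folded RSGS probabilities, a one-line check gives $(2/d)\alpha^\ast=\alpha$ and $(2/d)\beta^\ast=\beta$ for the interior transitions, while the two endpoint exchanges scale as $(2/d)\bar\alpha^\ast=2\alpha$ and $(2/d)\bar\beta^\ast=2\beta$. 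Hence $Q_{2/d}^\ast$ and $Q$ have \emph{identical} entries on every state except on the two corner bonds $\{\Vset_1,\Eset_1\}$ and $\{\Vset_{d+1},\Eset_d\}$, where $Q_{2/d}^\ast$ doubles the vertex--edge exchange. I would then write $Q_{2/d}^\ast=Q+R$, where $R$ is supported on these two $2\times 2$ blocks, each of the rank-one form that sends a pair $(v,e)$ to $\beta(e-v)$ on the vertex row and to $\alpha(v-e)$ on the edge row; in particular $Rf=0$ as soon as $f$ is constant across both corner bonds.

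The crux is an elementary observation about the eigenfunctions of $Q$. Let $(\mu,f)$ be an eigenpair and write $v_k=f(\Vset_k)$, $e_k=f(\Eset_k)$. The two scalar identities coming from the rows $\Eset_1$ and $\Vset_1$ of $Qf=\mu f$,
\[
(\mu-1+2\alpha)\,e_1=\alpha(v_1+v_2),\qquad (\mu-1+\alpha+\beta)\,v_1=\alpha v_2+\beta e_1,
\]
combine, after eliminating $v_2$, into $(\mu-1+2\alpha+\beta)(v_1-e_1)=0$. Since $2\alpha+\beta=1/d$, this forces $e_1=v_1$ unless $\mu=1-1/d$, and the mirror computation at the far end forces $e_d=v_{d+1}$ under the same exclusion. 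Therefore $Rf=0$ and $Q_{2/d}^\ast f=Qf=\mu f$, so every eigenfunction of $Q$ with $\mu\neq 1-1/d$ is an eigenfunction of $Q_{2/d}^\ast$ with the same eigenvalue. Running the identical elimination on the doubled rows of $Q_{2/d}^\ast$ produces $(\mu-1+3\alpha+2\beta)(v_1-e_1)=0$, i.e. $e_1=v_1$ unless $\mu=1-(2n-1)/(dn)$, which yields the reverse inclusion.

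Putting these together, $\Sp(Q)$ and $\Sp(Q_{2/d}^\ast)$ coincide, with matching multiplicities, off the two isolated values $1-1/d$ and $1-(2n-1)/(dn)$. To conclude $\gap(Q)=\gap(Q_{2/d}^\ast)$ it then remains to check that neither exceptional value is extremal. Both lie in $(0,1)$, so neither can be the most negative eigenvalue; I would bound the smallest eigenvalue of each (lazy) folded kernel away from $-1$ by a Gershgorin/holding-probability estimate, so that each gap is governed by the largest non-unit eigenvalue. The main obstacle is the remaining point: showing that the second-largest eigenvalue exceeds $1-1/d$ (which dominates both exceptional values), so that it is one of the shared eigenvalues. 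This should follow from a spread-out test function along the filament together with the variational characterization of $\gap$, and is consistent with the $\mathcal{O}(d^3n)$ relaxation visible in Proposition \ref{prop:hypercube:1}, which suggests $1-\mu_1=\mathcal{O}(1/(d^3n))\ll 1/d$. For the base case $d=2$ (where $\lambda=2/d=1$ and no delay occurs) the same two-line elimination already returns the shared second eigenvalue $1-\alpha$ in closed form, giving a direct check of the whole argument.
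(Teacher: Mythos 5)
Your proposal is correct in its architecture and takes a genuinely different route from the paper. The paper works with characteristic polynomials: it expands $\dete(Q-\lambda\Id)$ and $\dete(Q^\ast_{2/d}-\lambda\Id)$ symbolically and exhibits the factorizations $\chi(\lambda)=\{n(d-1)-dn\lambda\}^2\det M_\lambda$ and $\chi_{2/d}^\ast(\lambda)=\{n(d-2)+1-dn\lambda\}^2\det M_\lambda$, so that the two spectra share the roots of the common factor $\det M_\lambda$ and can differ only at $\lambda_0=(d-1)/d$ and $\lambda_0^\ast=(d-2)/d+1/(dn)$; it then rules out these exceptional values as gap-determining by a trace argument. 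Your decomposition $Q^\ast_{2/d}=Q+R$, with $R$ a rank-two perturbation supported on the corner bonds $\{\Vset_1,\Eset_1\}$ and $\{\Vset_{d+1},\Eset_d\}$ (your entry checks $(2/d)\alpha^\ast=\alpha$, $(2/d)\beta^\ast=\beta$, $(2/d)\bar\alpha^\ast=2\alpha$, $(2/d)\bar\beta^\ast=2\beta$ are all correct), combined with the two-row elimination forcing $v_1=e_1$ unless $\mu=1-(2\alpha+\beta)=1-1/d$ (and $\mu=1-(3\alpha+2\beta)=1-(2n-1)/(dn)$ for the starred chain), reaches exactly the same structural conclusion — your two exceptional values are precisely the paper's $\lambda_0$ and $\lambda_0^\ast$ — but by transferring eigenfunctions rather than factoring determinants, which is more conceptual and checkable by hand. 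Since both folded kernels are reversible, hence diagonalizable, your eigenspace identification does yield matching multiplicities; that is the one hypothesis you use silently. What the paper's computation buys in exchange is explicit multiplicities (each exceptional value enters squared) and the common factor itself.

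Where your write-up is not yet a proof is the domination step, which you correctly isolate as the main obstacle: you must show the largest shared non-unit eigenvalue is at least $1-1/d$, and you only assert this ``should follow'' from a variational bound. Two remarks. First, your sketch does work: with the folded stationary measure (mass proportional to $1$ on vertices and to $n-2$ on edge states, total $Z=(d+1)+d(n-2)$) and the linear test function $f(j)=j$, the Dirichlet form equals $2/Z$ while $\var_\pi(f)$ is of order $d^2$, so $1-\mu_1=O(1/(d^3 n))$, which is below $1/d$ for every $d\geq 2$, $n\geq 3$; this matches the order you quote. Second, the paper's trace argument closes this step with less machinery and drops directly into your framework: $\tr(Q)=2d-1$, and if every non-unit eigenvalue of $Q$ were at most $1-1/d$ then $\tr(Q)\leq 1+2d(1-1/d)=2d-1$ with equality only if all $2d$ of them equal $1-1/d$, which is impossible (it would force $(Q-\Id)(Q-(1-1/d)\Id)=0$, contradicting $Q^2(\Vset_1,\Eset_2)>0=Q(\Vset_1,\Eset_2)$); hence some eigenvalue exceeds $1-1/d$, and by your elimination any such eigenvalue is shared. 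Finally, your Gershgorin step for the negative end of the spectrum is unnecessary: both exceptional values are positive, so every negative eigenvalue is automatically common to $Q$ and $Q^\ast_{2/d}$ and cancels in the comparison of the two gaps once the shared positive part dominates $1-1/d$.
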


\begin{proof}[Proof of Lemma \ref{lem:1}]
The proof is established from the following series of steps
\begin{itemize}
\item calculating the characteristic polynomial for each matrix:
$$
\chi(\lambda)=\dete(Q-\lambda \Id)\,,\qquad\chi_{2/d}^\ast(\lambda)=\dete(Q_{2/d}^\ast-\lambda \Id)
$$
\item developing the determinant in a specific way, we show that  $\chi$ and $\chi_{2/d}^\ast$ only differ through one factor:
$$
\chi(\lambda)=\{m(d-1)-dm\lambda\}^2\det M_\lambda\,,\qquad \chi_{2/d}^\ast(\lambda)=\{m(d-2)+1-dm\lambda\}^2\det M_\lambda\,.
$$
\item denoting $\Lambda=\{\lambda\in(-1,1)\,,\;\det M_\lambda =0\}$, we have
\begin{multline*}
\Sp(Q)=\left\{1\;,\quad\lambda_0:=\frac{d-1}{d}\;,\quad \Lambda\right\}\,,
\\ \Sp(Q_{2/d}^\ast)=\left\{1\;,\quad\lambda_0^\ast:=\frac{d-2}{d}+\frac{1}{dm}\; ,\quad \Lambda\right\}\,.
\end{multline*}
\item for both cases, the larger eigenvalue smaller than 1 is in $\Lambda$. We first calculate the traces
$$
\tr(Q)=2d-1\,,\qquad
\tr(Q_{2/d}^\ast)=2d+\frac{2}{md}-1-\frac{2}{d}\,,
$$
and since $\tr(Q)=\sum_{\lambda\in\Sp(Q)}\lambda$, we have
\begin{equation}
\label{eq:proof_ex_4}
\tr(Q)=1+\lambda_0+\sum_{\lambda\in\Lambda}\lambda\,,\qquad \tr(Q^\ast)=1+\lambda_0^\ast+\sum_{\lambda\in\Lambda}\lambda
\end{equation}
If $\lambda_0>\sup\Lambda$, then
$$
1+\lambda_0+\sum_{\lambda\in\Lambda}\lambda<1+2d\lambda_0=2d-1=\tr(Q)\,,
$$
which contradicts the LHS of Eq. \eqref{eq:proof_ex_4} and we have $\lambda_0\leq \sup\Lambda$. Similarly, if $\lambda_0^\ast>\sup\Lambda$, then
$$
1+\lambda_0^\ast+\sum_{\lambda\in\Lambda}\lambda<1+2d\lambda_0^\ast=2d-1+2\frac{1-m}{m}<\tr(Q)\,,
$$
which contradicts the RHS of Eq. \eqref{eq:proof_ex_4} and $\lambda_0^\ast\leq \sup \Lambda$. Therefore, regardless whether or not $(\lambda_0,\lambda_0^\ast)\in\Lambda^2$, the largest eigenvalue smaller than 1 of $Q$ and $Q_{2/d}^\ast$ is identical. Since by Lemma \ref{lem:2_bis}, $\bP^\ast_\lambda$ and $\bP$ are positive operators and using \eqref{eq:lasteq}, we have that $Q$ and $Q_{2/d}^\ast$ are also positive operators. As a consequence, the absolute spectral gaps of $Q$ and $Q_{2/d}^\ast$ coincide with their right spectral gap which is $1-\sup\Lambda$ and thus equal.
\end{itemize}
\end{proof}
\begin{funding}
FM's research is supported in part by the Natural Sciences and Engineering Research Council of Canada. He would like to thank the Insight Centre for Data Analytics for funding the post-doc fellowship that allowed to develop this research. The Insight Centre for Data Analytics is supported by Science Foundation Ireland under Grant Number SFI/12/RC/2289.
\end{funding}

\bibliographystyle{imsart-number} 
\bibliography{biblio}       

%
%
%
%
%
%
\end{document}